\definecolor{darkred}{RGB}{175,0,0}
\newtheorem{theorem}{Theorem}
\newtheorem{proposition}{Proposition}
\newtheorem{corollary}{Corollary}
\newtheorem{lemma}{Lemma}
\newtheorem{definition}{Definition}
\theoremstyle{definition}
\newtheorem*{remarkth}{Remark}
\newenvironment{remark}{\begin{remarkth}}{\hfill$\lozenge$\end{remarkth}}
\def\derpar#1#2{\displaystyle\frac{\partial{#1}}{\partial{#2}}}
\def\derpars#1#2#3{\displaystyle\frac{\partial^2{#1}}{\partial{#2}\partial{#3}}}
\def\restric#1#2{\left.#1\right|_{#2}}
\newcommand{\X}{\mathcal{X}}
\newcommand{\D}{\mathcal{D}}
\newcommand{\vf}{\mathfrak{X}}
\newcommand{\df}{\mathit{\Omega}}
\newcommand{\Lag}{\mathcal{L}}
\newcommand{\Leg}{\mathcal{FL}}
\renewcommand{\d}{\textnormal{d}}
\newcommand{\R}{\mathbb{R}}
\newcommand{\Id}{\textnormal{Id}}
\renewcommand{\Im}{\operatorname{Im}}
\renewcommand{\graph}{\operatorname{graph}}
\def\Tan{{\rm T}}
\def\Lie{\mathop{\rm L}\nolimits}
\def\inn{\mathop{i}\nolimits}
\def\Cinfty{{\rm C}^\infty}
\def\tabaddress#1{{\small\it\begin{tabular}[t]{c}#1
\\[1.2ex]\end{tabular}}}
\def\qed{\ifvmode\removelastskip\fi
{\unskip\nobreak\hfil\penalty50\hbox{}\nobreak\hfil \hbox{\vrule
height1.2ex width1.2ex}\parfillskip=0pt \finalhyphendemerits=0
\par\smallskip}}
\title{HAMILTON-JACOBI THEORY IN MULTISYMPLECTIC CLASSICAL FIELD THEORIES}
\author{
{\sc  Manuel de Le\'{o}n\thanks{\textbf{e}-{\it mail}: mdeleon@icmat.es} }\\
\vspace{5mm}
   \tabaddress{Instituto de Ciencias Matem\'{a}ticas (CSIC-UAM-UC3M-UCM). \\
   C/ Nicol\'{a}s Cabrera 15. Campus Cantoblanco UAM. 28049 Madrid. Spain} \\
{\sc  Pedro Daniel Prieto-Mart\'{\i}nez\thanks{\textbf{e}-{\it mail}: peredaniel@ma4.upc.edu} }\\
{\sc Narciso Rom\'an-Roy\thanks{\textbf{e}-{\it mail}: nrr@ma4.upc.edu}}  \\
\vspace{5mm}
\tabaddress{Departamento de Matem\'atica Aplicada IV. Edificio C-3, Campus Norte UPC\\
   C/ Jordi Girona 1. 08034 Barcelona. Spain}\\
{\sc  Silvia Vilari\~{n}o\thanks{\textbf{e}-{\it mail}: silviavf@unizar.es} }\\
\vspace{5mm}
   \tabaddress{Centro Universitario de la Defensa de Zaragoza \& I.U.M.A. \\
   Academia General Militar. Carretera de Huesca s/n. 50090 Zaragoza, Spain.} \\   
}
\date{\today}
\begin{document}

\maketitle

\pagestyle{myheadings}

\thispagestyle{empty}

\begin{abstract}
The geometric framework for the Hamilton-Jacobi theory developed in
\cite{art:Carinena_Gracia_Marmo_Martinez_Munoz_Roman06,HJteam-2015,LMM-09} is extended for
multisymplectic first-order classical field theories. The Hamilton-Jacobi problem is stated for
the Lagrangian and the Hamiltonian formalisms of these theories as a particular case of a more
general problem, and the classical Hamilton-Jacobi equation for field theories is recovered from
this geometrical setting. Particular and complete solutions to these problems are defined and
characterized  in several equivalent ways in both formalisms, and the equivalence between them is
proved. The use of distributions in jet bundles that represent the solutions to the field equations
is the fundamental tool in this formulation. Some examples are analyzed and, in particular, the
Hamilton-Jacobi equation for non-autonomous mechanical systems is obtained as a special case of our results.
\end{abstract}

\bigskip
\noindent \textbf{Key words}:
\textsl{Classical field theories; Hamilton-Jacobi equations; Multisymplectic manifolds}

\vbox{\raggedleft AMS s.\,c.\,(2010): 53C15, 53C80, 70S05, 35F21, 70H20}\null 
\markright{\rm M. de Le\'on \textit{et al.}: \textsl{Hamilton-Jacobi theory in multisymplectic classical field theories}}

\clearpage

\tableofcontents

\section{Introduction}
\label{intro}

The Hamilton-Jacobi theory, as it is classically presented in the textbooks and works on analytical
mechanics, is a way to integrate Hamilton equations (that is, a system of first-order ordinary
differential equations), which consists in giving an appropriate canonical transformation leading
the system to equilibrium \cite{Arn,JS,Sa71}. This transformation is constructed from its generating
function which, in this method, is obtained as the solution to a partial differential equation: the
so-called Hamilton-Jacobi equation. This method is based on a famous contribution from Hamilton on
geometric optics, where he showed that the propagation of wavefronts is characterized by a function
(the characteristic function) which is the solution to a first-order partial differential equation
called \textsl{eikonal equation}, which is related to the Hamilton-Jacobi equation. Thus, from a
purely mathematical perspective, the Hamilton-Jacobi theory is a very important example showing the
deep connection between systems of first-order ordinary differential equations and first-order partial
differential equations \cite{book:Rund66}. The Hamilton-Jacobi equation appears also when short-wave
approximations for the solutions of wave-type (hyperbolic) equations are considered. In this way,
from a physical point of view, being a classical equation, it is also very close to the Schr\"odinger
equation of quantum mechanics, since from a complete solution to the Hamilton-Jacobi equation, we are
able to reconstruct an approximate solution to the Schr\"odinger equation \cite{EMS04,MMMcq} and thus
it allows to establish an approach within classical theory of the notions of wave function and state
in quantum theory.

For all these reasons, Hamilton-Jacobi theory is a matter of continuous interest, and it was studied
and generalized also in other classical ambients; in particular, for constrained systems arising from
singular Lagrangians (gauge theories)  \cite{Gomis1} or also for higher-order dynamics \cite{art:Constantelos84}.

Furthermore, in the last decades, great efforts have been done in understanding physical systems from
a geometric perspective. Concerning to geometric mechanics, the intrinsic formulation of Hamilton-Jacobi
equation is also clear and can be found in \cite{AM,LM,MMM}. In addition,
in \cite{art:Carinena_Gracia_Marmo_Martinez_Munoz_Roman06} a generic geometric framework for the
Hamilton-Jacobi theory was formulated both in the Lagrangian and the Hamiltonian formalisms, for
autonomous and non-autonomous mechanics, recovering the usual Hamilton-Jacobi equation as a special
case in this generalized framework. In particular, it is shown that the existence of constants of motion
helps to solve the Hamilton-Jacobi problem, which can be regarded as a way to describe the dynamics
on the phase space of the system in terms of a family of vector fields on a submanifold of it. The
basic ideas of this generalization of the Hamilton-Jacobi formalism are similar to those outlined in
\cite{KV-1993}.

These geometric frameworks have been used by other authors to develop the Hamilton-Jacobi theory in
many different situations in mechanics. For instance, the statement and applications of the
Hamilton-Jacobi method for non-holonomic and holonomic mechanical systems is done in
\cite{BFS-14,HJnh,leones1,leones2,blo,OFB-11}, the geometric treatment of the theory for dynamical
systems described by singular Lagrangians is analyzed in \cite{LOS-12,LMV-12,LMV-13}, the application to
control theory is given in \cite{BLMMM-12,Wang1,Wang2}, and the generalization for higher-order
dynamical systems is established in \cite{art:Colombo_DeLeon_Prieto_Roman14_JPA,CLPR2}. Moreover,
the Hamilton-Jacobi theory has been extended for mechanical systems which are described using more
general geometrical frameworks, such as Lie algebroids \cite{BMMP-10,LS-12}, almost-Poisson manifolds
\cite{LMV-14}, and fiber bundles in general \cite{HJteam-2015}, and the relationship between the
Hamilton-Jacobi equation and some other geometric structures in mechanics are analyzed in
\cite{BLM-12,HJteam-K}. Finally, the geometric discretization of the Hamilton-Jacobi equation is
also considered in \cite{BDM-12,OBL-11}.

The extension of the Hamilton-Jacobi theory for first-order classical field theories has been
developed for different covariant formulations ($k$-symplectic and $k$-cosymplectic) in the
Hamiltonian formalism \cite{LMMSV-12,DeLeon_Vilarino} and also for the non-covariant Hamiltonian
formulation (Cauchy data space) \cite{CLMV-14}. A first quick approach to state the Hamilton-Jacobi
equation for the most general framework (the multisymplectic one) was outlined in \cite{LMM-09},
also in the Hamiltonian formalism. Furthermore, using a different approach involving connections,
the theory has been generalized to higher-order field theories \cite{Vi-10} and also for partial
differential equations in general \cite{Vi-11}.

The aim of this paper is to complete these previous developments; that is, to use the guidelines
stated in the aforementioned references on the Hamilton-Jacobi theory in geometric mechanics in
order to give a complete description of this theory for the multisymplectic formalism of first-order
classical field theories, both in the Lagrangian and the Hamiltonian formalisms, and showing the
equivalence between them. Our standpoint is \cite{LMM-09} and, in particular, some of our results
are a development of the ideas pointed out in this reference. As a fundamental difference with these
previous works, we consider the sections which are solutions of the field equations as integral
sections of integrable distributions in the corresponding phase spaces (jet bundles and bundles of
forms) where the equations are defined, and we represent these distributions by means of (classes of)
multivector fields in general \cite{art:Echeverria_Munoz_Roman98,EMR-99b}. This allows us to adapt the
geometric models for the Hamilton-Jacobi problem in mechanics given in
\cite{art:Carinena_Gracia_Marmo_Martinez_Munoz_Roman06,HJteam-2015} to the present case.

The paper is organized as follows: Section \ref{sec:Background} is a short review on multisymplectic
geometry, jet bundles, and multivector fields and their relation with integrable distributions, which
is given in order to do the paper self-contained. The main results are presented in Sections
\ref{sec:LagrangianFormalism} and \ref{sec:HamiltonianFormalism}, where first the generalized
Hamilton-Jacobi problem, and later the standard Hamilton-Jacobi problem are stated in the Lagrangian
and Hamiltonian formalisms, and the equivalence between both formalisms is analyzed. In these Sections,
the particular and complete solutions of the Hamilton-Jacobi equations are introduced and interpreted
geometrically. In Section \ref{sec:Examples}, some examples are studied; in particular, non-autonomous
dynamical systems as the particular case of a field theory with $1$-dimensional base manifold,
quadratic Lagrangian densities, and the problem of minimal surfaces in dimension three. Finally, the
conclusions and further research are presented in Section \ref{sec:Conclusions}, where the comparison
and differences between our model and the aforementioned previous works are also discussed.

All the manifolds are real, second countable and $\Cinfty$. The maps and the structures are assumed
to be $\Cinfty$. Sum over crossed repeated indices is understood.

\section{Geometrical background}
\label{sec:Background}

\subsection{Multisymplectic geometry}

In this section we give a short review on multisymplectic geometry and some particular
submanifolds of a multisymplectic manifold
(see \cite{art:Cantrijn_Ibort_DeLeon96,art:Cantrijn_Ibort_DeLeon99,art:Echeverria_Ibort_Munoz_Roman12}
for details).

Let $M$ be an $m$-dimensional smooth manifold. A \textsl{multisymplectic $k$-form} in $M$ is a closed
$k$-form $\omega \in \df^{k}(M)$ which, in addition, is $1$-nondegenerate, that is, for every $p \in M$,
$\inn(X_p)\omega_p = 0$ if, and only if, $X_p = 0$, where $X_p \in \Tan_pM$. If $\omega$ is closed and
$1$-degenerate, it is called a \textsl{premultisymplectic $k$-form}. A manifold endowed with a
(pre)multisymplectic form is called a \textsl{(pre)multisymplectic manifold of order $k$}.

Observe that a necessary condition for a $k$-form to be $1$-nondegenerate is $1 < k \leqslant \dim M$.

Given a symplectic manifold, we have a natural definition of ``orthogonality'' in terms of the
symplectic form. This definition can be generalized to multisymplectic manifolds, bearing in mind
that there are several levels of orthogonality to be considered.

\begin{definition}
Let $(M,\omega)$ be a multisymplectic manifold of order $k$, and $F \subseteq \Tan M$ a vector
subbundle. The \textnormal{$l$th orthogonal complement} of $F$, with $1 \leqslant l < k$ is the
subbundle $F^{\bot,l} \subseteq \Tan M$ defined as
\begin{equation*}
F^{\bot,l} = \left\{ (p,u_p) \in \Tan M \mid \omega_p(u_p,v_1,\ldots,v_l) = 0 \text{ for every }
(p,v_i) \in F \right\} \, .
\end{equation*}
\end{definition}

\begin{definition}
A subbundle $F \subset \Tan M$ is called \textnormal{$l$-isotropic} if $F \subseteq F^{\bot,l}$,
\textnormal{$l$-coisotropic} if $F^{\bot,l} \subseteq F$, and \textnormal{$l$-Lagrangian} if $F = F^{\bot,l}$,
for $1 \leqslant l < k$.
\end{definition}

Bearing in mind this last Definition, one can generalize the concepts of $l$-isotropic, $l$-coisotropic
and $l$-Lagrangian subbundles to immersed submanifolds as follows.

\begin{definition}
Let $(M,\omega)$ be a multisymplectic manifold of order $k$, and $N \hookrightarrow M$ a submanifold
with canonical embedding $i \colon N \hookrightarrow M$. Let us consider the subbundle
$\Tan i(\Tan N) \subseteq \Tan M$. Then, $N$ is a \textnormal{$l$-isotropic (immersed) submanifold}
(resp., \textnormal{$l$-coisotropic submanifold}, \textnormal{$l$-Lagrangian submanifold}) if
$\Tan i(\Tan N)$ is a $l$-isotropic (resp., $l$-coisotropic, $l$-Lagrangian) subbundle.
\end{definition}

Finally, one has the following characterization of isotropic submanifolds of maximum order.

\begin{lemma}
A submanifold $i \colon N \hookrightarrow M$ is $(k-1)$-isotropic if, and only if, $i^*\omega = 0$.
\end{lemma}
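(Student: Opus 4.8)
The plan is to prove the equivalence by directly unwinding the two definitions, since once both are written out fibrewise they reduce to the very same condition on $\omega$. Write $F = \Tan i(\Tan N) \subseteq \Tan M$ for the image subbundle, so that by the preceding definitions $N$ is $(k-1)$-isotropic if, and only if, $F \subseteq F^{\bot,k-1}$.

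First I would expand the inclusion $F \subseteq F^{\bot,k-1}$ pointwise. Fix $p \in N$ and set $q = i(p)$. By the definition of the orthogonal complement, an element $u \in F_q$ lies in $(F^{\bot,k-1})_q$ precisely when $\omega_q(u,v_1,\ldots,v_{k-1}) = 0$ for all $v_1,\ldots,v_{k-1} \in F_q$; note that the single free slot together with the $k-1$ slots filled from $F$ exhausts all $k$ arguments of $\omega$. Hence $F \subseteq F^{\bot,k-1}$ holds at $q$ if, and only if, $\omega_q(u_0,u_1,\ldots,u_{k-1}) = 0$ for every choice of $k$ vectors $u_0,u_1,\ldots,u_{k-1} \in F_q = \Tan_p i(\Tan_p N)$.

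Next I would translate the condition $i^*\omega = 0$ into the same language. By the definition of the pullback, for $w_0,\ldots,w_{k-1} \in \Tan_p N$ one has $(i^*\omega)_p(w_0,\ldots,w_{k-1}) = \omega_q\bigl(\Tan_p i(w_0),\ldots,\Tan_p i(w_{k-1})\bigr)$. Since every element of $F_q$ is of the form $\Tan_p i(w)$ for some $w \in \Tan_p N$, the vanishing of $(i^*\omega)_p$ on all $k$-tuples from $\Tan_p N$ is exactly the vanishing of $\omega_q$ on all $k$-tuples from $F_q$. Comparing with the previous paragraph, the two conditions coincide at each $p$, and letting $p$ range over $N$ yields the claimed equivalence.

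I do not expect any genuine obstacle here: the statement is essentially a bookkeeping identity, and the only point requiring a little care is matching the arities. One must observe that $F^{\bot,k-1}$ reserves exactly one argument outside $F$, so that the self-orthogonality inclusion $F \subseteq F^{\bot,k-1}$ forces all $k$ arguments of $\omega$ to lie in $F$, which is precisely the set of arguments to which the pullback $i^*\omega$ restricts $\omega$. The maximality hypothesis $l = k-1$ is what makes this clean, since for smaller $l$ the self-orthogonality condition would constrain only $l+1 < k$ of the arguments and would not reproduce the full pullback.
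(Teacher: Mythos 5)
Your argument is correct: the paper states this lemma without proof (it is quoted from the cited background references on multisymplectic geometry), and your direct fibrewise unwinding of the two definitions is the standard and essentially only argument, with the arity count $1+(k-1)=k$ being exactly the point that makes the maximal case special. Nothing is missing; the observation that every element of $F_q=\Tan_p i(\Tan_p N)$ is of the form $\Tan_p i(w)$ is all that is needed to identify the two vanishing conditions.
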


\subsection{First-order jet bundles}
\label{geomset}

In this section we give a short review on jet bundles: definition, some canonical structures and
the concept of ``dual bundle'' (see \cite{book:Saunders89} for details).

\subsubsection*{Definition and local coordinates. Prolongation of sections. Holonomy.}
\label{sec:HOJetBundles}

Let $M$ be an orientable $m$-dimensional smooth manifold with fixed volume form $\eta \in \df^m(M)$,
and let $E \stackrel{\pi}{\longrightarrow} M$ be a bundle with $\dim E = m + n$.
The \textsl{first-order jet bundle} of the projection $\pi$, $J^1\pi$, is the manifold of the
$1$-jets of local sections $\phi \in \Gamma(\pi)$; that is, equivalence classes of local sections
of $\pi$ by the relation of equality on every first-order partial derivative. A point in $J^1\pi$
is denoted by $j^1_x\phi$, where $x \in M$ and $\phi \in \Gamma(\pi)$ is a representative of the
equivalence class. The manifold $J^1\pi$ is endowed with the following natural projections
$$
\begin{array}{rcl}
\pi^1 \colon J^1\pi & \longrightarrow & E \\
j^1_x\phi & \longmapsto & \phi(x)
\end{array}
\quad ; \quad
\begin{array}{rcl}
\bar{\pi}^1 = \pi \circ \pi^1 \colon J^1\pi & \longrightarrow & M \\
j^1_x\phi & \longmapsto & x
\end{array} \, .
$$
The fibers $(\pi^1)^{-1}(u) \subseteq J^1\pi$, with $u \in E$, are denoted $J^1_u\pi$.

Local coordinates in $J^1\pi$ are introduced as follows: let $(x^i)$,
$1 \leqslant i \leqslant m$, be local coordinates in $M$ such that
$\eta = \d^mx = \d x^1 \wedge \ldots \wedge \d x^m$, and $(x^i,u^\alpha)$,
$1 \leqslant \alpha \leqslant n$, local coordinates in $E$ adapted to the bundle structure.
Let $\phi \in \Gamma(\pi)$ be a section with coordinate expression
$\phi(x^i) = (x^i,\phi^\alpha(x^i))$. Then, local coordinates in $J^1\pi$ are
$(x^i,u^\alpha,u_i^\alpha)$, with $1 \leqslant i \leqslant m$ and
$1 \leqslant \alpha \leqslant n$, where
$$
u^\alpha = \phi^\alpha \quad ; \quad u_i^\alpha = \derpar{\phi^\alpha}{x^i} \, .
$$

Using these coordinates, the local expressions of the natural projections are
$$
\pi^1(x^i,u^\alpha,u_i^\alpha) = (x^i,u^\alpha) \quad ;\quad
\bar{\pi}^1(x^i,u^\alpha,u_i^\alpha) = (x^i) \, .
$$

If $\phi \in \Gamma(\pi)$ is a section, we denote the \textsl{prolongation} of $\phi$ to $J^1\pi$
by $j^1\phi \in \Gamma(\bar{\pi}^1)$. In the natural coordinates of $J^1\pi$, if
$\phi(x^i) = (x^i,\phi^\alpha(x^i))$, the prolongation of $\phi$ is given by
$$
j^1\phi(x^i) = \left( x^i,\phi^\alpha,\derpar{\phi^\alpha}{x^i} \right) \, .
$$

\begin{definition}
A section $\psi \in \Gamma(\bar{\pi}^1)$ is \textnormal{holonomic} if $j^{1}(\pi^1 \circ \psi) = \psi$,
that is, if there exists a section $\phi = \pi^1 \circ \psi \in \Gamma(\pi)$ such that $\psi$ is
the prolongation of $\phi$ to $J^1\pi$.
\end{definition}

In natural coordinates, if $\psi \in \Gamma(\bar{\pi}^1)$ is given by
$\psi(x^i) = (x^i,\psi^\alpha,\psi_i^\alpha)$, then the condition for $\psi$ to be holonomic gives
the system of partial differential equations
\begin{equation}\label{eqn:HolonomyConditionSect}
\psi_i^\alpha = \derpar{\psi^\alpha}{x^i} \, ,
\quad 1 \leqslant i \leqslant m \, , \ 1 \leqslant \alpha \leqslant n \, ,
\end{equation}

An alternative characterization of holonomic sections is given in terms of the
\textsl{canonical structure form} of $J^1\pi$.

\begin{definition}
The \textnormal{canonical structure form} of $J^1\pi$ is the $1$-form $\theta$ in $J^1\pi$ with
values in $V(\pi)$ defined by
\begin{equation*}
\theta_{j^1_x\phi}(v) = (d_{\phi(x)}^{\rm v}\phi)(\Tan_{j^1_x\phi}\pi^{1}(v)) \, ,
\end{equation*}
where $v \in \Tan_{j^1_x\phi}J^1\pi$ and $d_{\phi(x)}^{\rm v}\phi$ is the vertical differential of
$\phi$ at $\phi(x) \in E$, and is defined as the map
$d_{\phi(x)}^{\rm v} \phi \colon \Tan_{\phi(x)} E \to \Tan_{\phi(x)}E$ such that
$d_{\phi(x)}^{\rm v}\phi = \Tan_{\phi(x)}\Id - \Tan_{\phi(x)}(\phi \circ \pi)$.
\end{definition}

\begin{proposition}
A section $\psi \in \Gamma(\bar{\pi}^{1})$ is holonomic if, and only if, $\psi^*\theta = 0$.
\end{proposition}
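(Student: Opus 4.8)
The plan is to reduce the statement to a local coordinate computation, since both holonomy (the system \eqref{eqn:HolonomyConditionSect}) and the vanishing of $\psi^*\theta$ are pointwise conditions that can be checked in the natural chart $(x^i,u^\alpha,u_i^\alpha)$. The first and principal task is to extract the coordinate expression of the canonical structure form $\theta$ from its intrinsic definition. To this end I would evaluate the vertical differential $d^{\rm v}_{\phi(x)}\phi = \Tan_{\phi(x)}\Id - \Tan_{\phi(x)}(\phi\circ\pi)$ on the coordinate basis of $\Tan_{\phi(x)}E$. Since $\phi\circ\pi$ has coordinate expression $(x^i,u^\alpha)\mapsto(x^i,\phi^\alpha(x))$, a direct differentiation gives $\Tan(\phi\circ\pi)(\partial/\partial x^i) = \partial/\partial x^i + \derpar{\phi^\alpha}{x^i}\,\partial/\partial u^\alpha$ and $\Tan(\phi\circ\pi)(\partial/\partial u^\beta) = 0$, so that $d^{\rm v}\phi$ sends $\partial/\partial x^i \mapsto -\derpar{\phi^\alpha}{x^i}\,\partial/\partial u^\alpha$ and $\partial/\partial u^\beta \mapsto \partial/\partial u^\beta$.

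Next, for $v\in\Tan_{j^1_x\phi}J^1\pi$ I would compute $\Tan_{j^1_x\phi}\pi^1(v)$, which simply drops the $\partial/\partial u_i^\alpha$ component because $\pi^1(x^i,u^\alpha,u_i^\alpha)=(x^i,u^\alpha)$, and then apply the previous step. The crucial substitution is the defining relation $u_i^\alpha = \derpar{\phi^\alpha}{x^i}$ of the jet coordinates at the point $j^1_x\phi$, which converts the factors $\derpar{\phi^\alpha}{x^i}$ into the coordinate functions $u_i^\alpha$. Rewriting the outcome in terms of the differentials $\d x^i$ and $\d u^\alpha$ should yield
\begin{equation*}
\theta = (\d u^\alpha - u_i^\alpha\,\d x^i)\otimes\derpar{}{u^\alpha} \, ,
\end{equation*}
exhibiting $\theta$ as the $V(\pi)$-valued form assembled from the usual contact $1$-forms.

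Finally, given $\psi(x^i)=(x^i,\psi^\alpha,\psi_i^\alpha)$, I would pull $\theta$ back along $\psi$ by evaluating it on $\Tan_x\psi(\partial/\partial x^j)$. Since $\d u^\alpha$ then picks out $\derpar{\psi^\alpha}{x^j}$, while $\d x^i$ picks out $\delta^i_j$ and the functions $u_i^\alpha$ restrict to $\psi_i^\alpha$ along $\psi$, this produces
\begin{equation*}
\psi^*\theta = \left(\derpar{\psi^\alpha}{x^i} - \psi_i^\alpha\right)\d x^i\otimes\derpar{}{u^\alpha} \, .
\end{equation*}
The right-hand side vanishes if, and only if, $\psi_i^\alpha = \derpar{\psi^\alpha}{x^i}$ for all $i$ and $\alpha$, which is precisely the holonomy condition \eqref{eqn:HolonomyConditionSect}; this settles both implications simultaneously.

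The main obstacle is confined to the first two paragraphs: one must compose the tangent maps $\Tan\pi^1$ and $d^{\rm v}\phi$ correctly and, above all, remember to impose the jet relation $u_i^\alpha = \derpar{\phi^\alpha}{x^i}$ at the base point, which is exactly what turns the abstract vertical differential into the contact form. Once the coordinate expression of $\theta$ is secured, the pullback and the comparison with \eqref{eqn:HolonomyConditionSect} are routine.
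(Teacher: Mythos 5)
Your computation is correct: the coordinate expression $\theta = (\d u^\alpha - u_i^\alpha\,\d x^i)\otimes\partial/\partial u^\alpha$ follows exactly as you describe from the definition of the vertical differential, and pulling back along $\psi$ reduces $\psi^*\theta=0$ to the holonomy condition \eqref{eqn:HolonomyConditionSect}. The paper states this proposition as standard background without proof (it is quoted from Saunders' book), and your local argument is precisely the standard one.
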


\subsubsection*{Dual bundles}

Let us consider the dual space of $J^1\pi$ as an affine bundle over $E$, which is the set of affine
maps from $J_u^1\pi$ to $(\Lambda^m\Tan^*M)_{\pi(u)}$, with $u \in E$, that is, the set
$$
\bigcup_{u \in E} \textnormal{Aff}(J^1_u\pi,(\Lambda^m(\Tan^*M))_{\pi(u)}) \, .
$$
From \cite{art:Carinena_Crampin_Ibort91} we know that this set is a manifold diffeomorphic to the
smooth vector bundle of $\pi$-semibasic $m$-forms over $E$, $\Lambda^m_2(\Tan^*E)$. This bundle is
called the \textsl{extended dual jet bundle of $\pi$}, and we have the following canonical
projections
\begin{equation*}
\begin{array}{rcl}
\pi_{E} \colon \Lambda^m_2(\Tan^*E) & \longrightarrow & E \\
(u,\omega_u) & \longmapsto & u
\end{array}
\quad ; \quad
\begin{array}{rcl}
\bar{\pi}_{E} \colon \Lambda^m_2(\Tan^*E) & \longrightarrow & M \\
(u,\omega_u) & \longmapsto & \pi(u)
\end{array} \, .
\end{equation*}

Since $\Lambda^m_2(\Tan^*E)$ is a bundle of forms, it is endowed with some canonical forms.
First, the \textsl{Liouville $m$-form}, or \textsl{tautological $m$-form}, is the form
$\Theta \in \df^{m}(\Lambda^m_2(\Tan^*E))$ defined by
$$
\Theta(\omega)(X_1,\ldots,X_m) = \omega(\Tan_\omega\pi_E(X_1),\ldots,\Tan_\omega\pi_E(X_m)) \, ,
$$
where $\omega \in \Lambda_2^m(\Tan^*E)$, and $X_1,\ldots,X_m \in \Tan_{\omega}(\Lambda_2^m(\Tan^*E))$.
As usual, this form satisfies the property $\xi^*\Theta = \xi$ for every $\xi \in \df^{m}(E)$.
From this, the \textsl{Liouville $(m+1)$-form}, or \textsl{canonical multisymplectic $(m+1)$-form},
is the form $\Omega = -\d\Theta \in \df^{m+1}(\Lambda^m_2(\Tan^*E))$.

Local coordinates in $\Lambda^m_2(\Tan^*E)$ are constructed as follows: let $(x^i)$ be local coordinates
in $M$, and $(x^i,u^\alpha)$ coordinates in $E$ adapted to the bundle structure. Then, local coordinates
in $\Lambda^m_2(\Tan^*E)$ are $(x^i,u^\alpha,p,p_\alpha^i)$, where $1 \leqslant i  \leqslant m$,
$1 \leqslant \alpha \leqslant n$. In these coordinates, the canonical projections have the following
local expressions
\begin{equation*}
\pi_E(x^i,u^\alpha,p,p_\alpha^i) = (x^i,u^\alpha) \quad ; \quad
\bar{\pi}_E(x^i,u^\alpha,p,p_\alpha^i) = (x^i) \, .
\end{equation*}
On the other hand, the Liouville $m$ and $(m+1)$-forms have the following local expressions
\begin{equation}\label{eqn:LiouvilleForms}
\Theta = p\d^mx + p^i_\alpha \d u^\alpha \wedge \d^{m-1}x_i \quad ; \quad
\Omega = -\d p \wedge \d^mx - \d p^i_\alpha \wedge \d u^\alpha \wedge \d^{m-1}x_i \, ,
\end{equation}
where $\d^mx = \d x^1 \wedge \ldots \wedge \d x^m$ and $\d^{m-1}x_i = \inn(\partial/\partial x^i)\d^m x$.
It is clear from this coordinate expression that $\Omega$ is a multisymplectic form on
$\Lambda^m_2(\Tan^*E)$.

As $\Lambda^m_2(\Tan^*E)$ is, in fact, a vector bundle over $E$, we can consider its quotient by
any vector subbundle. The \textsl{reduced dual jet bundle of $\pi$}, denoted $J^1\pi^*$,
is the quotient of the extended dual jet bundle, $\Lambda^{m}_{2}(\Tan^*E)$, by constant affine
transformations along the fibers of $\pi^1$, and is diffeomorphic to the quotient bundle
$\Lambda^{m}_{2}(\Tan^*E)/\Lambda^{m}_{1}(\Tan^*E)$. The natural quotient map is
$\mu \colon \Lambda^{m}_{2}(\Tan^*E) \to J^{1}\pi^*$.

It can be proved that $J^{1}\pi^*$ may be endowed with the structure of a smooth manifold and,
moreover, $\mu \colon \Lambda^{m}_{2}(\Tan^*E) \to J^{1}\pi^*$ is a smooth vector bundle of rank $1$.
In addition, we have the canonical projections $\pi_E^r \colon J^{1}\pi^* \to E$ and
$\bar{\pi}_E^r \colon J^{1}\pi^* \to M$.

Finally, adapted coordinates $(x^i,u^\alpha)$ in $E$ induce coordinates $(x^i,u^\alpha,p_\alpha^i)$
in $J^{1}\pi^*$ such that the coordinate expression of the natural quotient map is
\begin{equation*}
\mu(x^i,u^\alpha,p,p_\alpha^{i}) = (x^i,u^\alpha,p_\alpha^{i}) \, ,
\end{equation*}
where $(x^i,u^\alpha,p,p_\alpha^{i})$ are the induced coordinates in $\Lambda^{m}_2(\Tan^*E)$.
In these coordinates, the natural projections are given by
\begin{equation*}
\pi_E^r(x^i,u^\alpha,p_\alpha^{i}) = (x^i,u^\alpha) \quad ; \quad
\bar{\pi}_E^r(x^i,u^\alpha,p_\alpha^{i}) = (x^i) \, .
\end{equation*}

\subsection{Multivector fields}
\label{sec:MultivectorFields}

In this section we give a short review on multivector fields and their relation with integrable
distributions (see \cite{art:Echeverria_Munoz_Roman98} for details).

\subsubsection*{Locally decomposable multivector fields. Integrability conditions}

A \textsl{multivector field of degree $k$}, or \textsl{$k$-multivector field}, on a $m$-dimensional
smooth manifold $M$ is a section of the bundle $\Lambda^{k}(\Tan M) \to M$, that is, a skew-symmetric
contravariant tensor. The set of all multivector fields of degree $k$ in $M$ is denoted $\vf^{k}(M)$.

In general, given a $k$-multivector field $\X \in \vf^k(M)$, for every $p \in M$ there exists an
open neighborhood $U_p \subseteq M$ and $X_1,\ldots,X_r \in \vf(U_p)$ such that
$$
\X = \sum_{1 \leqslant i_1 < \ldots < i_k \leqslant r} f^{i_1\dots i_k} X_{i_1} \wedge \ldots \wedge X_{i_k} \, ,
$$
with $f^{i_1\dots i_k} \in \Cinfty(U_p)$ and $k \leqslant r \leqslant \dim M$. If for every $p$ we
have $r = k$, that is, there exists an open neighborhood $U_p \subseteq M$ and
$X_1,\ldots,X_k \in \vf(U_p)$ such that $\X = X_1 \wedge \ldots \wedge X_k$ on $U_p$, then we say
that the multivector field $\X$ is \textsl{locally decomposable}.

Let $\D$ be a $k$-dimensional distribution in $M$, that is, a $k$-dimensional subbundle of $\Tan M$.
It is clear that sections of $\Lambda^k\D \to M$ are $k$-multivector fields in $M$, and that the
existence of a non-vanishing global section of $\Lambda^k\D \to M$ is equivalent to the orientability
of the distribution $\D$. Then, we say that a non-vanishing multivector field $\X \in \vf^{k}(M)$ and a
$k$-dimensional distribution $\D \subset \Tan M$ are \textsl{locally associated} if there exists
a connected open set $U \subseteq M$ such that $\restric{\X}{U}$ is a section of $\restric{\Lambda^{k}\D}{U}$.

As a consequence of this we can introduce an equivalence relation on the set of non-vanishing
$k$-multivector fields in $M$ as follows: two $k$-multivector fields $\X,\X^\prime \in \vf^{k}(M)$
are related if, and only if, they are both locally associated, on the same connected open set
$U \subseteq M$, with the same distribution $\D$. In addition, in this case there exists a
non-vanishing function $f \in \Cinfty(U)$ such that $\X^\prime = f\X$ on $U$. The equivalence
classes of this quotient set will be denoted by $\{ \X \}_U$. Then, one can prove that there is
a bijective correspondence between the set of $k$-dimensional orientable distributions
$\D \subseteq \Tan M$ and set of equivalence classes $\{ \X \}_M$ of non-vanishing, locally
decomposable $k$-multivector fields in $M$.

If $\X \in \vf^{k}(M)$ is a non-vanishing, locally decomposable $k$-multivector field and
$U \subseteq M$ is a connected open set, then the distribution associated to the equivalence
class $\{\X\}_U$ will be denoted by $\D_U(\X)$. If $U = M$, then we write simply $\D(\X)$.

A non-vanishing, locally decomposable multivector field $\X \in \vf^k(M)$ is said to be
\textsl{integrable} (resp. \textsl{involutive}) if  its associated distribution $\D_U(\X)$ is
integrable (resp. involutive). It is clear then that if $\X \in \vf^k(M)$ is integrable (resp.
involutive), then so is every other in its equivalence class $\{ \X \}$, and all of them have the
same integral manifolds. Moreover, \textsl{Frobenius theorem} allows us to state that a non-vanishing
and locally decomposable multivector field is integrable if, and only if, it is involutive.
Nevertheless, in many applications we have locally decomposable multivector fields $\X \in \vf^k(M)$
which are not integrable in $M$, but integrable in a submanifold of $M$. A (local) algorithm for
finding this submanifold has been developed \cite{art:Echeverria_Munoz_Roman98}.

\subsubsection*{Multivector fields in fiber bundles and jet bundles. Holonomy condition}

We are interested in the particular situation of a fiber bundle and, more precisely, of jet bundles. 

Let $\pi \colon E \to M$ be a fiber bundle, with $\dim M = m$ and $\dim E = m + n$. A multivector
field $\X \in \vf^{m}(M)$ is said to be \textsl{$\pi$-transverse} if at every point $u \in E$ we have
$(\inn(\X)(\pi^*\omega))_y \neq 0$ for every $\omega \in \df^{m}(M)$ satisfying $\omega(\pi(y)) \neq 0$.
It can be proved that if $\X$ is integrable, then the $\pi$-transverse condition is equivalent to
requiring the integral manifolds of $\X$ to be local sections of $\pi$.
In this case, if $\phi \colon U \subseteq M \to E$ is a local section with $\phi(x) = u$
and $\phi(U)$ is the integral manifold of $\X$, then $\Tan_u(\Im\phi) = \D_u(\X)$.

Now, let us consider the first-order jet bundle of $\pi$, $J^1\pi$. A multivector field
$\X \in \vf^m(J^{1}\pi)$ is \textsl{holonomic} if $\X$ is integrable, $\bar{\pi}^1$-transverse,
and the integral sections of $\X$ are holonomic.

As in mechanics, the holonomy of a multivector field may be characterized using the geometry of
$J^1\pi$. First, a $\pi^1$-transverse and locally decomposable multivector field
$\X \in \vf^{m}(J^1\pi)$ is said to be \textsl{semi-holonomic}, or a \textsl{SOPDE multivector field}
if, and only if, $\inn(\theta)\X = 0$, where $\theta$ is the canonical structure form in $J^1\pi$.
Then, the relation between integrable, holonomic and semi-holonomic multivector fields in $J^1\pi$
is given by the following result from \cite{art:Echeverria_Munoz_Roman98}.

\begin{theorem}\label{thm:Holonomic_VS_Integrability&Semiholonomic}
A multivector field $\X \in \vf^{m}(J^1\pi)$ is holonomic if, and only if, it is integrable and
semi-holonomic.
\end{theorem}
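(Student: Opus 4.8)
The plan is to reduce the claimed equivalence to the characterization of holonomic sections by the canonical structure form given in the previous Proposition, and then to translate that pointwise condition on sections into the algebraic condition $\inn(\theta)\X=0$ on the multivector field. First I would record two structural facts that are built into the definitions: in this framework ``integrable'' already presupposes that $\X$ is non-vanishing and locally decomposable, and both the notion of holonomic and that of semi-holonomic multivector field carry the same transversality, namely with respect to the projection $\bar{\pi}^1 \colon J^1\pi \to M$ onto the $m$-dimensional base (the only transversality matching the degree $m$ of $\X$). Consequently, on both sides of the equivalence $\X$ is integrable, $\bar{\pi}^1$-transverse and locally decomposable, and the whole statement collapses to a single assertion: for such an $\X$, the integral sections of $\X$ are holonomic if, and only if, $\inn(\theta)\X = 0$. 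By the Proposition, a section $\psi \in \Gamma(\bar{\pi}^1)$ is holonomic precisely when $\psi^*\theta = 0$, so it suffices to prove that $\inn(\theta)\X = 0$ on all of $J^1\pi$ is equivalent to $\psi^*\theta = 0$ for every integral section $\psi$ of $\X$.

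The core of the argument is local and algebraic. Around any point I would choose a representative of the class $\{\X\}$ in decomposed form, $\X = X_1 \wedge \ldots \wedge X_m$, with $X_1,\ldots,X_m$ pointwise linearly independent because $\X$ is non-vanishing. Expanding the interior product of the (vertical-bundle-valued) $1$-form $\theta$ with this decomposable $m$-vector gives $\inn(\theta)\X = \sum_{a=1}^m (-1)^{a-1}\theta(X_a)\, X_1 \wedge \ldots \wedge \widehat{X_a} \wedge \ldots \wedge X_m$, and since the $(m-1)$-fold wedges appearing here are linearly independent, this vanishes if, and only if, $\theta(X_a) = 0$ for every $a$, i.e. exactly when $\theta$ annihilates the distribution $\D(\X)$. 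On the other hand, because $\X$ is integrable and $\bar{\pi}^1$-transverse, its integral manifolds are images of local sections $\psi$ of $\bar{\pi}^1$ and, by the transversality property recalled in the excerpt, $\D_y(\X) = \Tan_y(\Im\psi)$ at each such point; hence $\theta$ annihilates $\D(\X)$ if, and only if, $\psi^*\theta = 0$ for every integral section $\psi$. Concatenating these two equivalences yields $\inn(\theta)\X = 0 \Leftrightarrow \psi^*\theta = 0$, and the Proposition then closes both implications. Equivalently, one may run the whole computation in adapted coordinates, where $\theta = (\d u^\alpha - u_i^\alpha\,\d x^i)\otimes \partial/\partial u^\alpha$ and $\inn(\theta)\X=0$ reads $F_i^\alpha = u_i^\alpha$ for the components of $\X$, which is precisely the holonomy condition \eqref{eqn:HolonomyConditionSect} on the integral sections.

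The step I expect to be the main obstacle is the careful passage between the pointwise condition $\inn(\theta)\X=0$, imposed everywhere on $J^1\pi$, and the family of conditions $\psi^*\theta=0$ on the integral sections. This relies on integrability in an essential way: only because $\X$ is integrable does every point of $J^1\pi$ lie on an integral manifold whose tangent space coincides with $\D(\X)$, so that the vanishing of $\theta$ on the distribution and the vanishing of the pullbacks to all integral sections encode the same information. The two remaining delicate points are that $\theta$ is vector-valued, so the contraction and the annihilation condition must be read componentwise in a local frame of $V(\pi)$, and that the identification $\D(\X) = \Tan(\Im\psi)$ genuinely requires $\bar{\pi}^1$-transversality; both are routine once the reduction above is in place.
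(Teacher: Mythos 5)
Your proof is correct. Note that the paper itself offers no proof of this theorem --- it is quoted directly from \cite{art:Echeverria_Munoz_Roman98} --- so there is no in-text argument to compare against; your reduction to the chain ``$\inn(\theta)\X=0$ iff $\theta$ annihilates $\D(\X)$ iff $\psi^*\theta=0$ for every integral section $\psi$'', using integrability to pass between the distribution and its integral sections and $\bar{\pi}^1$-transversality to identify $\D(\X)$ with the tangent spaces of $\Im(\psi)$, is exactly the standard argument of that reference, and it is consistent with the coordinate condition $f_j^\alpha=u_j^\alpha$ that the paper records immediately after the statement.
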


In natural coordinates, let $\X \in \vf^{m}(J^1\pi)$ be a locally decomposable and
$\bar{\pi}^1$-transverse multivector field. From the results in \cite{art:Echeverria_Munoz_Roman98},
this multivector field $\X$ may be chosen to have the following coordinate expression
\begin{equation*}
\X = f \bigwedge_{j=1}^{m} X_j = f \bigwedge_{j=1}^{m}
\left( \derpar{}{x^j} + f_j^\alpha\derpar{}{u^\alpha} + F_{j,i}^\alpha\derpar{}{u_i^\alpha} \right) \, ,
\end{equation*}
with $f$ being a non-vanishing local function. Then, the condition for $\X$ to be semi-holonomic
gives the $mn$ equations $f_j^\alpha = u_j^\alpha$. In addition, from the results in
\cite{art:Echeverria_Munoz_Roman98}, we know that the necessary and sufficient condition for a
locally decomposable multivector field to be integrable is that its associated distribution is
involutive, which is equivalent to requiring the $m(m-1)/2$ conditions $[ X_j,X_k ] = 0$, with
$1 \leqslant j < k \leqslant m$. In coordinates, these gives the following system of $nm(m^2-1)/2$
partial differential equations for the component functions $F_{j,i}^\alpha$
\begin{equation}\label{eqn:Integrability&SemiHolonomicity}
F_{j,k}^\alpha - F_{k,j}^\alpha = 0 \quad ; \quad
\derpar{F_{k,i}^\alpha}{x^j} + u_j^\beta \derpar{F_{k,i}^\alpha}{u^\beta} + F_{j,l}^\beta \derpar{F_{k,i}^\alpha}{u_l^\beta}
- \derpar{F_{j,i}^\alpha}{x^k} - u_k^\beta \derpar{F_{j,i}^\alpha}{u^\beta} - F_{k,l}^\beta \derpar{F_{j,i}^\alpha}{u_l^\beta} = 0 \, .
\end{equation}

\begin{remark}
It is important to point out that a locally decomposable, $\bar{\pi}^1$-transverse and semi-holonomic
multivector field $\X$ may not be holonomic, since the SOPDE condition is not a sufficient nor
necessary condition for the multivector field to be integrable. On the other hand, the integrability
of a multivector field does not imply that the integral sections are holonomic: as in mechanics, a
multivector field may admit integral sections through every point in $J^1\pi$, but these integral
sections may not be projectable to the base manifold.
\end{remark}

\subsubsection*{Relation with jet fields}

Let $\pi \colon E \to M$ be a fiber bundle, with $\dim M = m$ and $\dim E = m + n$, and $J^1\pi$
the first-order jet bundle of $\pi$.

\begin{definition}
A \textnormal{jet field} in $E$ is a global section of the projection $\pi^1 \colon J^1\pi \to E$.
\end{definition}

It is proved in \cite{book:Saunders89} that there is a bijective correspondence between jet fields in
$E$ and connections $\nabla \in \Lambda^1_1(T^*E) \otimes \vf(E)$. Therefore, there is a bijective
correspondence between jet fields in $E$ and distributions in $E$. We denote $\D(\Psi)$ the unique
distribution in $E$ associated to the jet field $\Psi \in \Gamma(\pi^1)$. This enables us to give
the following definitions.

\begin{definition}
Let $\Psi \in \Gamma(\pi^1)$ be a jet field and $\D(\Psi)$ its associated distribution in $E$.
\begin{enumerate}
\item $\Psi$ is said to be \textnormal{orientable} if, and only if, $\D(\Psi)$ is an orientable
distribution in $E$. In particular, if $M$ is orientable, then every jet field is also orientable.
\item $\Psi$ is said to be \textnormal{integrable} if, and only if, $\D(\Psi)$ is an integrable
distribution.
\item A section $\phi \in \Gamma(\pi)$ is an \textnormal{integral section} of $\Psi$ if, and only
if, $\Psi \circ \phi = j^1\phi$. In particular, $\Psi$ is integrable if, and only if, it admits
integral sections through every point of $E$.
\end{enumerate}
\end{definition}

With these notations, the relation between multivector fields and jet fields is given by the
following result, stated in \cite{art:Echeverria_Munoz_Roman98}.

\begin{theorem}
There is a bijective correspondence between the set of orientable jet fields $\Psi \in \Gamma(\pi^1)$
and the set of equivalence classes of locally decomposable and $\pi$-transverse multivector fields
$\{ \X \} \subseteq \vf^{m}(E)$. They are characterized by the fact that $\D(\Psi) = \D(\X)$. In
addition, the orientable jet field $\Psi$ is integrable if, and only if, so is every $\X$  in the
equivalence class.
\end{theorem}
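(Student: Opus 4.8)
The plan is to exhibit the stated correspondence as the composite of two bijections already recorded in this section, and then to match the transversality conditions on the two sides. On the one hand, the connection correspondence recalled above identifies each orientable jet field $\Psi \in \Gamma(\pi^1)$ with its associated distribution $\D(\Psi) \subseteq \Tan E$; this $\D(\Psi)$ is precisely the horizontal distribution of the corresponding connection, hence an $m$-dimensional orientable distribution complementary to the vertical subbundle, $\D(\Psi) \oplus V(\pi) = \Tan E$. On the other hand, the bijection established earlier in the multivector-field subsection identifies $m$-dimensional orientable distributions in $E$ with equivalence classes $\{\X\}$ of non-vanishing, locally decomposable $m$-multivector fields $\X \in \vf^{m}(E)$, characterized by $\D(\X) = \D$. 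Composing, to $\Psi$ I assign the class $\{\X\}$ with $\D(\X) = \D(\Psi)$; the characterization $\D(\Psi) = \D(\X)$ is then built into the construction, and orientability matches because orientability of $\Psi$ is, by definition, that of $\D(\Psi)$.

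The essential point is to verify that, under the second bijection, the distributions coming from jet fields are exactly those whose multivector fields are $\pi$-transverse. First I would show that a non-vanishing locally decomposable $\X = X_1 \wedge \dots \wedge X_m$ is $\pi$-transverse if, and only if, $\Tan\pi(X_1),\dots,\Tan\pi(X_m)$ are pointwise linearly independent, i.e. $\Tan\pi$ restricts to a fibrewise isomorphism $\D(\X) \to \Tan M$. Indeed $\inn(\X)(\pi^*\omega) = \omega(\Tan\pi(X_1),\dots,\Tan\pi(X_m))$, which, for $\omega$ a local volume form, is the determinant of the $\Tan\pi(X_j)$ in a frame of $\Tan M$; its non-vanishing is precisely linear independence. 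Since vertical vectors are annihilated by $\Tan\pi$, this is equivalent to $\D(\X) \cap V(\pi) = 0$, hence by dimension count to $\D(\X) \oplus V(\pi) = \Tan E$, which is exactly the condition characterizing horizontal distributions of connections, that is, distributions of the form $\D(\Psi)$. In adapted coordinates this is transparent: writing $\Psi(x^i,u^\alpha) = (x^i,u^\alpha,\Psi_i^\alpha)$, the distribution $\D(\Psi)$ is spanned by $H_i = \partial/\partial x^i + \Psi_i^\alpha\,\partial/\partial u^\alpha$, so $\X = H_1 \wedge \dots \wedge H_m$ is a representative with $\Tan\pi(H_i) = \partial/\partial x^i$ linearly independent, hence $\pi$-transverse; conversely, any $\pi$-transverse class admits a representative whose factors project onto the coordinate fields $\partial/\partial x^i$, from which the functions $\Psi_i^\alpha$ can be read off.

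It remains to check that the correspondence descends to equivalence classes, which is where the only genuine subtlety lies: the recovered functions $\Psi_i^\alpha$ depend solely on the distribution $\D(\X)$ and not on the chosen representative, since multiplying $\X$ by a non-vanishing function, or replacing it by any other element of $\{\X\}$, leaves $\D(\X)$ — and hence the horizontal frame $H_i$ up to $\Cinfty$-linear recombination — unchanged; thus $\Psi$ is well defined on classes and the two assignments are mutually inverse. For the integrability statement I would argue purely at the level of distributions: by definition $\Psi$ is integrable iff $\D(\Psi)$ is integrable, while a locally decomposable $\X$ is integrable iff $\D(\X)$ is integrable, and this is a property of the entire class $\{\X\}$, all of whose members share the same integral manifolds; as $\D(\Psi) = \D(\X)$, the two notions coincide. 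The main obstacle is therefore conceptual rather than computational: the clean identification of $\pi$-transversality with complementarity to $V(\pi)$, together with the verification that the construction is independent of the chosen representative.
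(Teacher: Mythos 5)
Your proposal is correct. Note that the paper does not actually prove this theorem; it is quoted from the reference on multivector fields and connections, so there is no internal proof to compare against. Your argument is the natural one and all the key steps check out: the identification of jet fields with connections, hence with $m$-dimensional distributions complementary to $V(\pi)$; the bijection between orientable $m$-dimensional distributions and equivalence classes of non-vanishing, locally decomposable $m$-multivector fields already recorded in the same section; and, crucially, the computation $\inn(\X)(\pi^*\omega) = \omega(\Tan\pi(X_1),\ldots,\Tan\pi(X_m))$ showing that $\pi$-transversality of a locally decomposable representative is exactly the condition $\D(\X) \cap V(\pi) = 0$, which by a dimension count is complementarity to the vertical subbundle, i.e.\ membership in the image of the jet-field correspondence. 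The remarks on well-definedness on equivalence classes (the class is determined by the distribution alone) and on integrability (both notions are defined through the common distribution $\D(\Psi) = \D(\X)$) close the argument. One small caution: a horizontal distribution is not automatically orientable for an arbitrary base, so the phrase ``hence orientable'' should be read as relying either on the standing assumption that $M$ is orientable or on the restriction to orientable jet fields in the statement; as you note, the latter makes the orientability on the two sides match by definition.
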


\section{The Hamilton-Jacobi problem in the Lagrangian formalism}
\label{sec:LagrangianFormalism}

The geometrical setting for the Lagrangian formalism for multisymplectic field theories is the
following (see, for instance, \cite{AA-80,EMR-96,art:Echeverria_Munoz_Roman98,Gc-74,GS-73,GIMMSY-mm,
LMM-96,art:Roman09} for more details). Let $\pi \colon E \to M$ be a fiber bundle modeling the
configuration space, where $M$ is a $m$-dimensional orientable smooth manifold with fixed volume form
$\eta \in \df^{m}(M)$, and $\dim E = m + n$. Let $\Lag \in \df^{m}(J^1\pi)$ be a Lagrangian density
containing the physical information of the theory, that is, a $\bar{\pi}^1$-semibasic $m$-form. We
denote $L \in \Cinfty(J^1\pi)$ the function satisfying $\Lag = L(\bar{\pi}^1)^*\eta$, which we call
the Lagrangian function associated to $\Lag$ and $\eta$. Using the canonical vertical endomorphism
$\nu \in \Gamma(\Tan^*J^1\pi \otimes_{J^1\pi} \Tan M \otimes_{J^1\pi} V(\pi^1))$, the Cartan forms
$\Theta_\Lag = \inn(\nu)\d\Lag + \Lag \in \df^m(J^1\pi)$ and
$\Omega_\Lag = -\d\Theta_\Lag \in \df^{m+1}(J^1\pi)$ are constructed, with coordinate expressions
\begin{align}
&\Theta_\Lag = \derpar{L}{u_i^\alpha}\d u^\alpha \wedge \d^{m-1}x_i -
\left( \derpar{L}{u_i^\alpha}u_i^\alpha - L \right)\d^{m}x \, , \label{eqn:CartanMFormLocal} \\
&\begin{array}{l}
\displaystyle \Omega_{\Lag} =
\derpars{L}{u_i^\alpha}{u^\beta} \, \d u^\alpha \wedge \d u^\beta \wedge \d^{m-1}x_i
+ \derpars{L}{u_i^\alpha}{u_j^\beta} \, \d u^\alpha \wedge \d u_j^\beta \wedge \d^{m-1}x_i \\[12pt]
\displaystyle \qquad\qquad + \left( u_i^\alpha \derpars{L}{u_i^\alpha}{u^\beta} - \derpar{L}{u^\beta} \right) \d u^\beta \wedge \d^mx
+ u_i^\alpha\derpars{L}{u_i^\alpha}{u_j^\beta} \, \d u_j^\beta \wedge \d^mx \, .
\end{array}\label{eqn:MultisymplecticCartanFormLocal}
\end{align}

Then, the Lagrangian problem for first-order classical field theories is the following: \textit{to find a
$m$-dimensional, $\bar{\pi}^{1}$-transverse and integrable distribution $\D_\Lag$ in $J^1\pi$ such that
the integral sections of $\D_\Lag$ are prolongations of sections $\phi \in \Gamma(\pi)$ satisfying}
\begin{equation}\label{eqn:LagFieldEqSect}
(j^1\phi)^*\inn(X)\Omega_\Lag = 0 \, , \ \mbox{\textit{for every} } X \in \vf(J^1\pi) \, .
\end{equation}
If the Lagrangian density is regular, then the Cartan $(m+1)$-form $\Omega_\Lag$ is multisymplectic,
and then there exists such a distribution, although it is not necessarily integrable. In the following
we assume that the Lagrangian density $\Lag$ is regular, and that the distribution $\D_\Lag$ is,
in addition, integrable.

From the results in Section \ref{sec:MultivectorFields}, this distribution $\D_\Lag$ is associated
with a class of holonomic multivector fields $\{\X_\Lag\} \subseteq \vf^{m}(J^1\pi)$ satisfying
the equation
\begin{equation}\label{eqn:LagFieldEqMVF}
\inn(\X_\Lag)\Omega_\Lag = 0 \, , \ \mbox{for every } \X_\Lag \in \left\{ \X_\Lag \right\} \, .
\end{equation}
The same comments apply in the regular case: if the Lagrangian density is regular, then there exists a
class of multivector fields $\{\X_\Lag\} \subset \vf^{m}(J^1\pi)$ solution to equation
\eqref{eqn:LagFieldEqMVF} which is $\bar{\pi}^1$-transverse and SOPDE, but not necessarily integrable.
In the following we assume that the Lagrangian density is regular and that every multivector field
in the class is, in addition, integrable. This class is denoted by $\{ \X_\Lag \}$ along this work.

\subsection{The generalized Lagrangian Hamilton-Jacobi problem}
\label{sec:LagGenHJProblem}

Following the patterns in \cite{art:Carinena_Gracia_Marmo_Martinez_Munoz_Roman06}, we first
state a generalized version of the Hamilton-Jacobi problem in the Lagrangian formalism.

\begin{definition}\label{def:LagGenHJDef}
The \textnormal{generalized Lagrangian Hamilton-Jacobi problem} consists in finding a jet field
$\Psi \in \Gamma(\pi^1)$ and a $m$-dimensional and integrable distribution $\D$ in $E$ such that if
$\gamma \in \Gamma(\pi)$ is an integral section of $\D$, then
$\Psi \circ \gamma \in \Gamma(\bar{\pi}^1)$ is an integral section of $\D_\Lag$, that is,
\begin{equation}\label{eqn:LagGenHJDefDist}
\Tan_{u}\Im(\gamma) = \D_{u} \ \ \forall u \in \Im(\gamma) \Longrightarrow
\Tan_{\bar{u}}\Im(\Psi \circ \gamma) = (\D_\Lag)_{\bar{u}}
\ \ \forall \ \bar{u} \in \Im(\Psi \circ \gamma) \, .
\end{equation}
\end{definition}

From the results in Section \ref{sec:MultivectorFields}, since both $\D$ and $\D_\Lag$ are
associated with their corresponding classes of multivector fields, the problem can be stated
equivalently in terms of multivector fields as the search of a jet field
$\Psi \in \Gamma(\pi^1)$ and a class of locally decomposable and integrable multivector fields
$\{ \X \} \subseteq \vf^{m}(E)$ such that if $\gamma \in \Gamma(\pi)$ is an integral section of
every multivector field $\X \in \{ \X \}$ then $\Psi \circ \gamma \in \Gamma(\bar{\pi}^1)$ is an
integral section of every multivector field $\X_\Lag \in \{\X_\Lag\}$ solution to equation
\eqref{eqn:LagFieldEqMVF}, that is,
\begin{equation}\label{eqn:LagGenHJDefMVF}
\X \circ \gamma = \Lambda^m\dot{\gamma} \ \ \forall \X \in \{\X\} \Longrightarrow
\X_\Lag \circ (\Psi \circ \gamma) = \Lambda^m(\dot{\overline{\Psi \circ \gamma}})
\ \ \forall \X_\Lag \in \{\X_\Lag\} \, ,
\end{equation}
where $\Lambda^m\dot{\gamma} \colon M \to \Lambda^m(\Tan E)$ denotes the canonical lift of $\gamma$ to
$\Lambda^m(\Tan E)$. In the following we denote by $\{ \X \} \subseteq \vf^{m}(E)$ the class of locally
decomposable and integrable multivector fields associated with the integable distribution $\D$ in $E$.
The diagram illustrating this equivalent formulation of the generalized Lagrangian Hamilton-Jacobi
problem is the following
$$
\xymatrix{
\ & \ & \Lambda^m(\Tan J^1\pi) & \ \\
\ & \ & J^1\pi \ar[u]_-{\X_\Lag} \ar[d]^{\pi^1} & \ \\
M \ar[rr]^-{\gamma} \ar[urr]^-{\Psi \circ \gamma} \ar@/^1pc/[uurr]^-{\Lambda^m(\dot{\overline{\Psi \circ \gamma}})} \ar@/_1pc/[rrr]_{\Lambda^m\dot{\gamma}} & \ & E \ar@/^.75pc/[u]^{\Psi} \ar[r]^-{\X} & \Lambda^m(\Tan E)
}
$$
where the interpretation is: \textit{if the lower diagram formed by $\gamma$, $\X$ and
$\Lambda^m\dot{\gamma}$ is commutative for every $\X \in \{ \X \}$, then the upper diagram formed
by $\Psi \circ \gamma$, $\X_\Lag$ and $\Lambda^m(\dot{\overline{\Psi \circ \gamma}})$ is also
commutative for every $\X_\Lag \in \{ \X_\Lag \}$.}

\begin{remark}
Since the section $\Psi \circ \gamma \in \Gamma(\bar{\pi}^1)$ is an integral section of $\D_\Lag$
(or, equivalently, an integral section of the associated class of holonomic multivector fields),
in particular it must satisfy equation \eqref{eqn:LagFieldEqSect}, that is,
$$
(\Psi \circ \gamma)^*\inn(X)\Omega_\Lag = 0 \, , \ \mbox{for every } X \in \vf(J^1\pi) \, .
$$
Nevertheless, observe that the action of the $m$-form
$(\Psi \circ \gamma)^*\inn(X)\Omega_\Lag \in \df^{m}(M)$ on $m$ tangent vectors $v_i \in \Tan_xM$,
with $x \in M$, is defined as
$$
\left((\Psi \circ \gamma)^*\inn(X)\Omega_\Lag\right)_{x}(v_1,\ldots,v_m) =
(\Omega_\Lag)_{\Psi(\gamma(x))}(X(\Psi(\gamma(x))),\Tan_x(\Psi \circ \gamma)(v_1),\ldots,\Tan_x(\Psi \circ \gamma)(v_m)) \, ,
$$
from where we observe that
$X(\Psi(\gamma(x))) \in \Tan_{\Psi(\gamma(x))} \Im(\Psi \circ \gamma) \subset \Tan_{\Psi(\gamma(x))}\Im(\Psi)$,
that is, the vector field $X$ is tangent to the submanifold $\Im(\Psi) \hookrightarrow J^1\pi$.
Therefore, in this particular situation, equation \eqref{eqn:LagFieldEqSect} is equivalent to
\begin{equation}\label{eqn:LagFieldEqSectWithHJSolution}
(\Psi \circ \gamma)^*\inn(X)\Omega_\Lag = 0 \, ,
\ \mbox{for every } X \in \vf(J^1\pi) \mbox{ tangent to } \Im(\Psi) \, .
\end{equation}
\end{remark}

\begin{remark}\label{rem:LagGenHJDefJetFields}
Since every integral section of the distribution $\D_\Lag$ is the prolongation of a section of $\pi$,
this holds, in particular, for the section $\Psi \circ \gamma$, and we have $\Psi \circ \gamma = j^1\phi$
for some $\phi \in \Gamma(\pi)$. Now, composing this last equality with the natural projection $\pi^1$,
we obtain $\gamma = \phi$. Then, replacing $\phi$ by $\gamma$ in the previous expression, we have
$\Psi \circ \gamma = j^1\gamma$, from where we deduce that if $\D$ is an integrable distribution, then
the jet field $\Psi \in \Gamma(\pi^1)$ is integrable, and every integral section of $\D$ is an integral
section of $\Psi$. Moreover, this enables us to reformulate the generalized Lagrangian Hamilton-Jacobi
problem as follows:

\noindent\textit{The \textnormal{generalized Lagrangian Hamilton-Jacobi problem} consists in finding an
integrable jet field $\Psi \in \Gamma(\pi^1)$ such that if $\gamma \in \Gamma(\pi)$ is an
integral section of $\Psi$, then $j^1\gamma \in \Gamma(\bar{\pi}^1)$ is an integral section of $\D_\Lag$.}

\noindent Nevertheless, we stick to the statement in Definition \ref{def:LagGenHJDef}, or the equivalent
formulation given in terms of multivector fields, in order to give several equivalent conditions to
being a solution to the generalized Lagrangian Hamilton-Jacobi problem.
\end{remark}

It is clear from this last remark that the distribution $\D$ in $E$, the jet field
$\Psi \in \Gamma(\pi^1)$ and the distribution $\D_\Lag$ in $J^1\pi$ are closely related.
In fact, we have the following result.

\begin{proposition}\label{prop:LagGenHJRelatedDist}
The jet field $\Psi \in \Gamma(\pi^1)$ and the distribution $\D$ in $E$ satisfy condition
\eqref{eqn:LagGenHJDefDist} if, and only if, $\D$ and $\D_\Lag$ are $\Psi$-related, that is, for
every $\X_\Lag \in \{ \X_\Lag \}$ (resp., for every $\X \in \{ \X \}$) there exists $\X \in \{ \X \}$
(resp., $\X_\Lag \in \{ \X_\Lag \}$) such that $\X_\Lag \circ \Psi = \Lambda^m\Tan\Psi \circ \X$.
\end{proposition}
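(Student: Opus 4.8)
The plan is to prove both implications by passing to the multivector field formulation \eqref{eqn:LagGenHJDefMVF} and exploiting how integral sections behave under $\Psi$-related multivector fields. The cornerstone is an auxiliary fact generalizing the classical statement that $\Psi$-related vector fields carry integral curves to integral curves: if representatives $\X \in \{\X\}$ and $\X_\Lag \in \{\X_\Lag\}$ satisfy $\X_\Lag \circ \Psi = \Lambda^m\Tan\Psi \circ \X$, and $\gamma \in \Gamma(\pi)$ is an integral section of $\X$ (so $\X \circ \gamma = \Lambda^m\dot\gamma$), then $\Psi\circ\gamma$ is an integral section of $\X_\Lag$. First I would establish this by the chain of equalities
\[
\X_\Lag\circ(\Psi\circ\gamma) = (\Lambda^m\Tan\Psi\circ\X)\circ\gamma = \Lambda^m\Tan\Psi\circ(\X\circ\gamma) = \Lambda^m\Tan\Psi\circ\Lambda^m\dot\gamma = \Lambda^m(\dot{\overline{\Psi\circ\gamma}}) \, ,
\]
where the last step is the naturality identity $\Lambda^m\Tan\Psi\circ\Lambda^m\dot\gamma = \Lambda^m(\dot{\overline{\Psi\circ\gamma}})$, which follows from the chain rule applied to a frame of $\Tan M$ adapted to $\eta$. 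This yields exactly \eqref{eqn:LagGenHJDefMVF}.

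For the implication ($\Leftarrow$) I would assume $\D$ and $\D_\Lag$ are $\Psi$-related and take any integral section $\gamma$ of $\D$. Given an arbitrary $\X_\Lag \in \{\X_\Lag\}$, the relatedness hypothesis furnishes $\X \in \{\X\}$ with $\X_\Lag\circ\Psi = \Lambda^m\Tan\Psi\circ\X$, and the auxiliary fact shows $\Psi\circ\gamma$ is an integral section of $\X_\Lag$. Since $\X_\Lag$ was arbitrary, $\Psi\circ\gamma$ is an integral section of the whole class, i.e. of $\D_\Lag$, which is precisely condition \eqref{eqn:LagGenHJDefDist}.

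For the converse ($\Rightarrow$) I would invoke the integrability of $\D$: by the Frobenius theorem there passes an integral section $\gamma$ of $\D$ through every point $u \in E$, say $\gamma(x) = u$. Condition \eqref{eqn:LagGenHJDefDist} makes $\Psi\circ\gamma$ an integral section of $\D_\Lag$, so for the representatives normalized along $\gamma$ one has $\X(u) = \Lambda^m\dot\gamma(x)$ and $\X_\Lag(\Psi(u)) = \Lambda^m(\dot{\overline{\Psi\circ\gamma}})(x)$. Applying $\Lambda^m\Tan\Psi$ to the first equality and using the naturality identity gives $\Lambda^m\Tan\Psi(\X(u)) = \X_\Lag(\Psi(u))$; as $u$ ranges over $E$ this is the pointwise relatedness $\X_\Lag\circ\Psi = \Lambda^m\Tan\Psi\circ\X$, and the symmetric quantifier (for every $\X$ there exists $\X_\Lag$) is obtained identically.

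The step I expect to demand the most care is the bookkeeping of the equivalence classes $\{\X\}$ and $\{\X_\Lag\}$: a single integral section fixes the normalization of only one representative in each class, so the nonvanishing scalar factors relating distinct representatives must be tracked in order to match the existential quantifiers in the definition of $\Psi$-relatedness correctly. By contrast, the genuine geometric content reduces to the naturality identity $\Lambda^m\Tan\Psi\circ\Lambda^m\dot\gamma = \Lambda^m(\dot{\overline{\Psi\circ\gamma}})$ together with the fact that integrability of $\D$ provides an integral section through each point of $E$, which is exactly what allows the pointwise relatedness to be read off in the ($\Rightarrow$) direction.
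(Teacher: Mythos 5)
Your proposal is correct and follows essentially the same route as the paper's proof: both directions reduce to the chain of equalities $\X_\Lag \circ \Psi \circ \gamma = \Lambda^m(\dot{\overline{\Psi \circ \gamma}}) = \Lambda^m\Tan\Psi \circ \Lambda^m\dot{\gamma} = \Lambda^m\Tan\Psi \circ \X \circ \gamma$, with the integrability of $\D$ (integral sections through every point of $E$) used to upgrade the identity along integral sections to the pointwise relatedness $\X_\Lag \circ \Psi = \Lambda^m\Tan\Psi \circ \X$. Your explicit flagging of the normalization of representatives within the equivalence classes is a point the paper passes over more quickly, but it does not change the argument.
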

\begin{proof}
We prove this result in terms of the associated classes of multivector fields. Let
$\gamma \in \Gamma(\pi)$ be an integral section of $\D$, which is equivalent to $\gamma$ being an
integral section of every $\X \in \{ \X \}$, and let $\X_\Lag \in \{ \X_\Lag \}$ be a representative
of the equivalence class. Then we have
$$
\X_\Lag \circ \Psi \circ \gamma = \Lambda^m(\dot{\overline{\Psi \circ \gamma}})
= \Lambda^m\Tan\Psi \circ \Lambda^m\dot{\gamma} = \Lambda^m\Tan\Psi \circ \X \circ \gamma \, ,
$$
where the multivector field $\X \in \{ \X \}$ in the last equality exists since $\gamma$ is an
integral section of $\D$. Then, since $\X$ is integrable, it admits integral sections through every
point in $E$, and therefore we have proved that for every $\X_\Lag \in \{ \X_\Lag \}$ there exists
$\X \in \{ \X \}$ such that $\X_\Lag \circ \Psi = \Lambda^m\Tan\Psi \circ \X$. Reversing this
reasoning we prove that for every $\X \in \{ \X \}$ there exists $\X_\Lag \in \{ \X_\Lag \}$ such
that $\X_\Lag \circ \Psi = \Lambda^m\Tan\Psi \circ \X$. Therefore, the distributions $\D$ and
$\D_\Lag$ are $\Psi$-related.

Conversely, let us suppose that $\D$ and $\D_\Lag$ are $\Psi$-related. Then, if
$\gamma \in \Gamma(\pi)$ is an integral section of $\D$ and $\X_\Lag \in \{ \X_\Lag \}$
is a representative of the equivalence class, we have
$$
\X_\Lag \circ \Psi \circ \gamma = \Lambda^m\Tan\Psi \circ \X \circ \gamma
= \Lambda^m\Tan\Psi \circ \Lambda^m\dot{\gamma} = \Lambda^m(\dot{\overline{\Psi \circ \gamma}}) \, ,
$$
where $\X \in \{ \X \}$ in the first equality is any multivector field in $\{ \X \}$ which is
$\Psi$-related to the given $\X_\Lag$. That is, the jet field $\Psi$ and the distribution $\D$
satisfy condition \eqref{eqn:LagGenHJDefDist}.
\end{proof}

A straightforward consequence of this last result is the following.

\begin{corollary}\label{corol:LagGenHJAssociatedMVF}
If the jet field $\Psi \in \Gamma(\pi^1)$ and the distribution $\D$ in $E$ satisfy condition
\eqref{eqn:LagGenHJDefDist}, then for every $\X \in \{ \X \}$ there exists a multivector field
$\X_\Lag \in \{ \X_\Lag \}$ such that $\X$ is given by
$$
\X = \Lambda^m\Tan\pi^1 \circ \X_\Lag \circ \Psi \, .
$$
\end{corollary}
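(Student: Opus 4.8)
The plan is to derive the desired expression for $\X$ directly from the $\Psi$-relatedness established in Proposition \ref{prop:LagGenHJRelatedDist}, by composing the relation $\X_\Lag \circ \Psi = \Lambda^m\Tan\Psi \circ \X$ with the canonical lift of the projection $\pi^1$. The key observation is that $\pi^1 \circ \Psi = \Id_E$, since $\Psi \in \Gamma(\pi^1)$ is a section of the projection $\pi^1 \colon J^1\pi \to E$.

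First I would invoke Proposition \ref{prop:LagGenHJRelatedDist}: since the jet field $\Psi$ and the distribution $\D$ satisfy condition \eqref{eqn:LagGenHJDefDist}, the distributions $\D$ and $\D_\Lag$ are $\Psi$-related, so for every $\X \in \{\X\}$ there exists $\X_\Lag \in \{\X_\Lag\}$ such that
\begin{equation*}
\X_\Lag \circ \Psi = \Lambda^m\Tan\Psi \circ \X \, .
\end{equation*}
Next I would apply the functor $\Lambda^m\Tan\pi^1$ on the left of both sides and use functoriality of the $m$-th exterior power of the tangent map, namely $\Lambda^m\Tan\pi^1 \circ \Lambda^m\Tan\Psi = \Lambda^m\Tan(\pi^1 \circ \Psi)$. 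Since $\pi^1 \circ \Psi = \Id_E$, we have $\Lambda^m\Tan(\pi^1 \circ \Psi) = \Lambda^m\Tan(\Id_E) = \Id_{\Lambda^m(\Tan E)}$, and therefore the right-hand side collapses to $\X$. This yields precisely
\begin{equation*}
\Lambda^m\Tan\pi^1 \circ \X_\Lag \circ \Psi = \X \, ,
\end{equation*}
which is the claimed formula.

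The main subtlety to watch is the functoriality identity for $\Lambda^m\Tan$, which requires that both compositions be read as maps over $E$ and that the lift of a composition of maps equal the composition of lifts; this is the standard naturality of the tangent and exterior-power functors, valid because $\Tan(g \circ h) = \Tan g \circ \Tan h$ and $\Lambda^m$ respects composition on the morphisms of tangent spaces. A second point requiring care is that the equality $\X_\Lag \circ \Psi = \Lambda^m\Tan\Psi \circ \X$ holds for a matched pair, so the $\X_\Lag$ appearing in the final formula is the specific representative $\Psi$-related to the given $\X$, rather than an arbitrary element of $\{\X_\Lag\}$; this is exactly the existential statement asserted in the corollary. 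I expect no genuine obstacle here, as the result is a direct algebraic consequence of the proposition together with the section property $\pi^1 \circ \Psi = \Id_E$; the only thing to get right is the bookkeeping of the functorial lifts.
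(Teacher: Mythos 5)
Your argument is correct and is essentially identical to the paper's own proof: both invoke Proposition \ref{prop:LagGenHJRelatedDist} to obtain the $\Psi$-related pair, compose on the left with $\Lambda^m\Tan\pi^1$, and use $\Lambda^m\Tan(\pi^1\circ\Psi)=\Id_{\Lambda^m(\Tan E)}$ since $\Psi$ is a section of $\pi^1$. Your remark that the resulting $\X_\Lag$ is the specific representative matched to the given $\X$ is exactly the existential content the corollary asserts.
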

\begin{proof}
Let $\X \in \{ \X \}$ be an arbitrary representative of the equivalence class. From Proposition
\ref{prop:LagGenHJRelatedDist} we know that if $\Psi$ and $\D$ satisfy condition
\eqref{eqn:LagGenHJDefDist}, then there exists a multivector field $\X_\Lag \in \{ \X_\Lag \}$ which
is $\Psi$-related to the given $\X$, that is, $\X_\Lag \circ \Psi = \Lambda^m\Tan\Psi \circ \X$.
Then, composing both sides of this last equality with the map
$\Lambda^m\Tan\pi^1 \colon \Lambda^m(\Tan J^1\pi) \to \Lambda^m(\Tan E)$, and bearing in mind that
$\Psi \in \Gamma(\pi)$, we obtain
$$
\Lambda^m\Tan\pi^1 \circ \X_\Lag \circ \Psi = \Lambda^m\Tan\pi^1 \circ \Lambda^m\Tan\Psi \circ \X
= \Lambda^m\Tan (\pi^1 \circ \Psi) \circ \X = \X \, .
$$
\end{proof}

That is, every multivector field  $\X \in \{ \X \}$ is completely determinated by the jet field
$\Psi \in \Gamma(\pi^1)$ and some multivector field $\X_\Lag \in \{ \X_\Lag \}$, and it is called the
\textsl{multivector field associated to $\Psi$ and $\X_\Lag$}. The diagram illustrating this situation
is the following:
$$
\xymatrix{
J^1\pi \ar[rr]^-{\X_\Lag} \ar[dd]^{\pi^1} & \ & \Lambda^m(\Tan J^1\pi) \ar[dd]_{\Lambda^m\Tan\pi^1} \\
\ & \ & \ \\
E \ar@/^1pc/[uu]^{\Psi} \ar[rr]_-{\X} & \ & \Lambda^m(\Tan E) \ar@/_1pc/[uu]_{\Lambda^m\Tan\Psi}
}
$$
In particular, the distribution $\D$ is completely determinated by the jet field
$\Psi \in \Gamma(\pi^1)$ and the distribution $\D_\Lag$ as
$\D_{\pi^1(\bar{u})} = \Tan_{\bar{u}}\pi^1((\D_\Lag)_{\bar{u}})$ for every
$\bar{u} \in \Im(\Psi) \hookrightarrow J^1\pi$, or
$\D = \Tan\pi^1(\restric{\D_\Lag}{\Im(\Psi)})$, and it is called the
\textsl{distribution associated to $\Psi$}.

\begin{remark}
From Corollary \ref{corol:LagGenHJAssociatedMVF} we deduce that if the jet field $\Psi$ is integrable,
then so is every multivector field $\X \in \{ \X \}$, and hence the distribution $\D$. Moreover,
taking into account the remark in page \pageref{rem:LagGenHJDefJetFields}, we deduce that the jet
field $\Psi$ is integrable if, and only if, the distribution $\D$ is integrable, and they have the
same integral sections. Hence, $\Psi$ and $\D$ are associated in the sense of
\cite{art:Echeverria_Munoz_Roman98} (Section \ref{sec:MultivectorFields}), that is, they define the
same horizontal subbundle of $\Tan E$.
\end{remark}

Taking into account Corollary \ref{corol:LagGenHJAssociatedMVF} it is clear that the search for a
jet field $\Gamma(\pi^1)$ and a distribution $\D$ in $E$ satisfying condition
\eqref{eqn:LagGenHJDefDist} is equivalent to the search of a jet field $\Psi$ satisfying the same
condition with the associated distribution $\Tan\pi^1(\restric{\D_\Lag}{\Im(\Psi)}) \subseteq \Tan E$.
Therefore, we can give the following definition.

\begin{definition}
A \textnormal{solution to the generalized Lagrangian Hamilton-Jacobi problem} is an integrable
jet field $\Psi \in \Gamma(\pi^1)$ such that if $\gamma \in \Gamma(\pi)$ is an integral section
of the $m$-dimensional and integrable distribution $\Tan\pi^1(\restric{\D_\Lag}{\Im(\Psi)})$ in $E$,
then $\Psi \circ \gamma \in \Gamma(\bar{\pi}^1)$ is an integral section of $\D_\Lag$.
\end{definition}

Now we state the following characterizations for a jet field to be a solution to the generalized
Lagrangian Hamilton-Jacobi problem.

\begin{proposition}\label{prop:LagGenHJEquivalences}
Let $\Psi \in \Gamma(\pi^1)$ be an integrable jet field. Then, the following statements are
equivalent.
\begin{enumerate}
\item $\Psi$ is a solution to the generalized Lagrangian Hamilton-Jacobi problem.
\item The distribution $\D_\Lag$ in $J^1\pi$ is tangent to the submanifold
$\Im(\Psi) \hookrightarrow J^1\pi$, that is, $(\D_\Lag)_{\bar{u}} \subseteq \Tan_{\bar{u}}\Im(\Psi)$
for every $\bar{u} \in \Im(\Psi)$.
\item $\Psi$ satisfies the equation
$$
\gamma^*\inn(Y)(\Psi^*\Omega_\Lag) = 0 \, , \ \mbox{for every } Y \in \vf(E) \, ,
$$
where $\gamma \in \Gamma(\pi)$ is an integral section of the associated distribution
$\Tan\pi^1(\restric{\D_\Lag}{\Im(\Psi)})$.
\end{enumerate}
\end{proposition}
\begin{proof} \

\noindent ($1 \Longleftrightarrow 2$)
Assume that $\Psi \in \Gamma(\pi^1)$ is a solution to the generalized Lagrangian Hamilton-Jacobi
problem, and let $\gamma \in \Gamma(\pi)$ be an integral section of the integrable distribution
$\D = \Tan\pi^1(\restric{\D_\Lag}{\Im(\Psi)})$. Then, since $\Psi$ and $\D$ satisfy condition
\eqref{eqn:LagGenHJDefDist} and it is clear that $\Tan_{\bar{u}} \Im(\Psi \circ \gamma) \subseteq
\Tan_{\bar{u}} \Im(\Psi)$ for every $\bar{u} \in \Im(\Psi \circ \gamma)$, we have
$$
(\D_\Lag)_{\bar{u}} = \Tan_{\bar{u}}\Im(\Psi \circ \gamma) \subseteq \Tan_{\bar{u}}\Im(\Psi)
\ \ \forall \ \bar{u} \in \Im(\Psi \circ \gamma) \, .
$$
Finally, since $\Psi$ is integrable, for every $\bar{u} \in \Im(\Psi)$ there exists an integral
section $\gamma \in \Gamma(\pi)$ such that $\bar{u} \in \Im(\Psi \circ \gamma)$, and therefore we
have proved $(\D_\Lag)_{\bar{u}} \subseteq \Tan_{\bar{u}}\Im(\Psi)$ for every $\bar{u} \in \Im(\Psi)$.

For the converse, assume that the distribution $\D_\Lag$ in $J^1\pi$ is tangent to the submanifold
$\Im(\Psi) \hookrightarrow J^1\pi$, that is, we have
$(\D_\Lag)_{\bar{u}} \subseteq \Tan_{\bar{u}}\Im(\Psi)$ for every $\bar{u} \in \Im(\Psi)$,
which is equivalent to $(\D_\Lag)_{\Psi(u)} \subseteq \Tan_{\Psi(u)}\Im(\Psi)$ for every $u \in E$.
We deduce from this that for every $w \in (\D_\Lag)_{\Psi(u)}$ there exists a $v_u \in \Tan_uE$ such
that $w = \Tan_u\Psi(v_u)$. Hence, for every $u \in E$ we define a $m$-dimensional subspace
$\D_u \subseteq \Tan_uE$ as follows
$$
\D_u = \left\{ v \in \Tan_u E \mid w = \Tan_u\Psi(v) \, , \ w \in (\D_\Lag)_{\Psi(u)} \right\} \, .
$$
Then, we define the $m$-dimensional distribution $\D$ in $E$ as $\D = \bigcup_{u \in E}\D_u$. It is clear
that $\D$ is a smooth and integrable distribution, since it satisfies
$\D = \Tan\pi^1(\restric{\D_\Lag}{\Im(\Psi)})$, and both $\D_\Lag$ and $\Psi$ are smooth and integrable.
Now, let $\gamma \in \Gamma(\pi)$ be an integral section of $\D$. Then, by definition of $\D$, the
condition for $\gamma$ to be an integral section of $\D$ gives
$$
\Tan_u\Im(\gamma) = \Tan_{\Psi(u)}\pi^1((\D_\Lag)_{\Psi(u)}) \ \ \forall u \in \Im(\gamma) \, .
$$
Composing this equality with the map $\Tan_{u} \Psi \colon \Tan_u E \to \Tan_{\Psi(u)} J^1\pi$,
and bearing in mind that $\Psi \circ \pi^1 = \Id_{\Im(\Psi)}$, we have
$$
(\D_\Lag)_{\Psi(u)} = \Tan_u\Psi(\Tan_u\Im(\gamma)) = \Tan_{\Psi(u)}\Im(\Psi \circ \gamma)
\ \ \forall u \in \Im(\gamma) \, ,
$$
which is clearly equivalent to
$$
(\D_\Lag)_{\bar{u}} = \Tan_{\bar{u}}\Im(\Psi \circ \gamma) \ \ \forall \bar{u} \in \Im(\Psi \circ \gamma) \, ,
$$
that is, to condition \eqref{eqn:LagGenHJDefDist}. Thus, $\Psi$ is a solution to the generalized
Lagrangian Hamilton-Jacobi problem.

\noindent ($1 \Longleftrightarrow 3$)
Let $\Psi \in \Gamma(\pi^1)$ be a jet field solution to the generalized Lagrangian Hamilton-Jacobi
problem, and $\gamma \in \Gamma(\pi)$ an integral section of $\Tan\pi^1(\restric{\D_\Lag}{\Im(\Psi)})$.
Then the section $\Psi \circ \gamma \in \Gamma(\bar{\pi}^1)$ is an integral section of $\D_\Lag$.
In particular, $\Psi \circ \gamma$ is a solution to equation \eqref{eqn:LagFieldEqSect}, that is,
$$
(\Psi \circ \gamma)^*\inn(X)\Omega_\Lag = 0 \, , \ \mbox{for every } X \in \vf(J^1\pi) \, .
$$
Calculating, we have
$$
(\Psi \circ \gamma)^*\inn(X)\Omega_\Lag = \gamma^*(\Psi^*\inn(X)\Omega_\Lag) =
\gamma^*\inn(Y)\Psi^*\Omega_\Lag \, ,
$$
where $Y \in \vf(E)$ is a vector field $\Psi$-related with $X$. Nevertheless, since equation
\eqref{eqn:LagFieldEqSect} holds for every vector field in $J^1\pi$, we have proved
$$
\gamma^*\inn(Y)(\Psi^*\Omega_\Lag) = 0 \, , \ \mbox{for every } Y \in \vf(E) \, .
$$

For the converse, let $\gamma \in \Gamma(\pi)$ be an integral section of the distribution
$\Tan\pi^1(\restric{\D_\Lag}{\Im(\Psi)})$, and $\Psi \in \Gamma(\pi^1)$ an integrable jet field.
By the hypothesis we have that $\gamma^*\inn(Y)(\Psi^*\Omega_\Lag) = 0$ for every $Y \in \vf(E)$.
Then, computing, we have
$$
\gamma^*\inn(Y)(\Psi^*\Omega_\Lag) = \gamma^*(\Psi^*\inn(X)\Omega_\Lag =
(\Psi \circ \gamma)^*\inn(X)\Omega_\Lag \, ,
$$
where $X \in \vf(J^1\pi)$ is a vector field $\Psi$-related to $Y$. In particular, $X$ is tangent to
$\Im(\Psi) \hookrightarrow J^1\pi$ by construction, and we have proved that
$\Psi \circ \gamma \in \Gamma(\bar{\pi}^1)$ is a solution to equation
$$
(\Psi \circ \gamma)^*\inn(X)\Omega_\Lag = 0 \, ,
\ \mbox{for every } X \in \vf(J^1\pi) \mbox{ tangent to } \Im(\Psi) \, ,
$$
that is, to equation \eqref{eqn:LagFieldEqSectWithHJSolution}, which is equivalent to equation
\eqref{eqn:LagFieldEqSect} for a section of the form $\Psi \circ \gamma$. Therefore, the section
$\Psi \circ \gamma \in \Gamma(\bar{\pi}^1)$ is an integral section of $\D_\Lag$, and therefore
$\Psi$ is a solution to the generalized Lagrangian Hamilton-Jacobi problem.
\end{proof}

A straightforward consequence of the above result is the following.

\begin{corollary}\label{corol:LagGenHJProjectedSections}
Let $\Psi \in \Gamma(\pi^1)$ be an integrable jet field solution to the generalized Lagrangian
Hamilton-Jacobi problem. Then the integral sections of $\D_\Lag$ with boundary conditions in
$\Im(\Psi)$ project to the integral sections of $\D = \Tan\pi^1(\restric{\D_\Lag}{\Im(\Psi)})$.
\end{corollary}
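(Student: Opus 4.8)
The plan is to read this corollary as an immediate unpacking of the equivalence $(1 \Longleftrightarrow 2)$ and the definition of a solution, so almost all the work is already done. Let $\sigma \in \Gamma(\bar{\pi}^1)$ be an integral section of $\D_\Lag$ with boundary conditions in $\Im(\Psi)$; concretely, there is a point $\bar{u}_0 \in \Im(\Psi)$ with $\bar{u}_0 \in \Im(\sigma)$. First I would invoke item $2$ of Proposition \ref{prop:LagGenHJEquivalences}, which guarantees that $\D_\Lag$ is tangent to $\Im(\Psi)$, i.e. $(\D_\Lag)_{\bar{u}} \subseteq \Tan_{\bar{u}}\Im(\Psi)$ for every $\bar{u} \in \Im(\Psi)$. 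Since $\sigma$ is an integral section, its tangent spaces coincide with $\D_\Lag$ along $\Im(\sigma)$; the tangency just established then forces $\Im(\sigma)$ to stay inside the submanifold $\Im(\Psi) \hookrightarrow J^1\pi$ on the connected component through $\bar{u}_0$. This is the step I expect to carry the most geometric content: one must argue that an integral leaf of an integrable distribution which is tangent to a submanifold and meets it cannot leave it. The cleanest justification is that $\Im(\Psi)$, being the image of the section $\Psi$ of $\pi^1$, is an embedded submanifold diffeomorphic to $E$ via $\pi^1$, and $\D_\Lag|_{\Im(\Psi)}$ is an integrable sub-distribution of $\Tan\Im(\Psi)$; hence the integral section through $\bar{u}_0 \in \Im(\Psi)$ is the leaf of $\D_\Lag|_{\Im(\Psi)}$ through that point and thus lies entirely in $\Im(\Psi)$.

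Once $\Im(\sigma) \subseteq \Im(\Psi)$ is secured, the remaining steps are purely formal. Because $\sigma$ is an integral section of $\D_\Lag$ and $\Psi$ is a solution, Remark \ref{rem:LagGenHJDefJetFields} tells us that every integral section of $\D_\Lag$ is the prolongation of a section of $\pi$, so $\sigma = j^1\gamma$ for some $\gamma \in \Gamma(\pi)$; projecting, $\gamma = \pi^1 \circ \sigma = \bar{\pi}^1 \circ \sigma$ composed appropriately yields the candidate projected section. Next I would set $\gamma := \pi^1 \circ \sigma$ and verify that $\Psi \circ \gamma = \sigma$: indeed, on $\Im(\Psi)$ one has $\Psi \circ \pi^1 = \Id_{\Im(\Psi)}$, and since $\Im(\sigma) \subseteq \Im(\Psi)$ this identity applies pointwise along $\sigma$, giving $\Psi \circ \pi^1 \circ \sigma = \sigma$, i.e. $\Psi \circ \gamma = \sigma$.

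Finally I would confirm that this $\gamma$ is an integral section of the associated distribution $\D = \Tan\pi^1(\restric{\D_\Lag}{\Im(\Psi)})$. Applying $\Tan\pi^1$ to the tangent spaces of $\sigma = \Psi \circ \gamma$ along $\Im(\sigma)$, and using that $\D_\Lag$ restricted to $\Im(\Psi)$ equals $\Tan\Im(\Psi \circ \gamma)$ there (this is exactly condition \eqref{eqn:LagGenHJDefDist} read backwards, available through the equivalence $(1 \Longleftrightarrow 2)$), one gets
$$
\Tan_u \Im(\gamma) = \Tan_u \pi^1\big((\D_\Lag)_{\Psi(u)}\big) = \D_u
$$
for every $u \in \Im(\gamma)$, which is precisely the statement that $\gamma$ is an integral section of $\D$. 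Thus the integral sections of $\D_\Lag$ with boundary conditions in $\Im(\Psi)$ project via $\pi^1$ onto integral sections of $\D$, as claimed. The only genuinely non-routine point, as noted, is the invariance argument showing that such integral sections never leave $\Im(\Psi)$; everything else is bookkeeping with $\Psi \circ \pi^1 = \Id_{\Im(\Psi)}$ and the already-proved equivalences.
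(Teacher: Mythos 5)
Your proposal is correct and follows essentially the same route as the paper: invoke the tangency of $\D_\Lag$ to $\Im(\Psi)$ from Proposition \ref{prop:LagGenHJEquivalences}, conclude that an integral section with boundary conditions in $\Im(\Psi)$ stays in $\Im(\Psi)$, and then push the tangent spaces forward with $\Tan\pi^1$ using $\Psi\circ\pi^1=\restric{\Id}{\Im(\Psi)}$. The only difference is that you spell out the leaf-uniqueness argument for why the integral section cannot leave $\Im(\Psi)$, a step the paper's proof states without justification.
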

\begin{proof}
Let $\Psi \in \Gamma(\pi^1)$ be a solution to the generalized Lagrangian Hamilton-Jacobi problem,
and $\psi_\Lag \in \Gamma(\bar{\pi}^1)$ an integral section of $\D_\Lag$ with boundary conditions
in $\Im(\Psi)$. Then, since the distribution $\D_\Lag$ is tangent to $\Im(\Psi)$ by Proposition
\ref{prop:LagGenHJEquivalences}, we have that $\Im(\psi_\Lag) \subseteq \Im(\Psi)$, and hence
$$
\Tan_{u}\Im(\pi^1 \circ \psi_\Lag) = \Tan_{\Psi(u)}\pi^1 (\Tan_{\Psi(u)}\Im(\psi_\Lag))
= \Tan_{\Psi(u)}\pi^1((\D_\Lag)_{\Psi(u)}) = \D_u \, ,
$$
where we have used that $\Im(\psi_\Lag) \subseteq \Im(\Psi)$ and
$\Psi \circ \pi^1 = \restric{\Id}{\Im(\Psi)}$.
\end{proof}

\noindent\textbf{Coordinate expression.}
Let $(x^i)$, $1 \leqslant i \leqslant m$, be local coordinates in $M$ such that
$\eta = \d^mx = \d x^1 \wedge \ldots \wedge \d x^m$, and $(x^i,u^\alpha)$,
$1 \leqslant \alpha \leqslant n$ local coordinates in $E$ adapted to the bundle structure. Then,
the induced coordinates in $J^1\pi$ are $(x^i,u^\alpha,u_i^\alpha)$, which coincide with the
local coordinates adapted to the bundle structure $\pi^1 \colon J^1\pi \to E$. In these coordinates,
a jet field $\Psi \in \Gamma(\pi^1)$ is given locally by $\Psi(x^i,u^\alpha) = (x^i,u^\alpha,\psi^\alpha_i)$,
where $\psi^\alpha_i(x^i,u^\alpha)$ are local smooth functions on $E$.

Let us compute the local condition for a jet field $\Psi \in \Gamma(\pi^1)$ to be a solution to the
generalized Lagrangian Hamilton-Jacobi problem. From Proposition \ref{prop:LagGenHJEquivalences}
we know that this is equivalent to require the distribution $\D_\Lag$ in $J^1\pi$ to be tangent to
the submanifold $\Im(\Psi) \hookrightarrow J^1\pi$, or, in terms of the class of multivector fields
$\{ \X_\Lag \} \subseteq \vf^{m}(J^1\pi)$ associated to $\D_\Lag$, to require every multivector
field in the class to be tangent to $\Im(\Psi)$.
From \cite{art:Echeverria_Munoz_Roman98} we know that a representative $\X_\Lag \in \{ \X_\Lag \}$
which is locally decomposable, $\bar{\pi}^1$-transverse and semi-holonomic may be chosen to have the
following coordinate expression
$$
\X_\Lag = \bigwedge_{j=1}^{m} \left( \derpar{}{x^j} + u_j^\alpha\derpar{}{u^\alpha}
+ F_{j,i}^\alpha \derpar{}{u_i^\alpha} \right) \, ,
$$
where the functions $F_{j,i}^\alpha$ are the solutions to the Euler-Lagrange equations
\begin{equation}\label{eqn:LagFieldEqLocal}
\derpar{L}{u^\alpha} - \derpars{L}{u_i^\alpha}{x^i} - u_i^\beta\derpars{L}{u_i^\alpha}{u^\beta}
-F_{j,i}^\beta\derpars{L}{u_i^\alpha}{u_j^\beta} = 0 \, ,
\end{equation}
in addition to the integrability conditions \eqref{eqn:Integrability&SemiHolonomicity} (if necessary).
Observe that every multivector field in the class $\{ \X_\Lag \}$ is obtained by multiplying this
representative by an arbitrary non-vanishing function $f \in \Cinfty(J^1\pi)$. Then, bearing in mind
that the submanifold $\Im(\Psi) \hookrightarrow J^1\pi$ is locally defined by the $mn$ constraints
$\psi_k^\beta - u_k^\beta = 0$, the condition for this particular $\X_\Lag$ to be tangent to
$\Im(\Psi)$ gives the following partial differential equations
\begin{equation}\label{eqn:LagGenHJLocal}
\derpar{\psi_k^\beta}{x^j} + u_j^\alpha\derpar{\psi_k^\beta}{u^\alpha} - \restric{F_{j,k}^\beta}{\Im(\Psi)} = 0 \, .
\end{equation}
This is a system of $nm^2$ partial differential equations with $nm$ unknown functions $\psi_k^\beta$,
that is, we have more equations than unknown functions.

\begin{remark}
Recall that the $n$ equations \eqref{eqn:LagFieldEqLocal} do not enable us to determinate all the
$m^2n$ coefficient functions $F_{i,j}^\alpha$, and, in general, there are $n(m^2-1)$ arbitrary
functions. Therefore, equations \eqref{eqn:LagGenHJLocal} may fix not only the coefficients
$\psi^\beta_k$ of the jet field $\Psi$, but also some of the remaining functions $F_{i,j}^\alpha$
of the Euler-Lagrange multivector fields which are solutions to the field equation
\eqref{eqn:LagFieldEqMVF}. In this way, we have a system of $m^2n$ partial differential
equations with $n(m^2 + m -1)$ unknown functions. Note that, even in the most favorable cases, there
still are $n(m-1)$ arbitrary functions to be determined, which may be fixed by the integrability
condition \eqref{eqn:Integrability&SemiHolonomicity} or not.
\end{remark}

\begin{remark}
On time-dependent mechanics, that is, for $m=1$ we obtain exactly $n$ partial differential equations
and $n$ unknown functions, since there are no arbitrary functions on the vector field solution
to the Lagrangian dynamical equation.
\end{remark}

\subsection{The Lagrangian Hamilton-Jacobi problem}

As in mechanics (see
\cite{art:Carinena_Gracia_Marmo_Martinez_Munoz_Roman06,art:Colombo_DeLeon_Prieto_Roman14_JPA}),
to solve the generalized Lagrangian Hamilton-Jacobi problem is, in general, a very difficult
task, since it amounts to find $(m+n)$-dimensional submanifolds of $J^1\pi$ such that the
$m$-dimensional distribution $\D_\Lag$ is tangent to them. Because of this, we impose an additional
condition on the jet field $\Psi \in \Gamma(\pi^1)$ in order to consider a less general problem.

\begin{definition}\label{def:LagHJDef}
The \textnormal{Lagrangian Hamilton-Jacobi problem} consists in finding a jet field
$\Psi \in \Gamma(\pi^1)$ solution to the generalized Lagrangian Hamilton-Jacobi problem
satisfying that $\Psi^*\Omega_\Lag = 0$. Such a jet field is called a \textnormal{solution to
the Lagrangian Hamilton-Jacobi problem}.
\end{definition}

With this new assumption we can state the following result, which is a straightforward
consequence of Proposition \ref{prop:LagGenHJEquivalences}, Corollary
\ref{corol:LagGenHJProjectedSections} and the results in \cite{art:Cantrijn_Ibort_DeLeon99}.

\begin{proposition}\label{prop:LagHJEquivalences}
Let $\Psi \in \Gamma(\pi^1)$ be an integrable jet field satisfying $\Psi^*\Omega_\Lag = 0$.
Then the following statements are equivalent:
\begin{enumerate}
\item $\Psi$ is a solution to the Lagrangian Hamilton-Jacobi problem.
\item The submanifold $\Im(\Psi) \hookrightarrow J^1\pi$ is $m$-Lagrangian and the distribution
$\D_\Lag$ is tangent to it.
\item The integral sections of $\D_\Lag$ with boundary conditions in $\Im(\Psi)$ project onto
the integral sections of $\D = \Tan\pi^1(\restric{\D_\Lag}{\Im(\Psi)})$.
\end{enumerate}
\end{proposition}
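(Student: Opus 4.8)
The plan is to read all three equivalences off the results already established for the generalized problem, using the standing hypothesis $\Psi^*\Omega_\Lag = 0$ to absorb the extra $m$-Lagrangian clause. The key initial observation is that, by Definition \ref{def:LagHJDef}, once $\Psi^*\Omega_\Lag = 0$ is assumed, statement (1) is \emph{by definition} nothing more than the assertion that $\Psi$ solves the generalized Lagrangian Hamilton-Jacobi problem. Thus I would not prove a cycle but rather establish the two chains $(1)\Leftrightarrow(2)$ and $(1)\Leftrightarrow(3)$ separately, in each case matching the stated condition against Proposition \ref{prop:LagGenHJEquivalences} or Corollary \ref{corol:LagGenHJProjectedSections}.

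For $(1)\Leftrightarrow(2)$ I would split statement (2) into its two clauses. The tangency clause, that $\D_\Lag$ is tangent to $\Im(\Psi)$, is exactly the equivalence $(1)\Leftrightarrow(2)$ of Proposition \ref{prop:LagGenHJEquivalences} for the generalized problem, hence equivalent to (1) with no further work. It then remains to see that, under the standing hypothesis, the $m$-Lagrangian clause is automatic and so imposes nothing new. For this I would invoke the characterization of $(k-1)$-isotropic submanifolds from Section \ref{sec:Background} with $k = m+1$: since $\Im(\Psi)$ is the image of the embedding $\Psi$ and $\Psi^*\Omega_\Lag = 0$, the submanifold $\Im(\Psi)$ is $m$-isotropic. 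To upgrade $m$-isotropic to $m$-Lagrangian I would use $\dim\Im(\Psi) = \dim E = m+n$ together with the results of \cite{art:Cantrijn_Ibort_DeLeon99}, which guarantee that an $m$-isotropic submanifold of this dimension in the multisymplectic manifold $(J^1\pi,\Omega_\Lag)$ is maximal, hence $m$-Lagrangian. This closes $(1)\Leftrightarrow(2)$.

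For $(1)\Leftrightarrow(3)$, the implication $(1)\Rightarrow(3)$ is essentially Corollary \ref{corol:LagGenHJProjectedSections}: a solution $\Psi$ makes $\D_\Lag$ tangent to $\Im(\Psi)$, so integral sections of $\D_\Lag$ issuing from $\Im(\Psi)$ remain in $\Im(\Psi)$ and project to integral sections of $\D = \Tan\pi^1(\restric{\D_\Lag}{\Im(\Psi)})$, while integrability of $\Psi$ supplies the reverse lift $\gamma\mapsto\Psi\circ\gamma$ that turns ``project to'' into ``project onto''. For $(3)\Rightarrow(1)$ I would reconstruct tangency: given $\bar{u}\in\Im(\Psi)$, integrability of $\D_\Lag$ yields an integral section $\psi_\Lag$ through $\bar{u}$ with boundary condition in $\Im(\Psi)$, which is holonomic, so $\psi_\Lag = j^1\gamma$ with $\gamma = \pi^1\circ\psi_\Lag$; by (3), $\gamma$ is an integral section of $\D$, and reading ``onto'' as a $\Psi$-correspondence forces $\psi_\Lag = \Psi\circ\gamma$, so $\Im(\psi_\Lag)\subseteq\Im(\Psi)$ and $(\D_\Lag)_{\bar{u}} = \Tan_{\bar{u}}\Im(\psi_\Lag)\subseteq\Tan_{\bar{u}}\Im(\Psi)$. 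As $\bar{u}$ is arbitrary this is the tangency condition, giving (1) by Proposition \ref{prop:LagGenHJEquivalences}.

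I expect the main obstacle to be the step $m$-isotropic $\Rightarrow$ $m$-Lagrangian: the rest is a bookkeeping translation of the generalized results, but the maximality of $\Im(\Psi)$ among $m$-isotropic submanifolds is a genuine dimensional statement about the canonical multisymplectic structure and is precisely where \cite{art:Cantrijn_Ibort_DeLeon99} enters. A secondary and more delicate point is the converse $(3)\Rightarrow(1)$: one must ensure that the lift $\Psi\circ\gamma$ of the projected section is genuinely the integral section of $\D_\Lag$ through $\bar{u}$, so that uniqueness of integral sections forces it to coincide with $\psi_\Lag$; establishing this cleanly requires the $\bar{\pi}^1$-transversality of $\D_\Lag$ (so that $\Tan\pi^1$ is injective on $(\D_\Lag)_{\bar{u}}$) together with the uniqueness of integral sections of the integrable distribution $\D_\Lag$, rather than an attempt to verify tangency directly, which would be circular.
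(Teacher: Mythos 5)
Your overall decomposition is exactly the one the paper intends: the paper states this proposition without proof, declaring it a straightforward consequence of Proposition \ref{prop:LagGenHJEquivalences}, Corollary \ref{corol:LagGenHJProjectedSections} and the results in \cite{art:Cantrijn_Ibort_DeLeon99}, and your argument assembles precisely those three ingredients. In particular your treatment of $(1)\Leftrightarrow(2)$ is right: the tangency clause is item 2 of Proposition \ref{prop:LagGenHJEquivalences}; the hypothesis $\Psi^*\Omega_\Lag=0$ makes $\Im(\Psi)$ $m$-isotropic by the Lemma of Section \ref{sec:Background} applied with $k=m+1$; and the promotion of the $(m+n)$-dimensional isotropic submanifold $\Im(\Psi)$ to an $m$-Lagrangian one is the dimensional maximality statement that the paper delegates to \cite{art:Cantrijn_Ibort_DeLeon99}.

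The one step that does not hold up as written is $(3)\Rightarrow(1)$. You propose to show that $\Psi\circ\gamma$ is ``genuinely the integral section of $\D_\Lag$ through $\bar{u}$'' and then invoke uniqueness of integral sections to force $\psi_\Lag=\Psi\circ\gamma$; but the assertion that $\Psi\circ\gamma$ is an integral section of $\D_\Lag$ for every integral section $\gamma$ of $\D$ is verbatim condition \eqref{eqn:LagGenHJDefDist}, i.e.\ statement (1) itself, so the uniqueness argument presupposes exactly what it is meant to prove. Neither $\bar{\pi}^1$-transversality nor injectivity of $\Tan\pi^1$ on $(\D_\Lag)_{\bar{u}}$ rescues this: they identify $(\D_\Lag)_{\bar{u}}$ with $\D_u$ via $\Tan_{\bar{u}}\pi^1$, but to conclude $(\D_\Lag)_{\bar{u}}=\Tan_u\Psi(\D_u)$ one still needs $(\D_\Lag)_{\bar{u}}\subseteq\Tan_{\bar{u}}\Im(\Psi)$, which is again the tangency being established. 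The non-circular route is the one you gesture at and then abandon: read item (3), as Corollary \ref{corol:LagGenHJProjectedSections} and its Hamiltonian analogue implicitly do, as asserting the correspondence $\psi_\Lag\leftrightarrow\Psi\circ\gamma$ between integral sections of $\D_\Lag$ with boundary conditions in $\Im(\Psi)$ and integral sections of $\D$ (the word ``onto'' encoding that every integral section of $\D$ arises by projection and lifts back through $\Psi$). Under that reading (3) states directly that $\Psi\circ\gamma$ is an integral section of $\D_\Lag$ whenever $\gamma$ is one of $\D$, and $(3)\Rightarrow(1)$ is immediate from Definition \ref{def:LagGenHJDef}. Either commit to that reading or supply an independent argument; the hybrid in your closing paragraph is circular.
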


\noindent\textbf{Coordinate expression.}
In coordinates, we have
\begin{align*}
\Psi^*\Omega_\Lag &= \left( \derpars{L}{u_i^\alpha}{u^\beta} + \derpars{L}{u_i^\alpha}{u_k^\delta}\derpar{\psi_k^\delta}{u^\beta}
\right) \d u^\alpha \wedge \d u^\beta \wedge \d^{m-1}x_i \\
&\quad{} + \left( \derpars{L}{u_i^\alpha}{u_k^\beta} \derpar{\psi_k^\beta}{x^i} + \psi_i^\beta \derpars{L}{u_i^\beta}{u^\alpha}
- \derpar{L}{u^\alpha} + \psi_i^\delta\derpars{L}{u_i^\delta}{u_k^\beta}\derpar{\psi_k^\beta}{u^\alpha} \right) \d u^\alpha \wedge \d^mx \, .
\end{align*}
Hence, the condition $\Psi^*\Omega_\Lag = 0$ gives the following system of $n(1 + m(n-1))$ partial differential equations
$$
\derpars{L}{u_i^\alpha}{u^\beta} + \derpars{L}{u_i^\alpha}{u_k^\delta}\derpar{\psi_k^\delta}{u^\beta} = 0 \quad ; \quad
\derpars{L}{u_i^\alpha}{u_k^\beta} \derpar{\psi_k^\beta}{x^i} + \psi_i^\beta \derpars{L}{u_i^\beta}{u^\alpha}
- \derpar{L}{u^\alpha} + \psi_i^\delta\derpars{L}{u_i^\delta}{u_k^\beta}\derpar{\psi_k^\beta}{u^\alpha} = 0 \, ,
$$
where $1 \leqslant i \leqslant m$, $1 \leqslant \alpha < \beta \leqslant n$ in the first set, and
$1 \leqslant \alpha \leqslant n$ in the second. These two sets of equations may be combined to
obtain the following system 
\begin{equation}\label{eqn:LagHJLocal}
\derpars{L}{u_i^\alpha}{u^\beta} + \derpars{L}{u_i^\alpha}{u_k^\delta}\derpar{\psi_k^\delta}{u^\beta} = 0 \quad ; \quad
\derpars{L}{u_i^\alpha}{u_k^\beta} \derpar{\psi_k^\beta}{x^i} - \derpar{L}{u^\alpha} = 0 \, .
\end{equation}

On the other hand, observe that since $\Psi^*\Omega_\Lag = -\d(\Psi^*\Theta_\Lag)$, the condition
$\Psi^*\Omega_\Lag = 0$ is equivalent to requiring the $m$-form $\Psi^*\Theta_\Lag \in \df^{m}(E)$
to be closed. In particular, using Poincar\'{e}'s Lemma, the $m$-form $\Psi^*\Theta_\Lag$ is
locally exact, that is, there exists a $(m-1)$-form $\omega \in \df^{m-1}(U)$, with $U \subseteq E$
an open set, such that $\d\omega = \Psi^*\Theta_\Lag$. Moreover, since $\Theta_\Lag$ is
$\pi^1$-semibasic, so is $\Psi^*\Theta_\Lag$, and therefore $\omega$ must be $\pi$-semibasic.
In coordinates, bearing in mind the coordinate expression \eqref{eqn:CartanMFormLocal} of the
Cartan $m$-form, we obtain
$$
\Psi^*\Theta_\Lag = \derpar{L}{u_i^\alpha} \d u^\alpha \wedge \d^{m-1}x_i
- \left( \derpar{L}{u_i^\alpha} \psi_i^\alpha - L \right) \d^mx \, .
$$
In addition, the coordinate expression for a generic $\pi$-semibasic local $(m-1)$-form $\omega$ in $E$ is
$$
\omega = W^i \d^{m-1}x_i \, ,
$$
where $W^i \in \Cinfty(E)$ are local functions. From this we deduce the local expression of the
$m$-form $\d\omega$, which is
$$
\d\omega = \sum_{i=1}^{m}\derpar{W^i}{x^i} \, \d^{m}x
+ \derpar{W^i}{u^\alpha} \, \d u^\alpha \wedge \d^{m-1}x_i \, .
$$
Finally, requiring $\d\omega = \Psi^*\Theta_\Lag$, we obtain
$$
\sum_{i=1}^{m}\derpar{W^i}{x^i} + \psi_i^\alpha\restric{\derpar{L}{u_i^\alpha}}{\Im(\Psi)} - L(x^i,u^\alpha,\psi^\alpha_i) = 0 \quad ; \quad
\derpar{W^i}{u^\alpha} = \restric{\derpar{L}{u_i^\alpha}}{\Im(\Psi)} \, ,
$$
which may be combined to give the equation
\begin{equation}\label{eqn:LagHJEq}
\sum_{i=1}^{m}\derpar{W^i}{x^i} + \psi_i^\alpha\derpar{W^i}{u^\alpha} - L(x^i,u^\alpha,\psi^\alpha_i) = 0 \, ,
\end{equation}
which is the Hamilton-Jacobi equation in the Lagrangian formalism.

\subsection{Complete solutions}
\label{sec:LagHJCompleteSol}

In the above Sections we stated the Hamilton-Jacobi problem in the Lagrangian formalism, and a
jet field $\Psi \in \Gamma(\pi^1)$ solution to this problem gives a particular solution to the
Lagrangian problem in the form of a submanifold the phase space $J^1\pi$. Nevertheless, this
is not a complete solution to the Lagrangian problem, since only the integral sections of the
distribution $\D_\Lag$ with boundary conditions in $\Im(\Psi)$ can be recovered from the solution
to the Hamilton-Jacobi problem.

Hence, in order to obtain a complete solution to the problem, we need to endow the phase space
$J^1\pi$ with a foliation such that every leaf is the image set of a jet field solution to the
Lagrangian Hamilton-Jacobi problem. The precise definition is:
 
\begin{definition}
A \textnormal{complete solution to the Lagrangian Hamilton-Jacobi problem} is a local diffeomorphism
$\Phi \colon U \times E \to J^1\pi$, with $U \subseteq \R^{mn}$ an open set, such that for every
$\lambda \in U$, the map $\Psi_\lambda(\bullet) \equiv \Phi(\lambda,\bullet) \colon E \to J^1\pi$
is a jet field in $E$ solution to the Lagrangian Hamilton-Jacobi problem.
\end{definition}

\begin{remark}
An alternative, but equivalent, definition of a complete solution consists in giving the full set
of jet fields $\{ \Psi_\lambda \in \Gamma(\pi^1) \mid \lambda \in U \subseteq \R^{mn} \}$ depending
on $mn$ parameters, instead of the local diffeomorphism $\Phi$.
\end{remark}

From the definition we deduce that a complete solution endows the Lagrangian phase space $J^1\pi$
with a foliation transverse to the fibers such that every leaf has dimension $m+n$ and the
distribution $\D_\Lag$ is tangent to it.

It follows from this last comment that a complete solution to the Lagrangian Hamilton-Jacobi problem
enables us to recover every integral section of the distribution $\D_\Lag$ solution to the Lagrangian
problem, that is, we can recover every section solution to the Euler-Lagrange equations for classical
field theories. In particular, let $\Phi$ be a complete solution, and let us consider the following
set of distributions in $E$:
$$
\left\{ \D_\lambda = \Tan\pi^1(\restric{\D_\Lag}{\Im(\Psi_\lambda)}) \subseteq \Tan E \mid
\lambda \in U \subseteq \R^{mn} \right\} \, ,
$$
where $\Psi_\lambda(\bullet) \equiv \Phi(\lambda,\bullet)$. Then, the integral sections of $\D_\lambda$,
for different values of the parameter $\lambda \in U$, provide all the integral sections of the
distribution $\D_\Lag$ solution to the Lagrangian problem. Indeed, let $j^1_x\phi \in J^1\pi$ be a
point, and let us denote $u = \phi(x) = \pi^1(j^1_x\phi)$. Then, since $\Phi$ is a complete solution,
there exists $\lambda_o \in U$ such that $\Phi(\lambda_o,u) \equiv \Psi_{\lambda_o}(u) = j_x^1\phi$,
and the integral sections of $\D_{\lambda_o}$ through $u$, composed with $\Psi_{\lambda_o}$, give the
integral sections of $\D_\Lag$ through $j_x^1\phi$.

\section{The Hamilton-Jacobi problem in the Hamiltonian formalism}
\label{sec:HamiltonianFormalism}

As in the Lagrangian formalism, the configuration bundle in the Hamiltonian formulation for
multisymplectic classical field theories is a bundle $\pi \colon E \to M$, where $M$ is a
$m$-dimensional orientable manifold with fixed volume form $\eta \in \df^{m}(M)$, and $\dim E = m + n$.
Two phase spaces are considered in this formulation: the extended multimomentum bundle
$\Lambda^m_2(\Tan^*E)$ and the restricted multimomentum bundle $J^1\pi^*$ introduced in Section
\ref{geomset}.

The physical information is given in terms of a \textsl{Hamiltonian section} $h \in \Gamma(\mu)$,
which is specified by a \textsl{local Hamiltonian function} $H \in \Cinfty(J^1\pi^*)$, that is,
we have $h(x^i,u^\alpha,p_\alpha^i) = (x^i,u^\alpha,-H,p_\alpha^i)$. Then, from the canonical
forms in $\Lambda^m_2(\Tan^*E)$, and using this Hamiltonian section, the Hamilton-Cartan forms
$\Theta_h = h^*\Theta \in \df^{m}(J^1\pi^*)$ and
$\Omega_h = h^*\Omega = -\d\Theta_h \in \df^{m+1}(J^1\pi^*)$ are constructed, with coordinate expressions
$$
\Theta_h = p_\alpha^i\d u^\alpha \wedge \d^{m-1}x_i - H \d^mx \quad ; \quad
\Omega_h = -\d p_\alpha^i \wedge \d u^\alpha \wedge \d^{m-1}x_i + \d H \wedge \d^{m}x \, .
$$

Then, the Hamiltonian problem for a first-order classical field theory is the following: \textit{to
find $m$-dimensional, $\bar{\pi}_E^r$-transverse and integrable distribution $\D_h$ in $J^1\pi^*$ such
that the integral sections $\psi_h \in \Gamma(\bar{\pi}_E^r)$ of $\D_h$ are solutions to the field
equation}
\begin{equation}\label{eqn:HamFieldEqSect}
\psi_h^*\inn(X)\Omega_h = 0 \, , \ \mbox{\textit{for every} } X \in \vf(J^1\pi^*) \, .
\end{equation}
Contrary to the Lagrangian formalism for classical field theories, the $(m+1)$-form
$\Omega_h \in \df^{m+1}(J^1\pi^*)$ is multisymplectic regardless of the Hamiltonian section
$h \in \Gamma(\mu)$ provided, in the same way that it occurs in Classical Mechanics in the Lagrangian
and Hamiltonian formalisms. Therefore, there exists such a distribution $\D_h$, although it is
not necessarily integrable. In the following we assume that the distribution $\D_h$ is integrable.

From the results in Section \ref{sec:MultivectorFields}, this distribution $\D_h$ is associated
with a class of integrable and $\bar{\pi}_E^r$-transverse multivector fields
$\{ \X_h \} \subseteq \vf^{m}(J^1\pi^*)$ satisfying
\begin{equation}\label{eqn:HamFieldEqMVF}
\inn(\X_h)\Omega_h = 0 \, , \ \mbox{for every } \X_h \in \left\{ \X_h \right\} \, .
\end{equation}
Same comments apply: since the $(m+1)$-form $\Omega_h$ is $1$-nondegenerate a
$\bar{\pi}_E^r$-transverse solution to equation \eqref{eqn:HamFieldEqMVF} does exist, but it may
not be integrable. In the following we assume that every multivector field in the class $\{ \X_h \}$
is integrable.

(For more details on the Hamiltonian formalism of field theories see, for instance,
\cite{art:Carinena_Crampin_Ibort91,EMR-99b,GIMMSY-mm,Krva-2002,LMM-96,art:Roman09}).

\subsection{The generalized Hamiltonian Hamilton-Jacobi problem}
\label{sec:HamGenHJProblem}

Following the patterns in \cite{art:Carinena_Gracia_Marmo_Martinez_Munoz_Roman06} and in previous
Sections, we first state a generalized version of the Hamilton-Jacobi problem in the Hamiltonian
formalism.

\begin{definition}
The \textnormal{generalized Hamiltonian Hamilton-Jacobi problem} consists in finding a section
$s \in \Gamma(\pi_E^r)$ and a $m$-dimensional and integrable distribution $\D$ in $E$ such that
if $\gamma \in \Gamma(\pi)$ is an integral section of $\D$, then
$s \circ \gamma \in \Gamma(\bar{\pi}_E^r)$ is an integral section of $\D_h$, that is,
\begin{equation}\label{eqn:HamGenHJDefDist}
\Tan_{u}\Im(\gamma) = \D_{u} \ \ \forall u \in \Im(\gamma) \Longrightarrow
\Tan_{[\omega]}\Im(s \circ \gamma) = (\D_h)_{[\omega]}
\ \ \forall \ [\omega] \in \Im(s \circ \gamma) \, .
\end{equation}
\end{definition}

As in the Lagrangian formulation stated in Section \ref{sec:LagrangianFormalism}, from the results
in Section \ref{sec:MultivectorFields} we know that both $\D$ and $\D_h$ are associated with their
corresponding classes of multivector fields. Thus, we can state the generalized Hamiltonian
Hamilton-Jacobi problem in an equivalent way in terms of multivector fields as the search of a section
$s \in \Gamma(\pi_E^r)$ and a class of locally decomposable and integrable multivector fields
$\{ \X \} \subseteq \vf^{m}(E)$ such that if $\gamma \in \Gamma(\pi)$ is an integral section of
every multivector field $\X\in\{\X\}$, then $s \circ \gamma \in \Gamma(\bar{\pi}_E^r)$ is an integral
section of every multivector field $\X_h \in \{ \X_h \}$ solution to equation \eqref{eqn:HamFieldEqMVF},
that is,
\begin{equation}\label{eqn:HamGenHJDefMVF}
\X \circ \gamma = \Lambda^m\dot{\gamma} \ \ \forall \X \in \{\X\} \Longrightarrow
\X_h \circ s \circ \gamma = \Lambda^m(\dot{\overline{s \circ \gamma}})
\ \ \forall \X_h \in \{ \X_h \} \, .
\end{equation}
Again, as in the Lagrangian formalism we have the following diagram illustrating this equivalent
formulation of the generalized Hamiltonian Hamilton-Jacobi problem
$$
\xymatrix{
\ & \ & \Lambda^m(\Tan J^1\pi^*) & \ \\
\ & \ & J^1\pi^* \ar[u]_-{\X_h} \ar[d]^{\pi^1} & \ \\
M \ar[rr]^-{\gamma} \ar[urr]^-{s \circ \gamma} \ar@/^1pc/[uurr]^-{\Lambda^m(\dot{\overline{s \circ \gamma}})} \ar@/_1pc/[rrr]_{\Lambda^m\dot{\gamma}} & \ & E \ar@/^.75pc/[u]^{s} \ar[r]^-{\X} & \Lambda^m(\Tan E)
}
$$
where the interpretation is the same as in the corresponding diagram for the Lagrangian formalism:
\textit{if the lower diagram formed by $\gamma$, $\X$ and $\Lambda^m\dot{\gamma}$ is commutative
for every $\X \in \{ \X \}$, then the upper diagram formed by $s \circ \gamma$, $\X_h$ and
$\Lambda^m(\dot{\overline{s \circ \gamma}})$ is also commutative for every $\X_h \in \{ \X_h \}$.}

\begin{remark}
Analogously to the Lagrangian formulation, the section $s \circ \gamma \in \Gamma(\bar{\pi}_E^r)$
is an integral section of the distribution $\D_h$ solution to the Hamiltonian problem and, therefore,
it is a solution to the equation \eqref{eqn:HamFieldEqSect}, that is,
$$
(s \circ \gamma)^*\inn(X)\Omega_h = 0 \, , \ \mbox{for every } X \in \vf(J^1\pi^*) \, .
$$
Then, bearing in mind that the action of the $m$-form $(s \circ \gamma)^*\inn(X)\Omega_h \in \df^{m}(M)$
on $m$ tangent vectors $v_i \in \Tan_xM$ ($x \in M$) is defined as
$$
\left((s \circ \gamma)^*\inn(X)\Omega_h\right)_{x}(v_1,\ldots,v_m) =
(\Omega_h)_{s(\gamma(x))}(X(s(\gamma(x))),\Tan_x(s \circ \gamma)(v_1),\ldots,\Tan_x(s \circ \gamma)(v_m)) \, ,
$$
from where we observe that $X(s(\gamma(x)))\in\Tan_{s(\gamma(x))}\Im(s\circ\gamma)\subset\Tan_{s(\gamma(x))}\Im(s)$,
that is, the vector field $X$ is tangent to the submanifold $\Im(s) \hookrightarrow J^1\pi^*$.
Hence, we conclude that in this case equation \eqref{eqn:HamFieldEqSect} is equivalent to
\begin{equation*}
(s \circ \gamma)^*\inn(X)\Omega_h = 0 \, ,
\ \mbox{for every } X \in \vf(J^1\pi^*) \mbox{ tangent to } \Im(s) \, .
\end{equation*}
\end{remark}

It is clear from the Definition that the distribution $\D$ in $E$, the section $s \in \Gamma(\pi_E^r)$
and the Hamiltonian distribution $\D_h$ in $J^1\pi^*$ are closely related. In fact, we have the
following result, which is the analogous to Proposition \ref{prop:LagGenHJRelatedDist} in the
Hamiltonian formalism.

\begin{proposition}\label{prop:HamGenHJRelatedDist}
The section $s \in \Gamma(\pi_E^r)$ and the distribution $\D$ in $E$ satisfy condition
\eqref{eqn:HamGenHJDefDist} if, and only if, $\D$ and $\D_h$ are $s$-related.
\end{proposition}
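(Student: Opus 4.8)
The plan is to mirror exactly the argument already used for its Lagrangian counterpart, Proposition \ref{prop:LagGenHJRelatedDist}, working throughout with the equivalence classes of locally decomposable and integrable multivector fields that represent the distributions $\D$ and $\D_h$. Recall that saying $\D$ and $\D_h$ are $s$-related means that for every $\X_h \in \{ \X_h \}$ there exists $\X \in \{ \X \}$ (and conversely) such that $\X_h \circ s = \Lambda^m\Tan s \circ \X$. I would state this at the outset so that the computation below makes the identification transparent.

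For the forward implication, I would start from condition \eqref{eqn:HamGenHJDefDist}. Let $\gamma \in \Gamma(\pi)$ be an integral section of $\D$, which is equivalent to $\gamma$ being an integral section of every $\X \in \{ \X \}$, and fix a representative $\X_h \in \{ \X_h \}$. The hypothesis guarantees that $s \circ \gamma \in \Gamma(\bar{\pi}_E^r)$ is an integral section of $\D_h$, so that
$$
\X_h \circ s \circ \gamma = \Lambda^m(\dot{\overline{s \circ \gamma}}) = \Lambda^m\Tan s \circ \Lambda^m\dot{\gamma} = \Lambda^m\Tan s \circ \X \circ \gamma \, ,
$$
where the first equality is the definition of integral section, the second is the functorial compatibility of the lift $\Lambda^m\Tan s$ with the canonical lift, and the last uses $\X \circ \gamma = \Lambda^m\dot{\gamma}$. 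Since $\X$ is integrable it admits integral sections through every point of $E$; evaluating this identity along all such sections upgrades it to the pointwise equality $\X_h \circ s = \Lambda^m\Tan s \circ \X$. Reversing the roles of the two classes $\{ \X \}$ and $\{ \X_h \}$ yields the relation in the opposite direction, so $\D$ and $\D_h$ are $s$-related.

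For the converse, I would assume $\D$ and $\D_h$ are $s$-related and read the same chain of equalities backwards: given an integral section $\gamma$ of $\D$ and a representative $\X_h \in \{ \X_h \}$, select the $\X \in \{ \X \}$ that is $s$-related to it, and compute
$$
\X_h \circ s \circ \gamma = \Lambda^m\Tan s \circ \X \circ \gamma = \Lambda^m\Tan s \circ \Lambda^m\dot{\gamma} = \Lambda^m(\dot{\overline{s \circ \gamma}}) \, ,
$$
so that $s \circ \gamma$ is an integral section of $\D_h$; this is precisely condition \eqref{eqn:HamGenHJDefDist}.

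The only delicate step, exactly as in the Lagrangian case, is the passage from equality of the two maps \emph{along} an integral section to their equality as maps on all of $E$: this is where the integrability of the multivector fields in $\{ \X \}$ is indispensable, since it is what guarantees that the integral sections of $\D$ sweep out every point of $E$. Everything else is the naturality of the lift $\Lambda^m\Tan s$ under composition together with the identity $\Lambda^m\Tan s \circ \Lambda^m\dot{\gamma} = \Lambda^m(\dot{\overline{s \circ \gamma}})$, both of which hold verbatim with $s \in \Gamma(\pi_E^r)$ in place of $\Psi$.
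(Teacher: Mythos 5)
Your proposal is correct and coincides with the paper's intended argument: the paper's own proof of this proposition consists precisely of the remark that it follows the same pattern as the Lagrangian case (Proposition \ref{prop:LagGenHJRelatedDist}), and what you have written is exactly that adaptation, with $s$ and $\{\X_h\}$ replacing $\Psi$ and $\{\X_\Lag\}$ and the same use of integrability to pass from equality along integral sections to equality on all of $E$. No changes needed.
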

\begin{proof}
This proof follows exactly the same patterns as the proof of Proposition
\ref{prop:HamGenHJRelatedDist}.
\end{proof}

As in the Lagrangian formalism, a straightforward consequence of this last result is the following.

\begin{corollary}\label{corol:HamGenHJAssociatedMVF}
If the section $s \in \Gamma(\pi_E^r)$ and the distribution $\D$ in $E$ satisfy condition
\eqref{eqn:HamGenHJDefDist}, then for every $\X \in \{\X\}$ there exists a multivector field
$\X_h \in \{\X_h\}$ such that $\X$ is given by
$$
\X = \Lambda^m\Tan\pi_E^r \circ \X_h \circ s \, .
$$
\end{corollary}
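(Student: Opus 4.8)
The plan is to transcribe almost verbatim the argument used for the Lagrangian Corollary~\ref{corol:LagGenHJAssociatedMVF}, since the Hamiltonian situation is structurally identical: the projection $\pi^1 \colon J^1\pi \to E$ is replaced by $\pi_E^r \colon J^1\pi^* \to E$, the jet field $\Psi \in \Gamma(\pi^1)$ by the section $s \in \Gamma(\pi_E^r)$, and the class of Euler--Lagrange multivector fields $\{\X_\Lag\}$ by the class $\{\X_h\}$ of solutions to \eqref{eqn:HamFieldEqMVF}. The only structural ingredients needed are the $s$-relatedness supplied by Proposition~\ref{prop:HamGenHJRelatedDist} and the functoriality of the tangent-power construction $\Lambda^m\Tan$.

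First I would fix an arbitrary representative $\X \in \{\X\}$. Since by hypothesis $s$ and $\D$ satisfy condition~\eqref{eqn:HamGenHJDefDist}, Proposition~\ref{prop:HamGenHJRelatedDist} tells us that $\D$ and $\D_h$ are $s$-related; reading this in terms of multivector fields yields a representative $\X_h \in \{\X_h\}$ that is $s$-related to the chosen $\X$, that is,
\[
\X_h \circ s = \Lambda^m\Tan s \circ \X \, .
\]
I would then compose both sides on the left with the bundle map $\Lambda^m\Tan\pi_E^r \colon \Lambda^m(\Tan J^1\pi^*) \to \Lambda^m(\Tan E)$ and use functoriality together with the section identity $\pi_E^r \circ s = \Id_E$:
\[
\Lambda^m\Tan\pi_E^r \circ \X_h \circ s
= \Lambda^m\Tan\pi_E^r \circ \Lambda^m\Tan s \circ \X
= \Lambda^m\Tan(\pi_E^r \circ s) \circ \X
= \X \, ,
\]
which is precisely the asserted formula.

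The argument involves no genuinely hard step; the only point requiring care is to be sure that Proposition~\ref{prop:HamGenHJRelatedDist} produces an $\X_h$ that is $s$-related to the \emph{prescribed} $\X$ rather than to some unspecified element of the class---but this is exactly the direction of the correspondence asserted by the $s$-relatedness of the two distributions, so it is the one we need. One should also record that $\pi_E^r \circ s = \Id_E$ because $s$ is a section of $\pi_E^r$, a fact immediate from the definition. No regularity or integrability hypotheses beyond those already in force are invoked.
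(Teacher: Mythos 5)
Your proof is correct and is essentially identical to the paper's argument: the paper itself proves this corollary by remarking that it ``follows the same patterns as the proof of Corollary~\ref{corol:LagGenHJAssociatedMVF}'', and your transcription --- invoking Proposition~\ref{prop:HamGenHJRelatedDist} to get an $\X_h$ that is $s$-related to the prescribed $\X$, then composing with $\Lambda^m\Tan\pi_E^r$ and using $\pi_E^r \circ s = \Id_E$ --- is exactly that pattern.
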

\begin{proof}
The proof follows the same patterns as in the proof of Corollary \ref{corol:LagGenHJAssociatedMVF}.
\end{proof}

That is, the class of integrable multivector fields $\{ \X \} \subseteq \vf^{m}(E)$ is completely
determinated by the section $s \in \Gamma(\pi_E^r)$ and the class of multivector fields
$\{ \X_h \} \subseteq \vf^{m}(J^1\pi^*)$ solution to the equation \eqref{eqn:HamFieldEqMVF},
and every multivector field $\X = \Lambda^m\Tan\pi^r_E \circ \X_h \circ s \in \{ \X \}$ is called
the \textsl{multivector field associated to $s$ and $\X_h$}. The diagram which illustrates this
situation is the following:
$$
\xymatrix{
J^1\pi^* \ar[rr]^-{\X_h} \ar[dd]^{\pi_E^r} & \ & \Lambda^m(\Tan J^1\pi^*) \ar[dd]_{\Lambda^m\Tan\pi_E^r} \\
\ & \ & \ \\
E \ar@/^1pc/[uu]^{s} \ar[rr]_-{\X} & \ & \Lambda^m(\Tan E) \ar@/_1pc/[uu]_{\Lambda^m\Tan s}
}
$$
As a consequence, the integrable distribution $\D$ in $E$ is completely determinated by the section
$s \in \Gamma(\pi_E^r)$ and the distribution $\D_h$ solution to the Hamiltonian problem as
$\D_{\pi_E^r([\omega])} = \Tan_{[\omega]}\pi_E^r((\D_h)_{[\omega]})$ for every $[\omega] \in J^1\pi^*$,
or $\D = \Tan\pi_E^r(\restric{\D_h}{\Im(s)})$, and it is called the \textsl{distribution associated
to $s$}.

From Corollary \ref{corol:HamGenHJAssociatedMVF} we deduce that the search for a section
$s \in \Gamma(\pi_E^r)$ and a distribution $\D$ in $E$ satisfying condition \eqref{eqn:HamGenHJDefDist}
is equivalent to the search of a section $s \in \Gamma(\pi^r_E)$ such that condition
\eqref{eqn:HamGenHJDefDist} is satisfied with the associated distribution
$\Tan\pi_E^r(\restric{\D_h}{\Im(s)})$. Therefore, we can give the following definition.

\begin{definition}
A \textnormal{solution to the generalized Hamiltonian Hamilton-Jacobi problem} is a section
$s \in \Gamma(\pi_E^r)$ such that if $\gamma \in \Gamma(\pi)$ is an integral section of the
$m$-dimensional and integrable distribution $\Tan\pi_E^r(\restric{\D_h}{\Im(s)})$ in $E$, then
$s \circ \gamma \in \Gamma(\bar{\pi}_E^r)$ is an integral section of $\D_h$.
\end{definition}

\begin{proposition}\label{prop:HamGenHJEquivalences}
Let $s \in \Gamma(\pi_E^r)$ be a section. Then, the following conditions are equivalent.
\begin{enumerate}
\item $s$ is a solution to the generalized Hamiltonian Hamilton-Jacobi problem.
\item The distribution $\D_h$ in $J^1\pi^*$ is tangent to the submanifold
$\Im(s) \hookrightarrow J^1\pi^*$, that is,
$(\D_h)_{[\omega]} \subseteq \Tan_{[\omega]}\Im(s)$ for every $[\omega] \in \Im(s)$.
\item $s$ satisfies the equation
$$
\gamma^*\inn(Y)\d(h \circ s) = 0 \, , \ \mbox{for every } Y \in \vf(E) \, ,
$$
where $\gamma \in \Gamma(\pi)$ is an integral section of the associated distribution
$\Tan\pi_E^r(\restric{\D_h}{\Im(s)})$.
\end{enumerate}
\end{proposition}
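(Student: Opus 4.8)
The plan is to follow the proof of Proposition \ref{prop:LagGenHJEquivalences} almost verbatim, since the three statements here are the Hamiltonian counterparts of those in the Lagrangian formalism; the only genuinely new ingredient is identifying the object $\d(h \circ s)$ that appears in item 3. First I would establish the key identity $s^*\Omega_h = -\d(h \circ s)$. Recall that $\Theta_h = h^*\Theta$ and $\Omega_h = -\d\Theta_h$, so $s^*\Theta_h = (h \circ s)^*\Theta$. Now $h \circ s \colon E \to \Lambda^m_2(\Tan^*E)$ satisfies $\pi_E \circ (h \circ s) = \pi_E^r \circ \mu \circ h \circ s = \pi_E^r \circ s = \Id_E$, hence it is a section of $\pi_E$, i.e. a $\pi$-semibasic $m$-form on $E$; by the tautological property $\xi^*\Theta = \xi$ of the Liouville form this gives $(h \circ s)^*\Theta = h \circ s$. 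Taking exterior derivatives yields $s^*\Omega_h = -\d(s^*\Theta_h) = -\d(h \circ s)$, so that item 3 is equivalent to $\gamma^*\inn(Y)(s^*\Omega_h) = 0$ for every $Y \in \vf(E)$, which is the exact analogue of item 3 in Proposition \ref{prop:LagGenHJEquivalences}.

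For the implication $(1 \Leftrightarrow 2)$ I would argue as in the Lagrangian case, replacing $\Psi$ by $s$, $\D_\Lag$ by $\D_h$ and $\pi^1$ by $\pi_E^r$. In the direction $(1 \Rightarrow 2)$, take $\gamma$ an integral section of $\D = \Tan\pi_E^r(\restric{\D_h}{\Im(s)})$; condition \eqref{eqn:HamGenHJDefDist} gives $(\D_h)_{[\omega]} = \Tan_{[\omega]}\Im(s \circ \gamma) \subseteq \Tan_{[\omega]}\Im(s)$ along $\Im(s \circ \gamma)$, and since the integral sections of the integrable distribution $\D$ sweep out $\Im(s)$ after composition with $s$, tangency holds at every point of $\Im(s)$. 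For $(2 \Rightarrow 1)$ I would define $\D_u$ as the preimage under $\Tan_u s$ of $(\D_h)_{s(u)} \subseteq \Tan_{s(u)}\Im(s)$ and set $\D = \bigcup_u \D_u = \Tan\pi_E^r(\restric{\D_h}{\Im(s)})$; here integrability of $\D$ is automatic, because $\restric{\pi_E^r}{\Im(s)}$ is a diffeomorphism onto $E$ with inverse $s$, so $\D$ is the diffeomorphic image of the integrable distribution $\restric{\D_h}{\Im(s)}$. The identity $s \circ \pi_E^r = \Id_{\Im(s)}$ then lets one transport the integral-section condition for $\gamma$ up to $s \circ \gamma$, recovering \eqref{eqn:HamGenHJDefDist}. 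Note that, unlike the Lagrangian case, no separate integrability hypothesis on $s$ is needed.

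For $(1 \Leftrightarrow 3)$ I would again mirror the Lagrangian argument. If $s$ is a solution then $s \circ \gamma$ is an integral section of $\D_h$ and hence solves \eqref{eqn:HamFieldEqSect}; writing $(s \circ \gamma)^*\inn(X)\Omega_h = \gamma^*\inn(Y)(s^*\Omega_h)$ for $Y$ a vector field $s$-related to $X$, and using the identity above, produces item 3. Conversely, given item 3, for each $Y \in \vf(E)$ I would choose $X \in \vf(J^1\pi^*)$ tangent to $\Im(s)$ and $s$-related to $Y$ (possible because $s$ is an embedding), so that $\gamma^*\inn(Y)(s^*\Omega_h) = (s \circ \gamma)^*\inn(X)\Omega_h = 0$; by the remark preceding the Proposition, on sections of the form $s \circ \gamma$ this suffices to conclude $(s \circ \gamma)^*\inn(X)\Omega_h = 0$ for all $X \in \vf(J^1\pi^*)$, whence $s \circ \gamma$ is an integral section of $\D_h$.

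The main obstacle I anticipate is the bookkeeping around $s$-relatedness of vector fields: the pullback identity $s^*\inn(X)\Omega_h = \inn(Y)(s^*\Omega_h)$ holds only for $X$ and $Y$ that are genuinely $s$-related, so one must restrict to vector fields tangent to $\Im(s)$ (as made explicit in the remark following the generalized Hamiltonian problem), exactly as the Lagrangian proof restricts to $X$ tangent to $\Im(\Psi)$. Everything else reduces to the already-established identity $s^*\Omega_h = -\d(h \circ s)$ and to the transversality and integrability of $\D_h$ assumed throughout.
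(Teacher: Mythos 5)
Your proposal is correct and follows essentially the same route as the paper: the paper's own proof simply invokes the Lagrangian argument of Proposition \ref{prop:LagGenHJEquivalences} together with the identity $s^*\Omega_h = s^*(h^*\Omega) = (h\circ s)^*(-\d\Theta) = -\d(h\circ s)$ obtained from the tautological property of $\Theta$, which is exactly the key step you establish. Your elaboration of the details (why $h\circ s$ is a $\pi$-semibasic $m$-form on $E$, the $s$-relatedness bookkeeping) only makes explicit what the paper leaves implicit.
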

\begin{proof}
This proof follows the same patterns as the proof of Proposition \ref{prop:LagGenHJEquivalences},
bearing in mind the properties of the tautological $m$-form $\Theta \in \df^{m}(\Lambda_2^m(\Tan^*E))$,
that is, we have $\omega^*\Theta = \omega$ for every $\omega \in \df^{m}(E)$. Because of this, we have
\begin{equation}\label{eqn:PullBackMultisymplecticFormBySection}
s^*\Omega_h = s^*(h^*\Omega) = (h \circ s)^*\Omega = (h \circ s)^*(-\d\Theta) = -\d(h \circ s)^*\Theta
= -\d(h \circ s) \, ,
\end{equation}
and therefore the equation
$$
(s \circ \gamma)^*\inn(X)\Omega_h = 0 \, , \ \mbox{for every } X \in \vf(J^1\pi^*) \, ,
$$
gives rise to equation
$$
\gamma^*\inn(Y)\d(h \circ s) = 0 \, , \ \mbox{for every } Y \in \vf(E) \, .
$$
\end{proof}

As in the Lagrangian formalism, a consequence of Proposition \ref{prop:HamGenHJEquivalences}
is the following result.

\begin{corollary}\label{corol:HamGenHJProjectedSections}
Let $s \in \Gamma(\pi_E^r)$ be a section solution to the generalized Hamiltonian
Hamilton-Jacobi problem. Then the integral sections of $\D_h$ with boundary conditions in $\Im(s)$
project to the integral sections of $\D = \Tan\pi_E^r(\restric{\D_h}{\Im(s)})$.
\end{corollary}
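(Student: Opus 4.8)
The plan is to mirror the proof of Corollary~\ref{corol:LagGenHJProjectedSections}, replacing the jet field $\Psi$ by the section $s \in \Gamma(\pi_E^r)$, the distribution $\D_\Lag$ by the Hamiltonian distribution $\D_h$, and the projection $\pi^1$ by $\pi_E^r$. First I would fix an integral section $\psi_h \in \Gamma(\bar{\pi}_E^r)$ of $\D_h$ whose boundary conditions lie in $\Im(s)$, and invoke the equivalence between items (1) and (2) of Proposition~\ref{prop:HamGenHJEquivalences}: since $s$ solves the generalized Hamiltonian Hamilton-Jacobi problem, the distribution $\D_h$ is tangent to the submanifold $\Im(s) \hookrightarrow J^1\pi^*$, that is, $(\D_h)_{[\omega]} \subseteq \Tan_{[\omega]}\Im(s)$ for every $[\omega] \in \Im(s)$.

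The central step is then to deduce that $\Im(\psi_h) \subseteq \Im(s)$: because $\psi_h$ is an integral section of the integrable distribution $\D_h$ passing through a point of $\Im(s)$, and $\D_h$ is tangent to $\Im(s)$, the integral submanifold $\Im(\psi_h)$ cannot leave $\Im(s)$. Granting this inclusion, I would project along $\pi_E^r$ and show that $\pi_E^r \circ \psi_h \in \Gamma(\pi)$ is an integral section of $\D = \Tan\pi_E^r(\restric{\D_h}{\Im(s)})$. Concretely, for $u \in \Im(\pi_E^r \circ \psi_h)$ the point of $\Im(\psi_h)$ over $u$ is $s(u)$ (as $\Im(\psi_h) \subseteq \Im(s)$), and one computes
$$
\Tan_u\Im(\pi_E^r \circ \psi_h) = \Tan_{s(u)}\pi_E^r(\Tan_{s(u)}\Im(\psi_h)) = \Tan_{s(u)}\pi_E^r((\D_h)_{s(u)}) = \D_u \, ,
$$
where I use that $\psi_h$ is an integral section of $\D_h$ to replace $\Tan_{s(u)}\Im(\psi_h)$ by $(\D_h)_{s(u)}$, and that $s \circ \pi_E^r = \restric{\Id}{\Im(s)}$, which makes $\Tan\pi_E^r$ an isomorphism of $\Tan\Im(s)$ onto $\Tan E$ with inverse $\Tan s$.

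The only genuinely delicate point is the inclusion $\Im(\psi_h) \subseteq \Im(s)$, whose justification is precisely the geometric meaning of $\D_h$ being tangent to $\Im(s)$ (an integral leaf of $\D_h$ issuing from a point of $\Im(s)$ stays inside $\Im(s)$). Everything else reduces to the routine tangent-space computation displayed above, resting on the defining property $s \circ \pi_E^r = \restric{\Id}{\Im(s)}$ of a section of $\pi_E^r$. This is the exact Hamiltonian counterpart of Corollary~\ref{corol:LagGenHJProjectedSections}, so no new ideas beyond that proof are required, and in fact the same argument goes through verbatim after the indicated substitutions.
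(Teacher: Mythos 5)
Your proposal is correct and follows essentially the same route as the paper, which itself simply transcribes the proof of Corollary~\ref{corol:LagGenHJProjectedSections} to the Hamiltonian setting: tangency of $\D_h$ to $\Im(s)$ from Proposition~\ref{prop:HamGenHJEquivalences} gives $\Im(\psi_h)\subseteq\Im(s)$, and then the displayed tangent-space computation using $s\circ\pi_E^r=\restric{\Id}{\Im(s)}$ is exactly the one in the paper.
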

\begin{proof}
This proof is analogous to the proof of Corollary \ref{corol:LagGenHJProjectedSections}.
\end{proof}

\noindent\textbf{Coordinate expression.}
Let $(x^i)$, $1 \leqslant i \leqslant m$, be local coordinates in $M$ such that
$\eta = \d^mx = \d x^1 \wedge \ldots \wedge \d x^m$, and $(x^i,u^\alpha)$,
$1 \leqslant \alpha \leqslant n$ local coordinates in $E$ adapted to the bundle structure. Then,
the induced coordinates in $J^1\pi^*$ are $(x^i,u^\alpha,p_\alpha^i)$, which coincide with the
local coordinates adapted to the bundle structure $\pi_E^r \colon J^1\pi^* \to E$. In these coordinates,
a section $s \in \Gamma(\pi_E^r)$ is given locally by $s(x^i,u^\alpha) = (x^i,u^\alpha,s_\alpha^i)$,
where $s_\alpha^i(x^i,u^\alpha)$ are local smooth functions on $E$.

Let us compute the local condition for a section $s \in \Gamma(\pi_E^r)$ to be a solution to the
generalized Hamiltonian Hamilton-Jacobi problem. From Proposition \ref{prop:HamGenHJEquivalences}
we know that this is equivalent to require the distribution $\D_h$ solution to the Hamiltonian
problem to be tangent to the submanifold $\Im(s) \hookrightarrow J^1\pi^*$, or, in terms of the
associated class of multivector fields $\{ \X_h \} \subseteq \vf^{m}(J^1\pi^*)$, to require
every multivector field in the class to be tangent to $\Im(s)$. From \cite{art:Roman09} we know
that a representative $\X_h \in \{ \X_h \}$ which is locally decomposable and
$\bar{\pi}_E^r$-transverse may be chosen to have the following coordinate expression
$$
\X_h = \bigwedge_{j=1}^{m} \left( \derpar{}{x^j} + \derpar{H}{p_\alpha^j} \derpar{}{u^\alpha}
+ G_{\alpha,j}^i \derpar{}{p_\alpha^i} \right) \, ,
$$
where the functions $G_{\alpha,j}^i$ are the solutions to the equations
\begin{equation}\label{eqn:HamFieldEqLocal}
\sum_{i=1}^{m}G_{\alpha,i}^{i} = -\derpar{H}{u^\alpha} \, ,
\end{equation}
in addition to the integrability conditions (if necessary), which in this case give the following
system of $nm(m^2-1)/2$ partial differential equations for the component functions $G_{\alpha,j}^i$
\begin{equation}\label{eqn:Integrability}
\begin{array}{l}
\displaystyle
\derpars{H}{x^j}{p_\alpha^k} + \derpar{H}{p_\beta^j}\,\derpars{H}{u^\beta}{p_\alpha^k}
+ G_{\beta,j}^l\derpars{H}{p_\beta^l}{p_\alpha^k} - \derpars{H}{x^k}{p_\alpha^j}
- \derpar{H}{p_\beta^k}\,\derpars{H}{u^\beta}{p_\alpha^j} - G_{\beta,k}^l\derpars{H}{p_\beta^l}{p_\alpha^j} = 0 \, , \\[15pt]
\displaystyle
\derpar{G_{\alpha,k}^i}{x^j} + \derpar{H}{p_\beta^j}\,\derpar{G_{\alpha,k}^i}{u^\beta}
+ G_{\beta,j}^l\derpar{G_{\alpha,k}^i}{p_\beta^l} - \derpar{G_{\alpha,j}^i}{x^k}
- \derpar{H}{p_\beta^k}\,\derpar{G_{\alpha,j}^i}{u^\beta} - G_{\beta,k}^l\derpar{G_{\alpha,j}^i}{p_\beta^l} = 0 \, .
\end{array}
\end{equation}

Then, bearing in mind that the submanifold $\Im(s) \hookrightarrow J^1\pi^*$ is locally defined by
the $mn$ constraints $s_\beta^k - p_\beta^k = 0$, the condition for this representative of the class
to be tangent to $\Im(s)$ gives the following partial differential equations
\begin{equation}\label{eqn:HamGenHJLocal}
\restric{\derpar{s^k_\beta}{x^j} + \derpar{H}{p_\alpha^j}\derpar{s^k_\beta}{u^\alpha} - G_{\beta,j}^k}{\Im(s)} = 0 \, .
\end{equation}
This is a system of $nm^2$ partial differential equations with $nm$ unknown functions $s^k_\beta$,
that is, we have more equations than unknown functions.

\begin{remark}
Recall that equations \eqref{eqn:HamFieldEqLocal} do not determinate uniquely all the coefficient
functions $G_{\alpha,j}^i$, since there are $m^2n$ unknown functions $G_{\alpha,j}^i$ and we have
only $n$ equations, which implies that, in general, there are $n(m^2-1)$ arbitrary functions.
Therefore, equations \eqref{eqn:HamGenHJLocal} could enable us to fix some of the arbitrary functions
$G_{\alpha,j}^{i}$ of the Hamiltonian multivector fields solution to the field equation
\eqref{eqn:HamFieldEqMVF}. From this point of view, equations \eqref{eqn:HamGenHJLocal} are a system
of $m^2n$ partial differential equations with $n(m^2 + m - 1)$ unknown functions. Note that in the
most favorable cases there still are $n(m-1)$ arbitrary functions to be determined, which may be fixed
by the integrability condition \eqref{eqn:Integrability} or not.
\end{remark}

\begin{remark}
On time-dependent mechanics, that is, for $m=1$ we obtain exactly $n$ partial differential equations
and $n$ unknown functions, since there are no arbitrary functions on the vector field solution
to the Hamiltonian dynamical equation.
\end{remark}

\subsection{The Hamiltonian Hamilton-Jacobi problem}

As in mechanics (see
\cite{art:Carinena_Gracia_Marmo_Martinez_Munoz_Roman06,art:Colombo_DeLeon_Prieto_Roman14_JPA}),
to solve the generalized Hamiltonian Hamilton-Jacobi problem is, in general, a very difficult
task, since it amounts to find $mn$-codimensional submanifolds of $J^1\pi^*$ such that the
$m$-dimensional distribution $\D_h$ is tangent to them. For this reason we require an additional
condition to the section $s \in \Gamma(\pi_E^r)$, and thus we consider a more particular problem.

\begin{definition}\label{def:HamHJDef}
The \textnormal{Hamiltonian Hamilton-Jacobi problem} consists in finding a section
$s \in \Gamma(\pi_E^r)$ solution to the generalized Hamiltonian Hamilton-Jacobi problem
such that $s^*\Omega_h = 0$. Such a section is called a \textnormal{solution to the
Hamiltonian Hamilton-Jacobi problem}.
\end{definition}

\begin{remark}
Bearing in mind the properties of the tautological $m$-form $\Theta \in \df^{m}(\Lambda^m_2(\Tan^*E))$,
and the calculations in \eqref{eqn:PullBackMultisymplecticFormBySection}, the condition $s^*\Omega_h = 0$
is equivalent to the closedness of the $m$-form $h \circ s \in \df^{m}(E)$.
\end{remark}

With this new assumption we can state the following result, which is a straightforward
consequence of Proposition \ref{prop:HamGenHJEquivalences}, Corollary
\ref{corol:HamGenHJProjectedSections} and the results on isotropic submanifold of multisymplectic
manifolds in \cite{art:Cantrijn_Ibort_DeLeon99}.

\begin{proposition}\label{prop:HamHJEquivalences}
Let $s \in \Gamma(\pi_E^r)$ be a section satisfying $s^*\Omega_h = 0$. Then the
following statements are equivalent:
\begin{enumerate}
\item $s$ is a solution to the Hamiltonian Hamilton-Jacobi problem.
\item The submanifold $\Im(s) \hookrightarrow J^1\pi^*$ is $m$-Lagrangian and the distribution $\D_h$
solution to the Hamiltonian problem is tangent to it.
\item The integral sections of $\D_h$ with boundary conditions in $\Im(s)$ project onto the integral
sections of $\D = \Tan\pi_E^r(\restric{\D_h}{\Im(s)})$.
\end{enumerate}
\end{proposition}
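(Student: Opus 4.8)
The plan is to keep the standing hypothesis $s^*\Omega_h=0$ in force throughout and to reduce the three statements to the already established characterizations of solutions of the \emph{generalized} Hamiltonian Hamilton-Jacobi problem. The first observation is that, by Definition \ref{def:HamHJDef}, under the hypothesis $s^*\Omega_h=0$ statement (1) is literally the condition that $s$ solve the generalized Hamiltonian Hamilton-Jacobi problem. This is the pivot, since Proposition \ref{prop:HamGenHJEquivalences} and Corollary \ref{corol:HamGenHJProjectedSections} already translate that condition into a tangency statement and into a projectability statement respectively. The whole result will then follow once the extra clause in (2) — that $\Im(s)$ be $m$-Lagrangian — is shown to be automatic under the hypothesis. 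I would therefore prove the two equivalences $(1)\Leftrightarrow(2)$ and $(1)\Leftrightarrow(3)$ separately.

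For $(1)\Leftrightarrow(2)$ the decisive step is that $s^*\Omega_h=0$ forces the $m$-Lagrangian clause of (2) to hold unconditionally. Since $\Omega_h$ is multisymplectic of degree $m+1$, the characterization of maximal-order isotropic submanifolds ($i^*\omega=0$ if and only if $N$ is $(k-1)$-isotropic, here with $k=m+1$, so $(k-1)$-isotropic means $m$-isotropic) gives that $s^*\Omega_h=0$ is equivalent to $\Im(s)\hookrightarrow J^1\pi^*$ being $m$-isotropic. Now $\Im(s)$ is the image of a section of $\pi_E^r\colon J^1\pi^*\to E$, hence diffeomorphic to $E$ and of dimension $m+n$, and by the structural results on isotropic submanifolds in \cite{art:Cantrijn_Ibort_DeLeon99} an $m$-isotropic submanifold of $(J^1\pi^*,\Omega_h)$ of this maximal dimension is in fact $m$-Lagrangian. (The dimensions are consistent: $\dim J^1\pi^*=m+n+nm$ and $\dim\Im(s)=m+n$, which for $m=1$ recovers the usual half-dimensional, plus time, Lagrangian submanifolds of the cosymplectic phase space.) Consequently statement (2) reduces to the single condition that $\D_h$ be tangent to $\Im(s)$, which by the equivalence of items (1) and (2) in Proposition \ref{prop:HamGenHJEquivalences} is exactly the condition that $s$ solve the generalized problem, i.e.\ statement (1).

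For $(1)\Leftrightarrow(3)$ the forward implication is immediate: if $s$ solves the Hamiltonian, hence the generalized, Hamilton-Jacobi problem, then Corollary \ref{corol:HamGenHJProjectedSections} yields precisely the projectability statement (3). For the converse I would argue that (3) upgrades to tangency of $\D_h$ along $\Im(s)$: given $[\omega]\in\Im(s)$, integrability of $\D_h$ furnishes an integral section $\psi_h$ through $[\omega]$; by (3) its projection $\pi_E^r\circ\psi_h$ is an integral section of the integrable distribution $\D=\Tan\pi_E^r(\restric{\D_h}{\Im(s)})$, and comparing the tangent spaces of $\Im(\psi_h)$ and of $\Im(s)$ along this leaf, together with uniqueness of integral leaves of $\D$ (Frobenius) and the fact that $\D$ is built from $\restric{\D_h}{\Im(s)}$, forces $\Im(\psi_h)\subseteq\Im(s)$, that is $(\D_h)_{[\omega]}\subseteq\Tan_{[\omega]}\Im(s)$. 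This recovers item (2) of Proposition \ref{prop:HamGenHJEquivalences}, hence (1). The main obstacle is the $m$-Lagrangian step in $(1)\Leftrightarrow(2)$: it is not a formal consequence of $m$-isotropy but genuinely rests on the dimension count for $m$-isotropic subbundles of the canonical multisymplectic form taken from \cite{art:Cantrijn_Ibort_DeLeon99}; a secondary subtlety is the converse $(3)\Rightarrow(1)$, where the mere projectability of integral sections must be promoted to honest tangency.
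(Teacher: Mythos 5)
Your route is the one the paper intends: it derives the proposition from Proposition \ref{prop:HamGenHJEquivalences}, Corollary \ref{corol:HamGenHJProjectedSections} and the results of \cite{art:Cantrijn_Ibort_DeLeon99} on isotropic submanifolds, and your treatment of $(1)\Leftrightarrow(2)$ is exactly the intended argument. Reading $(1)$, under the standing hypothesis, as ``$s$ solves the generalized problem'', translating this into tangency of $\D_h$ to $\Im(s)$ via Proposition \ref{prop:HamGenHJEquivalences}, and observing that $s^*\Omega_h=0$ makes $\Im(s)$ $m$-isotropic (by the lemma $i^*\omega=0\Leftrightarrow(k-1)$-isotropic, with $k=m+1$) and hence $m$-Lagrangian by the dimension count for images of sections of $\pi_E^r$ taken from \cite{art:Cantrijn_Ibort_DeLeon99}, is precisely how the clause ``$m$-Lagrangian'' in $(2)$ is meant to be absorbed. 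The implication $(1)\Rightarrow(3)$ via Corollary \ref{corol:HamGenHJProjectedSections} is likewise correct.

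The gap is in your converse $(3)\Rightarrow(1)$. You assert that projectability of the integral sections of $\D_h$ issuing from $\Im(s)$, ``together with uniqueness of integral leaves of $\D$'', forces $\Im(\psi_h)\subseteq\Im(s)$. It does not: the two sections $\psi_h$ and $s\circ\gamma$ (with $\gamma=\pi_E^r\circ\psi_h$) both cover the same leaf $\Im(\gamma)$ of $\D$ and agree at the initial point $[\omega]=s(u)$, but Frobenius uniqueness lives downstairs in $E$ and says nothing about which lift of that leaf one has; $\psi_h$ integrates $\D_h$ while $s\circ\gamma$ integrates $\Tan s(\D)$, and these distributions along $\Im(s)$ differ exactly when the tangency you are trying to establish fails. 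Knowing only that $\Tan\pi_E^r$ sends $(\D_h)_{\psi_h(x)}$ and $(\D_h)_{s(\gamma(x))}$ to the same subspace $\D_{\gamma(x)}$ does not identify the two points of the fiber. Moreover, the bare implication ``projectability $\Rightarrow$ tangency'' is false: for $m=n=1$ and a Hamiltonian depending affinely on the momentum, every section $s$ satisfies $(3)$ (the integral curves of $\X_h$ project onto those of $\partial/\partial t+\partial/\partial q$ regardless of $s$), whereas only the sections constant along the characteristics satisfy $(2)$. Since your sketch of $(3)\Rightarrow(1)$ never invokes the hypothesis $s^*\Omega_h=0$, it cannot be complete; the $m$-Lagrangian character of $\Im(s)$ (or, in coordinates, equations \eqref{eqn:HamHJLocal}) must be brought into play to promote projectability to tangency, and that is the step still to be supplied.
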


\noindent\textbf{Coordinate expression.}
In coordinates, we have
$$
s^*\Omega_h = -\d(h \circ s) = \left( \derpar{H}{u^\alpha} + \derpar{H}{p_\beta^j}\derpar{s_\beta^j}{u^\alpha} +
\sum_{i=1}^{m} \derpar{s_\alpha^i}{x^i} \right) \d u^\alpha \wedge \d^m x
+ \derpar{s_\alpha^i}{u^\beta} \, \d u^\alpha \wedge \d u^\beta \wedge \d^{m-1}x_i \, .
$$
Hence, the condition $s^*\Omega_h = 0$ or, equivalently, $\d(h \circ s) = 0$, gives the following
$n(1 + m(n-1))$ partial differential equations
\begin{equation}\label{eqn:HamHJLocal}
\derpar{H}{u^\alpha} + \derpar{H}{p_\beta^j}\derpar{s_\beta^j}{u^\alpha} +
\sum_{i=1}^{m} \derpar{s_\alpha^i}{x^i} = 0 \quad ; \quad
\derpar{s_\alpha^i}{u^\beta} - \derpar{s_\beta^i}{u^\alpha} = 0 \, ,
\end{equation}
where $1 \leqslant \alpha \leqslant n$ in the first set, and $1 \leqslant \alpha < \beta \leqslant n$
in the second. Therefore, a section $s \in \Gamma(\pi_E^r)$ solution to the Hamiltonian
Hamilton-Jacobi problem must satisfy the $m^2n + mn(n-1) + n$ partial differential equations
\eqref{eqn:HamGenHJLocal} and \eqref{eqn:HamHJLocal}.

\begin{remark}
Since equations \eqref{eqn:HamHJLocal} are the analogous to equations \eqref{eqn:LagHJLocal} in the
Hamiltonian formalism we deduce that, in general, equations \eqref{eqn:HamHJLocal} may not be locally
linear independent among themselves or together with \eqref{eqn:HamGenHJLocal}. In particular, we may
have less equations than the $m^2n + mn(n-1) + n$ given by \eqref{eqn:HamGenHJLocal} and
\eqref{eqn:HamHJLocal}.
\end{remark}

Now we recover the classic Hamilton-Jacobi equation for first-order classical field theories. Since
the $\pi_E^r$-semibasic $m$-form $h \circ s$ is closed, by Poincar\'{e}'s Lemma it is locally exact,
that is, there exists a $\pi$-semibasic $(m-1)$-form $\omega \in \df^{m-1}(U)$, with $U \subseteq E$
an open set, such that
$\d\omega = h \circ s$. In coordinates, a semibasic $(m-1)$-form defined in an open set of $E$ is given by
$$
\omega = W^i \d^{m-1}x_i \, ,
$$
with $W^i \in \Cinfty(E)$ being local functions. From this, the $m$-form $\d\omega$ is given by
$$
\d\omega = \sum_{i=1}^{m}\derpar{W^i}{x^i} \, \d^{m}x
+ \derpar{W^i}{u^\alpha} \, \d u^\alpha \wedge \d^{m-1}x_i \, .
$$
Hence, requiring the identity $\d\omega = h \circ s$ to hold, we obtain
$$
-H(x^i,u^\alpha,s_\alpha^i) = \sum_{i=1}^{m} \derpar{W^i}{x^i} \quad ; \quad
\derpar{W^i}{u^\alpha} = s_\alpha^i \, ,
$$
from where the classic Hamilton-Jacobi equation is deduced
\begin{equation}\label{eqn:HamHJEq}
\sum_{i=1}^{m} \derpar{W^i}{x^i} + H\left(x^i,u^\alpha,\derpar{W^i}{u^\alpha}\right) = 0 \, .
\end{equation}

\subsection{Complete solutions}
\label{sec:HamHJCompleteSol}

As in the Lagrangian formalism stated in Section \ref{sec:LagrangianFormalism}, in the previous
Sections we stated the Hamiltonian formulation of the Hamilton-Jacobi problem for first-order
classical field theories. As we have proved previously, a section $s \in \Gamma(\pi_E^r)$ solution
to the Hamilton-Jacobi problem gives rise to a particular set of solutions to the Hamiltonian
problem in terms of a submanifold of phase space $J^1\pi^*$. Observe, however, that this section
of the bundle $J^1\pi^* \to E$ is not a complete solution to the Hamiltonian problem, since only
the integral sections of $\D_h$ with boundary conditions in $\Im(s)$ can be recovered from the
solution to the Hamilton-Jacobi problem.

Therefore, a complete solution to the problem is given in terms of a foliation of the phase space
$J^1\pi^*$ such that every leaf is the image set of a section solution to the Hamiltonian
Hamilton-Jacobi problem.

\begin{definition}
A \textnormal{complete solution to the Hamiltonian Hamilton-Jacobi problem} is a local diffeomorphism
$\Phi \colon U \times E \to J^1\pi^*$, with $U \subseteq \R^{mn}$ an open set, such that for every
$\lambda \in U$, the map $s_\lambda(\bullet) \equiv \Phi(\lambda,\bullet) \colon E \to J^1\pi^*$
is a section of the projection $\pi_E^r$ which is a solution to the Hamiltonian Hamilton-Jacobi problem.
\end{definition}

As in the Lagrangian formalism, it follows from the definition that a complete solution provides
the manifold $J^1\pi^*$ with a foliation transverse to the fibers satisfying that every leaf
is $(m+n)$-dimensional and the distribution $\D_h$ is tangent to it.

Moreover, from a complete solution to the Hamiltonian Hamilton-Jacobi problem we can recover
every section solution to the Hamilton-De Donder-Weyl equations. In particular, let
$\Phi$ be a complete solution, and let us consider the following set of distributions in $E$
$$
\left\{ \D_\lambda =  \Tan\pi_E^r(\restric{\D_h}{\Im(s_\lambda)}) \subset \Tan E \mid
\lambda \in U \subseteq \R^{mn} \right\} \, ,
$$
where $s_\lambda(\bullet) \equiv \Phi(\lambda,\bullet)$. Then, the integral sections of $\D_\lambda$,
for different values of $\lambda \in U$, provide all the integral sections of the distribution $\D_h$
solution to the Hamiltonian problem. Indeed, let $[\omega] \in J^1\pi^*$ be a point, and let
$u = \pi_E^r([\omega])$ be its projection to $E$. Then, since $\Phi$ is a complete solution, there
exists $\lambda_o \in U$ such that $\Phi(\lambda_o,u) \equiv s_{\lambda_o}(u) = [\omega]$, and the
integral sections of $\D_{\lambda_o}$ through $u$, lifted to $J^1\pi^*$ by $s_{\lambda_o}$, give
the integral sections of $\D_h$ through $[\omega]$.

\subsection{Equivalence with the Lagrangian formalism}

In previous Sections we have stated the (generalized) Hamilton-Jacobi problem for multisymplectic
classical field theories in both the Lagrangian and Hamiltonian formalisms. In the following we
establish a bijective relation between the solution of the Hamilton-Jacobi problem in both
formulations in terms of the restricted Legendre map.

First of all, recall that since the Cartan $m$-form $\Theta_\Lag \in \df^{m}(J^1\pi)$ is
$\pi^1$-semibasic, we can give the following definition.

\begin{definition}
The \textnormal{extended Legendre map} associated with the Lagrangian $\Lag \in \df^{m}(J^1\pi)$ is
the bundle map $\widetilde{\Leg} \colon J^1\pi \to \Lambda^m_2(\Tan^*E)$ over $E$ defined as follows:
\begin{equation*}
(\Theta_\Lag(j^1_x\phi))(Y_1(j^1_x\phi),\ldots,Y_m(j^1_x\phi)) =
(\widetilde{\Leg}(j^1_x\phi))((\Tan_{j^1_x\phi}\pi^1Y_1)(\phi(x)),\ldots,(\Tan_{j^1_x\phi}\pi^1Y_m)(\phi(x))) \, ,
\end{equation*}
where $Y_i \in \vf(J^1\pi)$, and therefore $\Tan\pi^1Y_i \in \vf(E)$.
\end{definition}

This map verifies $\pi_E \circ \widetilde{\Leg} = \pi^1$, that is, $\widetilde{\Leg}$ is a bundle
morphism over $E$. Furthermore, we have that $\widetilde{\Leg}^*\Theta = \Theta_\Lag$ and
$\widetilde{\Leg}^*\Omega = \Omega_\Lag$. Then, bearing in mind the coordinate expressions
\eqref{eqn:LiouvilleForms} of the tautological $m$-form $\Theta \in \df^{m}(\Lambda^m_2(\Tan^*E))$,
and \eqref{eqn:CartanMFormLocal} of the Cartan $m$-form $\Theta_\Lag \in \df^{m}(J^1\pi)$,
the coordinate expression of the extended Legendre map is
\begin{equation}\label{eqn:ExtendedLegendreMapLocal}
\widetilde{\Leg}^*x^i = x^i \quad ; \quad \widetilde{\Leg}^*u^\alpha = u^\alpha \quad ; \quad
\widetilde{\Leg}^*p_\alpha^i = \derpar{L}{u_i^\alpha} \quad ; \quad
\widetilde{\Leg}^*p = L - u_i^\alpha \derpar{L}{u_i^\alpha} \, .
\end{equation}

The composition of the extended Legendre map $\widetilde{\Leg}\colon J^1\pi\to\Lambda^{m}_{2}(\Tan^*E)$
with the natural quotient map $\mu \colon \Lambda^m_2(\Tan^*E) \to J^1\pi^*$ gives rise to a bundle
morphism $\mu \circ \widetilde{\Leg} \colon J^1\pi \to J^1\pi^*$, which leads to the following definition.

\begin{definition}
The \textnormal{restricted Legendre map} associated to the Lagrangian density
$\Lag \in \df^{m}(J^1\pi)$ is the bundle morphism $\Leg \colon J^1\pi \to J^1\pi^*$
over $E$ defined as $\Leg = \mu \circ \widetilde{\Leg}$.
\end{definition}

In the natural coordinates of $J^1\pi^*$, the local expression of the restricted Legendre map is
\begin{equation*}
\Leg^*x^i = x^i \quad ; \quad \Leg^*u^\alpha = u^\alpha \quad ; \quad
\Leg^*p_\alpha^i = \derpar{L}{u_i^\alpha} \, .
\end{equation*}

As for the extended Legendre map, the map $\Leg \colon J^1\pi \to J^1\pi^*$ satisfies
$\pi_E^r \circ \Leg = \pi^1$, $\Leg^*\Theta_h = \Theta_\Lag$ and $\Leg^*\Omega_h = \Omega_\Lag$.
Moreover, the Lagrangian density $\Lag \in \df^{m}(J^1\pi)$ is regular if, and only if, the
restricted Legendre map is a local diffeomorphism, and $\Lag$ is said to be \textsl{hyperregular}
if $\Leg$ is a global diffeomorphism. Furthermore, in the hyperregular case the Hamiltonian section
$h \in \Gamma(\mu)$ is defined by $h = \widetilde{\Leg} \circ \Leg^{-1}$ (or by restriction on the
corresponding open sets where $\Leg$ is a diffeomorphism in the regular case).

Finally, to close the review on the properties of the Legendre maps, we have the following
fundamental result from \cite{art:Roman09}.

\begin{theorem}\label{CFTEquivalenceLagrangianHamiltonian}
Let $(J^1\pi,\Lag)$ be a Lagrangian field theory with $\Lag \in \df^{m}(J^1\pi)$ a hyperregular
Lagrangian density, and $(J^1\pi^*,\Omega_h)$ the associated Hamiltonian field theory.
\begin{enumerate}
\item If $\phi \in \Gamma(\pi)$ is a solution to equation \eqref{eqn:LagFieldEqSect}, then the
section $\psi_h = \Leg \circ j^1\phi \in \Gamma(\bar{\pi}^r_E)$ is a solution to equation
\eqref{eqn:HamFieldEqSect}.
\item Conversely, if $\psi_h \in \Gamma(\bar{\pi}^r_E)$ is a solution to equation
\eqref{eqn:HamFieldEqSect}, then the section $\phi = \pi^r_E \circ \psi_h \in \Gamma(\pi)$ is a
solution to equation \eqref{eqn:LagFieldEqSect}.
\end{enumerate}
\end{theorem}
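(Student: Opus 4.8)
The plan is to exploit the three structural properties of the restricted Legendre map recorded above, namely $\pi_E^r \circ \Leg = \pi^1$, the pullback identity $\Leg^*\Omega_h = \Omega_\Lag$, and the fact that hyperregularity makes $\Leg \colon J^1\pi \to J^1\pi^*$ a global diffeomorphism. The whole argument reduces to transporting the field equations across $\Leg$ by means of the standard identity $\Leg^*(\inn(X)\alpha) = \inn(Y)(\Leg^*\alpha)$, valid whenever $Y \in \vf(J^1\pi)$ and $X \in \vf(J^1\pi^*)$ are $\Leg$-related. Because $\Leg$ is a diffeomorphism, every $X$ has a unique such $Y$, and conversely, so this correspondence is a bijection and arbitrariness of test vector fields transfers both ways.

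For the first item I would take an arbitrary $X \in \vf(J^1\pi^*)$ and let $Y \in \vf(J^1\pi)$ be the unique vector field $\Leg$-related to it. Then, using $\psi_h = \Leg \circ j^1\phi$ so that $\psi_h^* = (j^1\phi)^*\Leg^*$, I would compute
\[
\psi_h^*\inn(X)\Omega_h = (j^1\phi)^*\Leg^*\inn(X)\Omega_h = (j^1\phi)^*\inn(Y)\Leg^*\Omega_h = (j^1\phi)^*\inn(Y)\Omega_\Lag = 0 \, ,
\]
the last equality being exactly the hypothesis \eqref{eqn:LagFieldEqSect} applied to $Y$. As $X$ ranges over $\vf(J^1\pi^*)$ this yields \eqref{eqn:HamFieldEqSect} for $\psi_h$.

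The converse is where the genuine work lies. Given $\psi_h$ solving \eqref{eqn:HamFieldEqSect}, I would first set $\sigma := \Leg^{-1} \circ \psi_h$; that $\sigma \in \Gamma(\bar{\pi}^1)$ follows from $\pi^1 \circ \Leg^{-1} = \pi_E^r$ together with $\bar{\pi}^1 = \pi \circ \pi^1$. The crucial point is then to show that $\sigma$ is holonomic, so that $\sigma = j^1\phi$ with $\phi = \pi^1 \circ \sigma = \pi_E^r \circ \psi_h$, matching the statement. I expect this to be the main obstacle, and I would establish it in adapted coordinates: writing $\psi_h = (x^i,\psi^\alpha,(\psi_h)^i_\alpha)$, the first group of Hamilton--De Donder--Weyl equations contained in \eqref{eqn:HamFieldEqSect} reads $\partial \psi^\alpha/\partial x^i = \partial H/\partial p^i_\alpha$, while the hyperregular relation $h = \widetilde{\Leg} \circ \Leg^{-1}$ forces $\partial H/\partial p^i_\alpha = u^\alpha_i$, precisely the velocity coordinate of the point $\sigma = \Leg^{-1}\circ\psi_h$. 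Combining the two gives $\sigma^\alpha_i = \partial\psi^\alpha/\partial x^i = \partial\sigma^\alpha/\partial x^i$, which is the holonomy condition \eqref{eqn:HolonomyConditionSect}.

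Once holonomy is secured, I would close the argument by reversing the computation of the first item: for arbitrary $Y \in \vf(J^1\pi)$, take the $\Leg$-related $X \in \vf(J^1\pi^*)$ and use $\Omega_\Lag = \Leg^*\Omega_h$ together with $\Leg \circ \sigma = \psi_h$ to obtain
\[
(j^1\phi)^*\inn(Y)\Omega_\Lag = \sigma^*\inn(Y)\Leg^*\Omega_h = \sigma^*\Leg^*\inn(X)\Omega_h = \psi_h^*\inn(X)\Omega_h = 0 \, ,
\]
so that $\phi$ satisfies \eqref{eqn:LagFieldEqSect}. The only subtle ingredient is the holonomy step, which is exactly the place where hyperregularity, as opposed to mere local solvability of the field equations, is indispensable.
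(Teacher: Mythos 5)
The paper does not actually prove this theorem: it is quoted verbatim as a known result from \cite{art:Roman09}, so there is no internal proof to compare yours against. Judged on its own, your argument is correct and is the standard one. Part 1 is a clean transport of the field equation through the identities $\Leg^*\Omega_h = \Omega_\Lag$ and $\Leg^*(\inn(X)\alpha) = \inn(Y)(\Leg^*\alpha)$ for $\Leg$-related $X$ and $Y$, the correspondence of test vector fields being bijective precisely because $\Leg$ is a diffeomorphism. You also correctly isolate the only nontrivial content of the converse, namely the holonomy of $\sigma = \Leg^{-1}\circ\psi_h$, and your derivation of it is sound: the contraction of $\Omega_h$ with $\partial/\partial p^i_\alpha$ pulled back by $\psi_h$ yields $\partial\psi^\alpha/\partial x^i = \partial H/\partial p^i_\alpha\circ\psi_h$, while $h = \widetilde{\Leg}\circ\Leg^{-1}$ gives $\partial H/\partial p^i_\alpha = u^\alpha_i\circ\Leg^{-1}$ (the one step you assert without computation; it follows from $H = \bigl(u^\alpha_i\,\partial L/\partial u^\alpha_i - L\bigr)\circ\Leg^{-1}$ and the cancellation $p^i_\alpha - \partial L/\partial u^\alpha_i = 0$ on the image of $\Leg$), and since $\Leg$ is fibered over $E$ the base components of $\sigma$ and $\psi_h$ agree, giving the holonomy condition \eqref{eqn:HolonomyConditionSect}. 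The only refinement worth noting is that hyperregularity is used to make $\Leg^{-1}$ and the vector-field correspondence global; for a merely regular $\Lag$ the identical argument holds on the open sets where $\Leg$ restricts to a diffeomorphism.
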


In particular, Theorem \ref{CFTEquivalenceLagrangianHamiltonian} states that the distributions
$\D_\Lag$ and $\D_h$ solutions to the Lagrangian and Hamiltonian problems, respectively, are
$\Leg$-related and, moreover, it establishes a bijective correspondence between the integral sections
of both distributions.

Now we can state the equivalence between the solutions of the (generalized) Lagrangian and Hamiltonian
Hamilton-Jacobi problems in terms of the restricted Legendre map. First we need the following
technical results.

\begin{lemma}\label{lemma:TechLemma1}
Let $E_1 \stackrel{\pi_1}{\longrightarrow} M$ and $E_2 \stackrel{\pi_2}{\longrightarrow} M$
be two fiber bundles, $F \colon E_1 \to E_2$ a fiber bundle morphism, and two $F$-related
$k$-multivector fields $\X_1 \in \vf^{k}(E_1)$ and $\X_2 \in \vf^{k}(E_2)$.
If $s_1 \in \Gamma(\pi_1)$ is a section of $\pi_1$ and we define
a section of $\pi_2$ as $s_2 = F \circ s_1 \in \Gamma(\pi_2)$, then
$$
\Lambda^k\Tan\pi_1 \circ \X_1 \circ s_1 = \Lambda^k\Tan\pi_2 \circ \X_2 \circ s_2 \in \vf^{k}(M) \, .
$$
\end{lemma}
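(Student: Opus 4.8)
The plan is to verify the asserted equality of $k$-multivector fields on $M$ pointwise, by a direct composition chase that uses only the functoriality of the functor $\Lambda^k\Tan$ together with the two structural hypotheses at hand: that $F$ is a morphism of bundles over $M$, so that $\pi_2 \circ F = \pi_1$, and that $\X_1$ and $\X_2$ are $F$-related, that is, $\Lambda^k\Tan F \circ \X_1 = \X_2 \circ F$.

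First I would fix an arbitrary $x \in M$ and rewrite the right-hand side evaluated at $x$. Since $s_2 = F \circ s_1$ we have $s_2(x) = F(s_1(x))$, and hence the $F$-relatedness of $\X_1$ and $\X_2$ gives
$$
\X_2(s_2(x)) = \X_2(F(s_1(x))) = (\Lambda^k\Tan F)(\X_1(s_1(x))) \, .
$$
Applying $\Lambda^k\Tan\pi_2$ to this identity and using functoriality in the form $\Lambda^k\Tan\pi_2 \circ \Lambda^k\Tan F = \Lambda^k\Tan(\pi_2 \circ F)$, I obtain
$$
(\Lambda^k\Tan\pi_2)(\X_2(s_2(x))) = (\Lambda^k\Tan(\pi_2 \circ F))(\X_1(s_1(x))) \, .
$$
Finally, invoking $\pi_2 \circ F = \pi_1$ collapses the right-hand side to $(\Lambda^k\Tan\pi_1)(\X_1(s_1(x)))$, which is precisely the left-hand side evaluated at $x$. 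As $x$ is arbitrary, the two $k$-multivector fields on $M$ coincide.

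I do not expect any substantial obstacle: the statement is essentially a formal consequence of functoriality of the tangent functor and its $k$th exterior power. The only points requiring care are bookkeeping ones, namely checking that $s_2$ is genuinely a section of $\pi_2$ (which holds since $\pi_2 \circ s_2 = \pi_2 \circ F \circ s_1 = \pi_1 \circ s_1 = \Id_M$) and making explicit that it is the hypothesis of $F$ being a bundle morphism \emph{over $M$} that supplies the crucial identity $\pi_2 \circ F = \pi_1$ used in the last step. Everything else is formal.
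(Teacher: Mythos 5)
Your proof is correct and is essentially the paper's own argument: both rest on the identity $\pi_1 = \pi_2\circ F$, the $F$-relatedness condition $\Lambda^k\Tan F\circ\X_1 = \X_2\circ F$, and functoriality of $\Lambda^k\Tan$, the only difference being that you run the chain of equalities pointwise and from right to left while the paper writes it as a single composition chain from left to right.
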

\begin{proof}
As $F \colon E_1 \to E_2$ is a fiber bundle morphism (that is, $\pi_1 = \pi_2 \circ F$), and
$\X_1$ and $\X_2$ are $F$-related (that is, $\Lambda^{k}\Tan F \circ \X_1 = \X_2 \circ F$), we
have the following commutative diagram
$$
\xymatrix{
\Lambda^k\Tan E_1 \ar[rr]^{\Lambda^k\Tan F} &  \ & \Lambda^k\Tan E_2 \\
E_1 \ar[dr]_{\pi_1} \ar[u]^{\X_1} \ar[rr]^{F} & \ & E_2 \ar[dl]^{\pi_2} \ar[u]_{\X_2} \\
\ & M & \
}
$$
Then we have
\begin{align*}
\Lambda^k\Tan\pi_1 \circ \X_1 \circ s_1 &= \Lambda^k\Tan(\pi_2 \circ F) \circ \X_1 \circ s_1
= \Lambda^k\Tan\pi_2 \circ \Lambda^k\Tan F \circ \X_1 \circ s_1 \\
&= \Lambda^k\Tan\pi_2 \circ \X_2 \circ F \circ s_1
= \Lambda^k\Tan\pi_2 \circ \X_2 \circ s_2 \, .
\end{align*}
\end{proof}

\begin{lemma}\label{lemma:TechLemma2}
Let $E_1 \stackrel{\pi_1}{\longrightarrow} M$ and $E_2 \stackrel{\pi_2}{\longrightarrow} M$
be two fiber bundles, $F \colon E_1 \to E_2$ a fiber bundle morphism, and two $F$-related
$k$-dimensional distributions $\D_1$ in $E_1$ and $\D_2$ in $E_2$. If $s_1 \in \Gamma(\pi_1)$
is a section of $\pi_1$ and we define a section of $\pi_2$ as $s_2 = F \circ s_1 \in \Gamma(\pi_2)$,
then
$$
\Tan\pi_1(\restric{\D_1}{\Im(s_1)}) = \Tan\pi_2(\restric{\D_2}{\Im(s_2)}) \, .
$$
\end{lemma}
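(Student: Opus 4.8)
The plan is to establish the equality of these two distributions in $M$ pointwise, by fixing an arbitrary $x \in M$ and comparing the fibers over $x$ on both sides. First I would unwind what it means for $\D_1$ and $\D_2$ to be $F$-related. As in the multivector field description of Section \ref{sec:MultivectorFields}, since $F$ is a fiber bundle morphism and the classes of locally decomposable, non-vanishing multivector fields representing $\D_1$ and $\D_2$ are $F$-related, one has $\Tan_pF((\D_1)_p) = (\D_2)_{F(p)}$ for every $p \in E_1$. The image is again $k$-dimensional because the image of a decomposable, non-vanishing $k$-vector under $\Lambda^k\Tan F$ is non-vanishing by $F$-relatedness, so $\Tan_pF$ is injective on $(\D_1)_p$.

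Next I would record that, since $s_1 \in \Gamma(\pi_1)$ and $s_2 = F \circ s_1$, we have $\pi_2 \circ s_2 = \pi_2 \circ F \circ s_1 = \pi_1 \circ s_1 = \Id_M$, so $s_2$ is indeed a section of $\pi_2$; moreover $s_1$ and $s_2$ are injective and parametrize $\Im(s_1)$ and $\Im(s_2)$ by $M$, with $s_2(x) = F(s_1(x))$. Thus, writing $p = s_1(x)$, the fiber of the left-hand side over $x$ is $\Tan_p\pi_1((\D_1)_p)$, and the fiber of the right-hand side over $x$ is $\Tan_{F(p)}\pi_2((\D_2)_{F(p)})$.

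The key computation is then a single application of $F$-relatedness together with the chain rule. Using $(\D_2)_{F(p)} = \Tan_pF((\D_1)_p)$ and $\pi_2 \circ F = \pi_1$, I would compute
\[
\Tan_{F(p)}\pi_2\bigl((\D_2)_{F(p)}\bigr) = \Tan_{F(p)}\pi_2\bigl(\Tan_pF((\D_1)_p)\bigr) = \Tan_p(\pi_2 \circ F)\bigl((\D_1)_p\bigr) = \Tan_p\pi_1\bigl((\D_1)_p\bigr),
\]
which is exactly the fiber of the left-hand side over $x$. Since $x \in M$ was arbitrary, the two distributions coincide. I do not expect any genuine obstacle here: the statement is a diagram-chasing consequence of $F$ being a bundle morphism over $M$ and of the $F$-relatedness of the distributions. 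The only point deserving care is the precise meaning of $F$-relatedness for distributions (as opposed to multivector fields), which is why I isolate it as the first step; once $\Tan_pF((\D_1)_p) = (\D_2)_{F(p)}$ is granted, the remainder is immediate.
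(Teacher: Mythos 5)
Your proof is correct and follows essentially the same argument as the paper: the identical diagram chase combining $\pi_2\circ F=\pi_1$ with $F$-relatedness. The only cosmetic difference is that the paper phrases the computation once for $F$-related multivector fields (its Lemma~\ref{lemma:TechLemma1}) and then deduces the statement for the associated classes, whereas you carry out the same chain-rule computation directly on the tangent subspaces $(\D_1)_p$ and $(\D_2)_{F(p)}$; both routes rely on the same reading of $F$-relatedness, namely $\Tan_pF((\D_1)_p)=(\D_2)_{F(p)}$.
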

\begin{proof}
Since $\D_1$ and $\D_2$ are $F$-related, it suffices to apply Lemma \ref{lemma:TechLemma1} to
every pair of $F$-related multivector fields in the classes of multivector fields associated to
$\D_1$ and $\D_2$.
\end{proof}

Finally, the equivalence Theorem is the following.

\begin{theorem}\label{thm:EquivalenceLagHam}
Let $(J^1\pi,\Lag)$ be a Lagrangian field theory with $\Lag \in \df^{m}(J^1\pi)$ a hyperregular
Lagrangian density, and $(J^1\pi^*,\Omega_h)$ its associated Hamiltonian field theory.

\begin{enumerate}
\item If $\Psi \in \Gamma(\pi^1)$ is an integrable jet field solution to the (generalized) Lagrangian
Hamilton-Jacobi problem, then the section $s = \Leg \circ \Psi \in \Gamma(\pi_E^r)$ is a solution to
the (generalized) Hamiltonian Hamilton-Jacobi problem.

\item Conversely, if $s \in \Gamma(\pi_E^r)$ is a solution to the (generalized) Hamiltonian
Hamilton-Jacobi problem, then the jet field $\Psi = \Leg^{-1} \circ s \in \Gamma(\pi^1)$ is
a solution to the (generalized) Lagrangian Hamilton-Jacobi problem.
\end{enumerate}
\end{theorem}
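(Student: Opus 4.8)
The plan is to prove both implications by reducing everything to the tangency characterizations furnished by Propositions \ref{prop:LagGenHJEquivalences} and \ref{prop:HamGenHJEquivalences} (their second conditions), and then exploiting that $\Leg$ is a diffeomorphism under which the field-equation distributions $\D_\Lag$ and $\D_h$ correspond. The three facts I would isolate at the outset are: (i) $\Leg$ is a bundle morphism over $E$, so $\pi_E^r \circ \Leg = \pi^1$, whence $\Leg$ carries sections of $\pi^1$ to sections of $\pi_E^r$ (and $\Leg^{-1}$ reverses this); (ii) by Theorem \ref{CFTEquivalenceLagrangianHamiltonian} the distributions $\D_\Lag$ and $\D_h$ are $\Leg$-related; and (iii) $\Leg^*\Omega_h = \Omega_\Lag$, which yields the key identity $(\Leg \circ \Psi)^*\Omega_h = \Psi^*(\Leg^*\Omega_h) = \Psi^*\Omega_\Lag$. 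Fact (iii) shows at once that the extra condition separating the non-generalized problem, namely $\Psi^*\Omega_\Lag = 0$ versus $s^*\Omega_h = 0$, is preserved in both directions once $s = \Leg \circ \Psi$; so it suffices to treat the generalized problem and then append this observation.

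For the first statement I would set $s = \Leg \circ \Psi$ and check, using (i), that $s \in \Gamma(\pi_E^r)$, since $\pi_E^r \circ s = \pi^1 \circ \Psi = \Id_E$. Because $\Leg$ is a diffeomorphism, $\Im(s) = \Leg(\Im(\Psi))$, and the two associated distributions in $E$ coincide: applying Lemma \ref{lemma:TechLemma2} with $F = \Leg$, $s_1 = \Psi$, $s_2 = s$ gives $\Tan\pi^1(\restric{\D_\Lag}{\Im(\Psi)}) = \Tan\pi_E^r(\restric{\D_h}{\Im(s)})$. Since $\Psi$ solves the generalized Lagrangian problem, this common distribution is $m$-dimensional and integrable, so the integrability requirement on the Hamiltonian side is met and, moreover, both problems share the same integral sections $\gamma$ in $E$. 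The tangency then transfers: by Proposition \ref{prop:LagGenHJEquivalences} one has $(\D_\Lag)_{\bar{u}} \subseteq \Tan_{\bar{u}}\Im(\Psi)$ for every $\bar{u} \in \Im(\Psi)$, and applying $\Tan_{\bar{u}}\Leg$ together with the $\Leg$-relatedness of fact (ii) gives $(\D_h)_{\Leg(\bar{u})} \subseteq \Tan_{\Leg(\bar{u})}\Im(s)$; hence $\D_h$ is tangent to $\Im(s)$ and Proposition \ref{prop:HamGenHJEquivalences} yields that $s$ solves the generalized Hamiltonian problem. Appending fact (iii) upgrades this to the non-generalized statement.

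The second statement is entirely symmetric: I would set $\Psi = \Leg^{-1} \circ s$, check $\Psi \in \Gamma(\pi^1)$ from $\pi^1 \circ \Leg^{-1} = \pi_E^r$, and run the same argument with $\Leg^{-1}$ in place of $\Leg$ (the distributions are $\Leg^{-1}$-related by symmetry of fact (ii), as $\Leg$ is a diffeomorphism). The one point requiring care is the integrability of the resulting jet field $\Psi$: unlike a bare section, a jet field carries its own integrability notion, so I would deduce it from the equality of associated distributions (Lemma \ref{lemma:TechLemma2} again) together with the remark following Corollary \ref{corol:LagGenHJAssociatedMVF}, which states that $\Psi$ is integrable if, and only if, its associated distribution $\D = \Tan\pi^1(\restric{\D_\Lag}{\Im(\Psi)})$ is. Since this equals $\Tan\pi_E^r(\restric{\D_h}{\Im(s)})$, which is integrable because $s$ is a solution, $\Psi$ is integrable as required. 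I expect this integrability bookkeeping — ensuring that the word ``solution'' carries the correct integrability data on each side and that the technical lemmas genuinely identify the two associated distributions along with their integral sections — to be the only real obstacle; the remainder is formal transport along the diffeomorphism $\Leg$.
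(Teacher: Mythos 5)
Your proposal is correct, and its skeleton coincides with the paper's (both reduce the theorem to the equivalent characterizations of Propositions \ref{prop:LagGenHJEquivalences} and \ref{prop:HamGenHJEquivalences}, both invoke Lemma \ref{lemma:TechLemma2} to identify the two associated distributions in $E$, and both dispose of the non-generalized case via $\Leg^*\Omega_h = \Omega_\Lag$). The difference is in which characterization you verify for the generalized problem: the paper checks condition~3, computing $\gamma^*\inn(Y)\d(h\circ s) = \gamma^*\inn(Y)(s^*\Omega_h) = \gamma^*\inn(Y)(\Psi^*\Omega_\Lag) = 0$ directly from $\Leg^*\Omega_h=\Omega_\Lag$, whereas you check condition~2 by pushing the tangency inclusion $(\D_\Lag)_{\bar u}\subseteq\Tan_{\bar u}\Im(\Psi)$ forward with $\Tan_{\bar u}\Leg$ and using the $\Leg$-relatedness of $\D_\Lag$ and $\D_h$ coming from Theorem \ref{CFTEquivalenceLagrangianHamiltonian}. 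Both finishes are valid; the paper's needs only the pullback identity for the forms and is a one-line computation, while yours makes explicit the geometric content (that $\Leg$ maps the Euler--Lagrange distribution onto the Hamiltonian one and $\Im(\Psi)$ onto $\Im(s)$) at the cost of leaning on the $\Leg$-relatedness of the solution distributions, which the paper in any case already uses implicitly as the hypothesis of Lemma \ref{lemma:TechLemma2}. A genuine improvement on your side: in the converse direction you explicitly justify the integrability of the jet field $\Psi=\Leg^{-1}\circ s$ (needed as a standing hypothesis of Proposition \ref{prop:LagGenHJEquivalences}) via the identification of $\D(\Psi)$ with $\Tan\pi_E^r(\restric{\D_h}{\Im(s)})$; the paper's proof dismisses the converse with ``proved analogously'' and leaves this point tacit.
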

\begin{proof}
This proof follows the patterns of Theorem 3 in \cite{art:Carinena_Gracia_Marmo_Martinez_Munoz_Roman06}
and Theorem 1 in \cite{art:Colombo_DeLeon_Prieto_Roman14_JPA}, now in terms of distributions.

In particular, assume that $\Psi \in \Gamma(\pi^1)$ is a solution to the generalized Lagrangian
Hamilton-Jacobi problem. First, if $\D = \Tan\pi^1(\restric{\D_\Lag}{\Im(\Psi)})$,
$\bar{\D} = \Tan\pi_E^r(\restric{\D_h}{\Im(s)})$ are the integrable distributions associated to
$\Psi$ and $s = \Leg \circ \Psi \in \Gamma(\pi_E^r)$, respectively, then by Lemma
\ref{lemma:TechLemma2} we have $\D = \bar{\D}$. Hence, both distributions are denoted by $\D$.

Now we prove that $s = \Leg \circ \Psi \in \Gamma(\pi_E^r)$ is a solution to the generalized
Hamiltonian Hamilton-Jacobi problem. Let $\gamma \in \Gamma(\pi)$ be an integral section of
$\D$. Computing, we have
\begin{align*}
\gamma^*\inn(Y)\d(h \circ s) &= \gamma^*\inn(Y)\d(h \circ s) = \gamma^*\inn(Y)(s^*\Omega_h) \\
&= \gamma^*\inn(Y)((\Leg \circ \Psi)^*\Omega_h) = \gamma^*\inn(Y)\Psi^*(\Leg^*\Omega_h) \\
&= \gamma^*\inn(Y)(\Psi^*\Omega_\Lag) \, .
\end{align*}
Hence, since this last expression vanishes for every $Y \in \vf(E)$ by the hypothesis and Proposition
\ref{prop:LagGenHJEquivalences}, we have proved
$$
\gamma^*\inn(Y)\d(h \circ s) = 0 \, , \ \mbox{for every } Y \in \vf(E) \, ,
$$
which, using Proposition \ref{prop:HamGenHJEquivalences}, is equivalent to $s = \Leg \circ \Psi$
being a solution to the generalized Hamiltonian Hamilton-Jacobi problem.

Now we require, in addition, $\Psi^*\Omega_\Lag = 0$, that is, $\Psi$ is a solution to the Lagrangian
Hamilton-Jacobi problem. Then we have
$$
s^*\Omega_h = (\Leg \circ \Psi)^*\Omega_h = \Psi^*(\Leg^*\Omega_h) = \Psi^*\Omega_\Lag = 0 \, ,
$$
and therefore $s = \Leg \circ \Psi$ is a solution to the Hamiltonian Hamilton-Jacobi problem.

The converse is proved analogously, but using $\Leg^{-1}$ instead of $\Leg$.
\end{proof}

\section{Examples}
\label{sec:Examples}

\subsection{Non-autonomous dynamical systems}
\label{sec:Example1}

Let us consider the case of non-autonomous dynamical systems, that is, the base manifold $M$ is
$1$-dimensional, and in particular we assume that $M = \R$ with the canonical volume form
$\eta \in \df^{1}(\R)$. Let us consider a first-order non-autonomous dynamical system with $n$
degrees of freedom, and let $\pi \colon E \to \R$ be the configuration bundle for this system,
with $\dim E = n + 1$. The dynamical information is enclosed on a Lagrangian density
$\Lag \in \df^{1}(J^1\pi)$, which is a $\bar{\pi}^1$-semibasic $1$-form. Because of this, let us
denote by $L \in \Cinfty(J^1\pi)$ the function satisfying $\Lag = L \cdot (\bar{\pi}^1)^*\eta$.

\begin{remark}
Observe that since $M$ is $1$-dimensional, it is either diffeomorphic to the real line $\R$ or the unit
circle $\mathbb{S}^1$. The only difference for our calculations is that, contrary to the case of the
unit circle, $\R$ has a canonical global atlas with a distinguished coordinate. Nevertheless, both
manifolds are orientable and parallelizable, and therefore all the calculations remain the same in
the unit circle despite the absence of a global atlas.
\end{remark}

Local coordinates along this Section are denoted in the usual way for non-autonomous dynamical
systems. In particular, let $(t)$ denote the global coordinate in $\R$, and let $(t,q^A)$,
$1 \leqslant A \leqslant n$, be local coordinates in $E$ adapted to the bundle structure. Then,
the induced local coordinates in $J^1\pi$, $\Tan^*E$ and $J^1\pi^*$ are denoted $(t,q^A,v^A)$,
$(t,q^A,p,p_A)$ and $(t,q^A,p_A)$, respectively.

\subsubsection*{Lagrangian formalism}

The Lagrangian problem for first-order non-autonomous dynamical systems consists in finding a class
of $\bar{\pi}^1$-transverse and semi-holonomic vector fields $\{ X_\Lag \} \subseteq \vf(J^1\pi)$
satisfying the field equation \eqref{eqn:LagFieldEqMVF}, that is,
$$
\inn(X_\Lag)\Omega_\Lag = 0 \, , \ \mbox{for every } X_\Lag \in \left\{ X_\Lag \right\} \, .
$$
If the Lagrangian density is regular, the Cartan $2$-form $\Omega \in \df^{2}(J^1\pi)$ has maximal
rank $2n$ on $J^1\pi$, and therefore there exists a class of vector fields $\{ X_\Lag \}$ solution
to the above equation which is holonomic. Hence, in the following we assume that the Lagrangian
density is regular.

\begin{remark}
Note that the integrability is assured without further assumptions in this setting, since every
vector field defined on a manifold is always integrable.
\end{remark}

The local expressions of the Cartan $1$-form $\Theta_\Lag \in \df^{1}(J^1\pi)$ is
\begin{equation}\label{eqn:Example1_Cartan1Form}
\Theta_\Lag = \derpar{L}{v^A}\d q^A - \left( v^A \derpar{L}{v^A} - L \right) \d t \, ,
\end{equation}
from where we deduce the coordinate expression of the Cartan $2$-form, which is
\begin{equation}\label{eqn:Example1_Cartan2Form}
\begin{array}{l}
\displaystyle \Omega_\Lag = \derpars{L}{v^A}{q^B}\d q^A \wedge \d q^B + \derpars{L}{v^A}{v^B} \d q^A \wedge \d v^B \\[12pt]
\displaystyle \qquad\qquad + \left( v^A\derpars{L}{v^A}{q^B} - \derpar{L}{q^B} \right) \d q^B \wedge \d t + v^A\derpars{L}{v^A}{v^B} \d v^B \wedge \d t \, .
\end{array}
\end{equation}
Thus, a representative of the equivalence class $\{ X_\Lag \} \subseteq \vf(J^1\pi)$ of
$\bar{\pi}^1$-transverse vector fields solution to the above dynamical equation is given
in coordinates by
\begin{equation}\label{eqn:Example3_LagrangianVF}
X_\Lag = f\left( \derpar{}{t} + v^A\derpar{}{q^A} + F^A\derpar{}{v^A} \right) \, ,
\end{equation}
where $f$ is a non-vanishing local function, and the $n$ functions $F^A$ are the unique solutions
to the Euler-Lagrange equations
$$
\derpar{L}{q^B}  - \derpars{L}{t}{v^B} - v^A\derpars{L}{q^A}{v^B} - F^A\derpars{L}{v^A}{v^B} = 0 \, .
$$
Observe that the vector fields in the class $\{ X_\Lag \} \subseteq \vf(J^1\pi)$ are completely
determinated (except for the non-vanishing function $f$). In the following we take $f = 1$ as a
representative of the equivalence class to state the Lagrangian Hamilton-Jacobi problem.

For the generalized Lagrangian Hamilton-Jacobi problem, all the Definitions and results stated in
Section \ref{sec:LagGenHJProblem} remain without changes, except for Corollary
\ref{corol:LagGenHJProjectedSections}, where ``boundary conditions'' should be replaced by ``initial
conditions''. Note that every mention to the integrability of the jet field is redundant, since
the associated distribution in this case is $1$-dimensional, and therefore integrable without
further assumptions.

In coordinates, let $\Psi \in \Gamma(\pi^1)$ be a jet field locally given by∞
$\Psi(t,q^A) = (t,q^A,\psi^A(t,q^A))$, with $\psi^A \in \Cinfty(E)$ being local functions. Then,
from Proposition \ref{prop:LagGenHJEquivalences} we know that $\Psi$ is a solution to the
generalized Lagrangian Hamilton-Jacobi problem if, and only if, every vector field in the class
$\{ X_\Lag \}$ is tangent to the submanifold $\Im(\Psi) \hookrightarrow J^1\pi$, which is locally
defined by the constraints $v^A - \psi^A = 0$, $1 \leqslant A \leqslant n$. Then, the conditions
$\Lie(X_\Lag)(v^A - \psi^A) = 0$ give rise to the following system of $n$ partial differential
equations
$$
\restric{F^A}{\Im(\Psi)} - \derpar{\psi^A}{t} - \psi^B\derpar{\psi^A}{q^B} = 0 \, .
$$

Now, in order to state the Lagrangian Hamilton-Jacobi problem, we must require in addition the jet
field $\Psi \in \Gamma(\pi^1)$ to satisfy the condition $\Psi^*\Omega_\Lag = 0$. In this situation,
Definition \ref{def:LagHJDef} remains unchanged, but the statement of Proposition
\ref{prop:LagHJEquivalences} must be changed as follows.

\begin{proposition}\label{prop:Example1_LagHJEquivalences}
Let $\Psi \in \Gamma(\pi^1)$ be a jet field satisfying $\Psi^*\Omega_\Lag = 0$.
Then the following statements are equivalent:
\begin{enumerate}
\item $\Psi$ is a solution to the Lagrangian Hamilton-Jacobi problem.
\item The submanifold $\Im(\Psi) \hookrightarrow J^1\pi$ is Lagrangian with respect to the
presymplectic form $\Omega_\Lag$ and every vector field $X_\Lag \in \{ X_\Lag \}$ is tangent to it.
\item The integral curves of $X_\Lag \in \{ X_\Lag \}$ with initial conditions in $\Im(\Psi)$
project onto the integral curves of $X = \Tan\pi^1 \circ X_\Lag \circ \Psi$.
\end{enumerate}
\end{proposition}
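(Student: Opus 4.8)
The plan is to treat this proposition as the $m=1$ specialization of Proposition \ref{prop:LagHJEquivalences}, so that the argument runs along exactly the same lines, with the substitutions dictated by the setting: the integrable distributions $\D_\Lag$ and $\D$ become the (class of) vector fields $\{X_\Lag\}$ and the vector field $X = \Tan\pi^1 \circ X_\Lag \circ \Psi$; integral sections become integral curves; and, as noted just before the statement, the integrability hypothesis on $\Psi$ is now automatic (a $1$-dimensional distribution is always integrable) while ``boundary conditions'' should read ``initial conditions''. Under these substitutions the three assertions to be shown equivalent are precisely the $m=1$ instances of the three items of Proposition \ref{prop:LagHJEquivalences}, so I would reuse its proof scheme and only supply the one genuinely new verification.

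First I would establish $(1 \Longleftrightarrow 2)$. Since $\Psi^*\Omega_\Lag = 0$ is assumed throughout, a jet field solves the Lagrangian Hamilton-Jacobi problem if and only if it solves the generalized one (Definition \ref{def:LagHJDef}); hence the tangency half of item 2 --- that every $X_\Lag \in \{X_\Lag\}$ be tangent to $\Im(\Psi)$ --- is exactly the equivalence $(1 \Longleftrightarrow 2)$ of Proposition \ref{prop:LagGenHJEquivalences} read with $m=1$, where tangency of the one-dimensional distribution $\D_\Lag$ means tangency of every representative $X_\Lag$. The new point, which I expect to be the crux, is to upgrade ``isotropic'' to ``Lagrangian''. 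By the characterization of $(k-1)$-isotropic submanifolds (the Lemma of Section \ref{sec:Background}, here with $k = m+1 = 2$), the hypothesis $\Psi^*\Omega_\Lag = 0$ says precisely that $\Im(\Psi)$ is a $1$-isotropic submanifold of the presymplectic manifold $(J^1\pi,\Omega_\Lag)$.

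The main obstacle is then the dimension bookkeeping that forces this isotropic submanifold to be Lagrangian. As $\Lag$ is regular, $\Omega_\Lag$ has constant rank $2n$ on the $(2n+1)$-dimensional manifold $J^1\pi$, hence a $1$-dimensional kernel $K$; and since $\Psi$ is a section of $\pi^1$, the submanifold $\Im(\Psi)$ is diffeomorphic to $E$ and thus has dimension $n+1$. For a subspace $W$ of a presymplectic vector space one has $\dim W + \dim W^{\bot,1} = \dim V + \dim(W \cap K)$; taking $W = \Tan_{\bar u}\Im(\Psi)$ at any $\bar u \in \Im(\Psi)$, the isotropy $W \subseteq W^{\bot,1}$ together with $\dim W = n+1$, $\dim V = 2n+1$ and $\dim K = 1$ forces $\dim(W \cap K) = 1$, so $K \subseteq W$ and $\dim W^{\bot,1} = n+1 = \dim W$, whence $W = W^{\bot,1}$. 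Thus $\Im(\Psi)$ is $1$-Lagrangian, which is the content of the results on isotropic submanifolds of multisymplectic manifolds in \cite{art:Cantrijn_Ibort_DeLeon99}. Conversely, the Lagrangian-plus-tangency condition of item 2 contains the tangency needed to invoke Proposition \ref{prop:LagGenHJEquivalences} and recover item 1, closing $(1 \Longleftrightarrow 2)$.

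Finally, $(1 \Longleftrightarrow 3)$ I would obtain from Corollary \ref{corol:LagGenHJProjectedSections} in the $m=1$ case: when $\Psi$ is a solution, $X_\Lag$ is tangent to $\Im(\Psi)$, so an integral curve of $X_\Lag$ issuing from a point of $\Im(\Psi)$ stays in $\Im(\Psi)$ and projects through $\pi^1$, using $\pi^1 \circ \Psi = \Id$ on $\Im(\Psi)$, onto an integral curve of $X = \Tan\pi^1 \circ X_\Lag \circ \Psi$. For the converse, the projection property says that the integral curve through each point of $\Im(\Psi)$ remains in $\Im(\Psi)$, so $X_\Lag$ is tangent to $\Im(\Psi)$ and item 2 (hence item 1) holds. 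No delicate issue arises here, because every vector field is automatically integrable; unlike the general higher-dimensional case there is no separate integrability condition to monitor.
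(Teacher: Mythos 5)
Your proposal is correct and follows essentially the same route as the paper: the paper states this proposition without a separate proof, presenting it as the $m=1$ specialization of Proposition \ref{prop:LagHJEquivalences}, which in turn is declared a straightforward consequence of Proposition \ref{prop:LagGenHJEquivalences}, Corollary \ref{corol:LagGenHJProjectedSections}, and the results on isotropic submanifolds in \cite{art:Cantrijn_Ibort_DeLeon99}. The only difference is that you make explicit the presymplectic linear algebra (the rank-$2n$ count forcing the $(n+1)$-dimensional isotropic submanifold $\Im(\Psi)$ to contain the kernel and hence be Lagrangian) that the paper outsources to the cited reference, and that computation is correct.
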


Bearing in mind the coordinate expression \eqref{eqn:Example1_Cartan2Form} of the Cartan $2$-form,
the pull-back $\Psi^*\Omega_\Lag \in \df^{2}(E)$ has the following coordinate expression
\begin{align*}
\Psi^*\Omega_\Lag &= \left( \derpars{L}{v^A}{q^B} + \derpars{L}{v^A}{v^C}\derpar{\psi^C}{q^B} \right) \d q^A \wedge \d q^B \\
&\quad{} + \left( \derpars{L}{v^B}{v^A}\derpar{\psi^A}{t} + \psi^A \derpars{L}{v^A}{q^B} - \derpar{L}{q^B} +
\psi^A \derpars{L}{v^A}{v^C} \derpar{\psi^C}{q^B} \right) \d q^B \wedge \d t \, .
\end{align*}
Hence, the condition $\Psi^*\Omega_\Lag = 0$ gives the following system of partial differential
equations
$$
\derpars{L}{v^A}{q^B} + \derpars{L}{v^A}{v^C}\derpar{\psi^C}{q^B} = 0 \quad ; \quad
\derpars{L}{v^B}{v^A}\derpar{\psi^A}{t} + \psi^A \derpars{L}{v^A}{q^B} - \derpar{L}{q^B} +
\psi^A \derpars{L}{v^A}{v^C} \derpar{\psi^C}{q^B} = 0 \, ,
$$
which may be combined to obtain equations \eqref{eqn:LagHJLocal} in this setting, that is,
the following system of $n^2$ partial differential equations
$$
\derpars{L}{v^A}{q^B} + \derpar{\psi^C}{q^B}\derpars{L}{v^A}{v^C} = 0 \quad ; \quad
\derpars{L}{v^B}{v^A}\derpar{\psi^A}{t} - \derpar{L}{q^B} = 0 \, .
$$

Finally, we state the Hamilton-Jacobi equation in the Lagrangian formalism. Since the jet field
$\Psi \in \Gamma(\pi^1)$ is a solution to the Lagrangian Hamilton-Jacobi problem, we have that
$\d(\Psi^*\Theta_\Lag) = 0$. Thus, there exists a local function $W \in \Cinfty(E)$ such that
$\d W = \Psi^*\Theta_\Lag$. In coordinates, the $1$-form $\Psi^*\Theta_\Lag$ is given by
$$
\Psi^*\Theta_\Lag = \derpar{L}{v^A}\d q^A - \left( \psi^A \derpar{L}{v^A} - L(t,q^A,\psi^A) \right) \d t \, .
$$
Hence, requiring $\Psi^*\Theta_\Lag = \d W$, we obtain
$$
\derpar{W}{t} = - \psi^A\derpar{L}{v^A} + L(t,q^A,\psi^A) \quad ; \quad
\derpar{W}{q^A} = \derpar{L}{v^A} \, ,
$$
which may be combined in the following single equation
$$
\derpar{W}{t} + \psi^A\derpar{W}{q^A} - L(t,q^A,\psi^A) = 0 \, .
$$
This is the usual Hamilton-Jacobi equation for a first-order non-autonomous Lagrangian dynamical
system.

\subsubsection*{Hamiltonian formalism}

Now we state the Hamiltonian formulation of the Hamilton-Jacobi problem in this setting. In the
natural coordinates of $J^1\pi^*$, the coordinate expression of the restricted Legendre map
$\Leg \colon J^1\pi \to J^1\pi^*$ in this setting is the following
$$
\Leg^*p_A = \derpar{L}{v^A} \, ,
$$
from where we deduce the coordinate expression of the extended Legendre map, which is
$$
\widetilde{\Leg}^*p_A = \derpar{L}{v^A} \quad ; \quad
\widetilde{\Leg}^*p = L - v^A \derpar{L}{v^A} \, .
$$
Recall that since $\Lag \in \df^{1}(J^1\pi)$ is assumed to be regular, the restricted Legendre map
is a local diffeomorphism. For simplicity, in the following we assume that $\Lag$ is hyperregular,
and thus $\Leg$ is a global diffeomorphism (the regular but not hyperregular case is recovered by
restriction on the open sets where the restricted Legendre map is a diffeomorphism).
Let $h = \widetilde{\Leg} \circ \Leg^{-1} \in \Gamma(\mu)$ be the Hamiltonian section associated
to the Lagrangian density $\Lag$, and $H \in \Cinfty(J^1\pi)$ a local Hamiltonian function which
specifies this Hamiltonian section $h$.

Using the Hamiltonian section $h \in \Gamma(\mu)$ we define the Hamilton-Cartan forms
$\Theta_h = h^*\Theta \in \df^{1}(J^1\pi^*)$ and $\Omega_h = h^*\Omega \in \df^{2}(J^1\pi^*)$,
with coordinate expressions
\begin{equation}\label{eqn:Example1_HamiltonCartan2Form}
\Theta_h = p_A\d q^A - H\d t \quad ; \quad
\Omega_h = \d q^A \wedge \d p_A + \d H \wedge \d t \, .
\end{equation}

The Hamiltonian problem for first-order non-autonomous dynamical systems consists in finding a class
of $\bar{\pi}^r_E$-transverse vector fields $\{ X_h \} \subseteq \vf(J^1\pi^*)$ satisfying equation
\eqref{eqn:HamFieldEqMVF}, that is,
$$
\inn(X_h)\Omega_h = 0 \, , \ \mbox{for every } X_h \in \left\{ X_h \right\} \, .
$$
As in the general setting described at the beginning of Section \ref{sec:HamiltonianFormalism},
in this formulation the $2$-form $\Omega_h$ in $J^1\pi^*$ has maximal rank $2n$ regardless of the
Hamiltonian section $h \in \Gamma(\mu)$. Therefore, there always exists such a class of vector fields.

\begin{remark}
As in the Lagrangian formulation described previously, the integrability is assured without further
assumptions in this setting.
\end{remark}

Then, a representative of the equivalence class $\{ X_h \} \subseteq \vf(J^1\pi^*)$ of
$\bar{\pi}_E^r$-transverse vector fields solution to the above dynamical equation is given
in coordinates by
$$
X_h = f \left( \derpar{}{t} + \derpar{H}{p_A} \derpar{}{q^A} - \derpar{H}{q^A} \derpar{}{p_A} \right) \, .
$$
As in the Lagrangian formulation, note that the vector fields in the class
$\{ X_h \} \subseteq \vf(J^1\pi^*)$ are completely determinated (again, except for the non-vanishing
function $f$). In the following we take $f = 1$ as a representative of the equivalence class to state
the Hamiltonian Hamilton-Jacobi problem.

For the generalized Hamiltonian Hamilton-Jacobi problem, all the Definitions and results stated in
Section \ref{sec:HamGenHJProblem} remain unchanged, except for Corollary
\ref{corol:HamGenHJProjectedSections}, where ``boundary conditions'' should be replaced by ``initial
conditions''.

In coordinates, let $s \in \Gamma(\pi_E^r)$ be a section locally given by $s(t,q^A) = (t,q^A,s_A(t,q^A))$,
where $s_A \in \Cinfty(E)$ are local functions. Then, from Proposition \ref{prop:HamGenHJEquivalences}
we know that $s$ is a solution to the generalized Hamiltonian Hamilton-Jacobi problem if, and only
if, every vector field in the class $\{ X_h \}$ is tangent to the submanifold
$\Im(s) \hookrightarrow J^1\pi^*$. Since this submanifold is defined locally by the constraints
$p_A - s_A = 0$, $1 \leqslant A \leqslant n$, we must check if the conditions
$\Lie(X_h)(p_A - s_A) = 0$ hold, which give rise the following system of $n$ partial differential
equations
$$
\derpar{H}{q^A} + \derpar{s_A}{t} + \derpar{H}{p_B}\derpar{s_A}{q^B} = 0 \, , \quad \mbox{(on $\Im(s)$)} \, .
$$

Now, in order to state the Hamiltonian Hamilton-Jacobi problem, we must require in addition the
section $s \in \Gamma(\pi_E^r)$ to satisfy the condition $s^*\Omega_h = 0$. In this case, the
statement in Definition \ref{def:HamHJDef} and the remark that follows remain unchanged, but the
statement of Proposition \ref{prop:HamHJEquivalences} must be changed as follows

\begin{proposition}\label{prop:Example1_HamHJEquivalences}
Let $s \in \Gamma(\pi_E^r)$ be a section satisfying $s^*\Omega_h = 0$. Then the following
statements are equivalent:
\begin{enumerate}
\item $s$ is a solution to the Hamiltonian Hamilton-Jacobi problem.
\item The submanifold $\Im(s) \hookrightarrow J^1\pi^*$ is Lagrangian with respect to the
presymplectic form $\Omega_h$ and every vector field $X_h \in \{ X_h \}$ is tangent to it.
\item The integral curves of $X_h \in \{ X_h \}$ with initial conditions in $\Im(s)$
project onto the integral curves of $X = \Tan\pi_E^r \circ X_h \circ s$.
\end{enumerate}
\end{proposition}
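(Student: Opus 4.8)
The plan is to mirror the proof of Proposition \ref{prop:HamHJEquivalences}, specialized to $m=1$, the decisive structural change being that $\Omega_h \in \df^2(J^1\pi^*)$ is now a presymplectic form of constant rank $2n$ on the $(2n+1)$-dimensional manifold $J^1\pi^*$, whose characteristic distribution $\ker\Omega_h$ is precisely the line field spanned by $X_h$ (equivalently, the class $\{X_h\}$), so that the dynamical distribution $\D_h$ reduces to $\langle X_h\rangle$ and integral sections become integral curves. First I would record the linear-algebraic fact underlying the whole statement: an $(n+1)$-dimensional isotropic subspace of a $(2n+1)$-dimensional presymplectic vector space of rank $2n$ is automatically maximal, hence $1$-Lagrangian, and necessarily contains the one-dimensional kernel. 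This is seen by passing to the symplectic quotient by $\ker\Omega_h$, where the projected subspace is isotropic of dimension at most $n$, forcing the kernel to lie inside the original subspace.

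With this in hand, I would establish $(1 \Leftrightarrow 2)$ as follows. Under the standing hypothesis $s^*\Omega_h = 0$, the submanifold $\Im(s) \hookrightarrow J^1\pi^*$ is isotropic (this is the $k=2$ instance of the characterization of $(k-1)$-isotropic submanifolds recalled in Section \ref{sec:Background}); since $\dim\Im(s) = \dim E = n+1$ matches the maximal isotropic dimension, $\Im(s)$ is Lagrangian with respect to $\Omega_h$, and by the fact above it contains $\ker\Omega_h = \langle X_h\rangle$, that is, $X_h$ is tangent to $\Im(s)$. Conversely, tangency together with the hypothesis gives statement 2. Finally, by Definition \ref{def:HamHJDef} a solution to the Hamiltonian Hamilton-Jacobi problem is exactly a solution to the generalized problem satisfying $s^*\Omega_h = 0$; invoking the equivalence $(1 \Leftrightarrow 2)$ of Proposition \ref{prop:HamGenHJEquivalences}, which reads here as ``$X_h$ tangent to $\Im(s)$'', closes the loop between statements 1 and 2.

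For $(1 \Leftrightarrow 3)$ I would appeal to Corollary \ref{corol:HamGenHJProjectedSections} (with ``boundary conditions'' replaced by ``initial conditions''), which yields $1 \Rightarrow 3$: the integral curves of $X_h$ issued from $\Im(s)$ remain in $\Im(s)$ by tangency and project, under $\pi_E^r$, onto the integral curves of $X = \Tan\pi_E^r \circ X_h \circ s$. For the converse $3 \Rightarrow 1$, I would argue that if every integral curve of $X_h$ with initial data in $\Im(s)$ projects onto an integral curve of $X$, then $s$ maps these projected curves back into $\Im(s)$ and reproduces the original integral curves of $X_h$, whence $X_h$ is tangent to $\Im(s)$ along them; by the arbitrariness of the initial point this gives tangency on all of $\Im(s)$, hence statement 2 and therefore statement 1.

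The main obstacle I anticipate is conceptual rather than computational: correctly interpreting ``$m$-Lagrangian'' in the degenerate $m=1$ regime. For $m \geqslant 2$ the form $\Omega_h$ is genuinely multisymplectic and tangency of $\D_h$ is an independent requirement, whereas for $m=1$ it degenerates to a presymplectic form and tangency of $X_h$ becomes a consequence of maximal isotropy; pinning down $\ker\Omega_h$ and the precise dimension count is exactly what makes the three statements collapse into equivalences under the standing hypothesis. The implication $3 \Rightarrow 1$ also requires a small but careful argument that projectability of integral curves forces tangency, rather than taking it for granted.
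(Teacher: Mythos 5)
Your proposal is correct, and it is worth comparing with what the paper actually does: the paper offers no separate proof of this proposition, presenting it merely as the $m=1$ restatement of Proposition \ref{prop:HamHJEquivalences}, which is in turn declared a ``straightforward consequence'' of Proposition \ref{prop:HamGenHJEquivalences}, Corollary \ref{corol:HamGenHJProjectedSections} and the isotropic-submanifold results of \cite{art:Cantrijn_Ibort_DeLeon99}. You invoke exactly those three ingredients, so the skeleton is the same, but you add a genuine sharpening that the paper leaves implicit: identifying $\ker\Omega_h$ as the line field spanned by $X_h$ and deducing, from the maximality of an $(n+1)$-dimensional isotropic subspace in a rank-$2n$ presymplectic space of dimension $2n+1$, that the standing hypothesis $s^*\Omega_h=0$ alone already forces the tangency of $X_h$ to $\Im(s)$. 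This is consistent with the paper's own coordinate computations: the tangency PDE displayed after Proposition \ref{prop:HamGenHJEquivalences} in this example coincides with the first family of equations coming from $s^*\Omega_h=0$ once the symmetry $\partial s_A/\partial q^B=\partial s_B/\partial q^A$ is imposed. Your observation thus shows that for $m=1$ the three statements are not merely equivalent but each automatically satisfied under the standing hypothesis --- a degeneration absent for $m\geqslant 2$, where $\Omega_h$ is $1$-nondegenerate and tangency is an independent requirement. One small caveat: your argument for $3\Rightarrow 1$ (``projectability of integral curves forces tangency'') is slightly circular as phrased, since concluding that $s\circ(\pi_E^r\circ\sigma)$ reproduces $\sigma$ requires knowing it is an integral curve of $X_h$, which is close to what is being proved; you should either appeal to uniqueness of integral curves after verifying that $s\circ\gamma$ is an integral curve, or simply note that under your kernel analysis statement 1 holds unconditionally, so the implication is vacuously true.
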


In coordinates, bearing in mind the coordinate expression \eqref{eqn:Example1_HamiltonCartan2Form}
of the Hamilton-Cartan $2$-form $\Omega_h$, the pull-back $s^*\Omega_h \in \df^{2}(E)$ has the
following local expression
$$
s^*\Omega_h = \left( \derpar{s_A}{t} + \derpar{H}{q^A} + \derpar{H}{p_B}\derpar{s_B}{q^A} \right) \d q^A \wedge \d t
+ \derpar{s_A}{q^B} \d q^A \wedge \d q^B \, .
$$
Hence, the condition $s^*\Omega_h = 0$ gives equations \eqref{eqn:HamHJLocal}, which in this case
correspond to the following system of $n^2$ partial differential equations
$$
\derpar{s_A}{t} + \derpar{H}{q^A} + \derpar{H}{p_B}\derpar{s_B}{q^A} \quad ; \quad
\derpar{s_A}{q^B} - \derpar{s_B}{q^A} = 0 \, .
$$

Finally, we state the Hamilton-Jacobi equation in the Hamiltonian formalism. Since the section
$s \in \Gamma(\pi_E^r)$ is a solution to the Lagrangian Hamilton-Jacobi problem, we have that
$h \circ s \in \df^{1}(E)$ is a closed form. Therefore, using Poincar\'e's Lemma, there exists a
local function $W \in \Cinfty(E)$ such that $\d W = h \circ s$. Bearing in mind that the $1$-form
$h \circ s \in \df^{1}(E)$ is given in coordinates by
$$
h \circ s = s_A \d q^A - (H \circ s) \d t \, ,
$$
the condition $h \circ s = \d W$ gives the following partial differential equations
$$
\derpar{W}{t} = -H(t,q^A,s_A) \quad ; \quad
\derpar{W}{q^A} = s_A \, ,
$$
which may combined to obtain the following single equation
$$
\derpar{W}{t} + H\left( t,q^A,\derpar{W}{q^A} \right) = 0 \, .
$$
This is the classic Hamilton-Jacobi equation for a first-order non-autonomous dynamical system
with $n$ degrees of freedom and Hamiltonian function $H$.

\subsection{Quadratic Lagrangian densities}
\label{sec:Example2}

Let us consider a classical field theory with $n$ fields depending on $m$ independent variables
given in terms of a quadratic Lagrangian density, and let $\pi \colon E \to M$ be the configuration
bundle, with $M$ being a $m$-dimensional orientable smooth manifold with fixed volume form
$\eta \in \df^{m}(M)$, and $\dim E = m + n$. Most of the quadratic Lagrangian field theories
can be modeled as follows \cite{book:Sardanashvily95,book:Giachetta_Mangiarotti_Sardanashvily97}:
let us assume that the bundle $\pi \colon E \to M$ is trivial (that is, $E = M \times Q$, with
$\dim Q = n$), so that $\pi^1 \colon J^1\pi \to E$ is a vector bundle. Let $g$ be a pseudo-Riemannian
metric in this vector bundle, $\Gamma$ a connection of the projection $\pi^1$ and $V \in \Cinfty(E)$
a potential function. Then a quadratic Lagrangian density $\Lag \in \df^{1}(J^1\pi)$ is given by
$$
\Lag(j^1_x\phi) = \left(
\frac{1}{2} \, g(j^1_x\phi - (\Gamma \circ \pi^1)(j^1_x\phi)), j^1_x\phi - (\Gamma \circ \pi^1)(j^1_x\phi)))
+ ((\pi^1)^*V)(j^1_x\phi) \right) (\bar{\pi}^1)^*\eta \, .
$$
In the natural coordinates $(x^i,u^\alpha,u_i^\alpha)$ of $J^1\pi$, the $m$-form $\Lag$ has the
following local expression
$$
\Lag(x^i,u^\alpha,u_i^\alpha) =
\left( \frac{1}{2} \, g^{ij}_{\alpha\beta}(u_i^\alpha - \Gamma_i^\alpha)(u_j^\beta - \Gamma_j^\beta) + V(x^i,u^\alpha) \right) \d^mx \, ,
$$
where $g^{ij}_{\alpha\beta} = g^{ij}_{\alpha\beta}(x^i,u^\alpha)$ are the coefficients of the metric,
which moreover satisfy $g^{ij}_{\alpha\beta} = g^{ji}_{\beta\alpha}$ due to the symmetry assumption,
and $\Gamma_i^\alpha = \Gamma_i^\alpha(x^i)$ are the component functions of the connection.

In order to simplify the problem, we assume that the connection $\Gamma$ is integrable. As a
consequence, there exist natural charts of coordinates in $J^1\pi$ such that $\Gamma^\alpha_i = 0$,
which implies that the coordinate expression of the Lagrangian density $\Lag$ reduces to
\begin{equation}\label{eqn:Example2_LagrangianLocal}
\Lag(x^i,u^\alpha,u_i^\alpha) =
\left( \frac{1}{2} \, g^{ij}_{\alpha\beta}u_i^\alpha u_j^\beta + V(x^i,u^\alpha) \right) \d^mx \equiv L \cdot \d^mx \, .
\end{equation}

Observe that the Hessian matrix of the Lagrangian function $L$ associated with $\Lag$ and $\eta$
coincides with the matrix of the coefficients of $g$, that is,
$$
\left( \derpars{L}{u_i^\alpha}{u_j^\beta} \right) = \left( g^{ij}_{\alpha\beta} \right) \, ,
$$
and therefore the Lagrangian density $\Lag$ is regular since $g$ is a pseudo-Riemannian metric, and
in particular non-degenerate.

\subsubsection*{Lagrangian formalism}

The local expression of the Cartan $m$-form $\Theta_\Lag \in \df^{m}(J^1\pi)$ obtained from the
Lagrangian density given locally by \eqref{eqn:Example2_LagrangianLocal} is
\begin{equation} \label{eqn:Example2_CartanMForm}
\Theta_\Lag = g_{\alpha\beta}^{ij} u_j^\beta \d u^\alpha \wedge \d^{m-1}x_i -
\left( \frac{1}{2} \, g^{ij}_{\alpha\beta}u_i^\alpha u_j^\beta - V(x^i,u^\alpha) \right)\d^{m}x \, ,
\end{equation}
from where we obtain the coordinate expression of the Cartan $(m+1)$-form, which is
\begin{equation}\label{eqn:Example2_CartanM+1Form}
\begin{array}{l}
\displaystyle \Omega_\Lag =
- \derpar{g^{ij}_{\alpha\beta}}{u^\delta} \, u_j^\beta \d u^\delta \wedge \d u^\alpha \wedge \d^{m-1}x_i
- g^{ij}_{\alpha\beta} \d u_j^\beta \wedge \d u^\alpha \wedge \d^{m-1}x_i
+ g^{ij}_{\alpha\beta}u_i^\alpha \d u_j^\beta \wedge \d^mx \\[10pt]
\displaystyle\qquad\quad{} + \left( \derpar{g^{ij}_{\delta\beta}}{x^i} \, u_j^\beta
+ \frac{1}{2}\,\derpar{g^{ij}_{\alpha\beta}}{u^\delta} u_i^\alpha u_j^\beta
- \derpar{V}{u^\delta} \right) \d u^\delta \wedge \d^mx \, .
\end{array}
\end{equation}
Hence, a representative of the class of holonomic multivector fields $\{ \X_\Lag \} \subseteq \vf^{m}(J^1\pi)$
solution to the Lagrangian field equation \eqref{eqn:LagFieldEqMVF} is given in coordinates by
$$
\X_\Lag = \bigwedge_{j=1}^{m} \left( \derpar{}{x^j} + u_j^\alpha\derpar{}{u^\alpha}
+ F_{j,i}^\alpha \derpar{}{u_i^\alpha} \right) \, ,
$$
where the functions $F_{j,i}^\alpha$ are solutions to the Euler-Lagrange equations for multivector
fields \eqref{eqn:LagFieldEqLocal}, which in this case are
\begin{equation}\label{eqn:Example2_EulerLagrangeEqMVF}
\frac{1}{2}\,\derpar{g_{\delta\beta}^{ij}}{u^\alpha} \, u_i^\delta u_j^\beta + \derpar{V}{u^\alpha}
- \sum_{i=1}^{m} \derpar{g_{\alpha\beta}^{ij}}{x^i} u_j^\beta
- \derpar{g_{\alpha\delta}^{ij}}{u^\beta} u_i^\beta u_j^\delta
- F_{j,i}^\beta g_{\alpha\beta}^{ij} = 0 \, ,
\end{equation}
in addition to the integrability conditions \eqref{eqn:Integrability&SemiHolonomicity} (if necessary).

First we state the generalized version of the Hamilton-Jacobi problem. Let $\Psi \in \Gamma(\pi^1)$
be a jet field locally given by $\Psi(x^i,u^\alpha) = (x^i,u^\alpha,\psi_i^\alpha(x^i,u^\alpha))$,
where $\psi_i^\alpha \in \Cinfty(E)$ are local functions. Then, from Proposition
\ref{prop:LagGenHJEquivalences} we know that $\Psi$ is a solution to the generalized Lagrangian
Hamilton-Jacobi problem if, and only if, every multivector field $\X_\Lag \in \{ \X_\Lag \}$ is
tangent to the submanifold $\Im(\Psi) \hookrightarrow J^1\pi$. Bearing in mind that $\Im(\Psi)$ is
defined locally by the constraints $u_k^\alpha - \psi_k^\alpha = 0$, the tangency condition
gives rise to equations \eqref{eqn:LagGenHJLocal}, where now the component functions $F_{j,i}^\alpha$
are solutions to the $n$ equations \eqref{eqn:Example2_EulerLagrangeEqMVF}.

In order to state the equations of the Lagrangian Hamilton-Jacobi problem, we must require in
addition the jet field $\Psi \in \Gamma(\pi^1)$ to satisfy the condition $\Psi^*\Omega_\Lag = 0$.
Bearing in mind the coordinate expression \eqref{eqn:Example2_CartanM+1Form} of the Cartan
$(m+1)$-form $\Omega_\Lag$, the pull-back $\Psi^*\Omega_\Lag \in \df^{m+1}(E)$ is given in
coordinates by
\begin{align*}
\Psi^*\Omega_\Lag =
&- \left( \derpar{g_{\alpha\beta}^{ij}}{u^\delta}\psi^\beta_j
+ g_{\alpha\beta}^{ij} \derpar{\psi_j^\beta}{u^\delta} \right) \d u^\delta \wedge \d ^\alpha \wedge \d^{m-1}x_i \\
&{}+ \left( g_{\delta\beta}^{ij} \derpar{\psi_j^\beta}{x^i}
+ g_{\alpha\beta}^{ij}\psi_i^\alpha\derpar{\psi_j^\beta}{u^\delta}
+ \derpar{g_{\delta\beta}^{ij}}{x^i} \psi_j^\beta
+ \frac{1}{2}\,\derpar{g_{\alpha\beta}^{ij}}{u^\delta}\psi_i^\alpha\psi_j^\beta
- \derpar{V}{u^\delta} \right) \d u^\delta \wedge \d^mx \, .
\end{align*}
Hence, the condition $\Psi^*\Omega_\Lag = 0$ gives the following system of partial differential
equations
$$
\derpar{g_{\alpha\beta}^{ij}}{u^\delta}\psi^\beta_j + g_{\alpha\beta}^{ij} \derpar{\psi_j^\beta}{u^\delta} = 0 \quad ; \quad
g_{\delta\beta}^{ij} \derpar{\psi_j^\beta}{x^i} + g_{\alpha\beta}^{ij}\psi_i^\alpha\derpar{\psi_j^\beta}{u^\delta}
+ \derpar{g_{\delta\beta}^{ij}}{x^i} \psi_j^\beta + \frac{1}{2}\,\derpar{g_{\alpha\beta}^{ij}}{u^\delta}\psi_i^\alpha\psi_j^\beta
- \derpar{V}{u^\delta} = 0 \, ,
$$
which may be combined to obtain equations \eqref{eqn:LagHJLocal}, that is, the following system of
$n(1+m(n-1))$ partial differential equations
$$
\derpar{g_{\alpha\beta}^{ij}}{u^\delta}\psi^\beta_j + g_{\alpha\beta}^{ij} \derpar{\psi_j^\beta}{u^\delta} = 0 \quad ; \quad
g_{\delta\beta}^{ij} \derpar{\psi_j^\beta}{x^i} + \derpar{g_{\delta\beta}^{ij}}{x^i} \psi_j^\beta
- \frac{1}{2}\,\derpar{g_{\alpha\beta}^{ij}}{u^\delta}\psi_i^\alpha\psi_j^\beta
- \derpar{V}{u^\delta} = 0 \, .
$$

Finally, we deduce the Lagrangian Hamilton-Jacobi equation. Since the jet field $\Psi \in \Gamma(\pi^1)$
is a solution to the Lagrangian Hamilton-Jacobi problem, we have that $\Psi^*\Theta_\Lag \in \df^{m}(E)$
is a closed form, and thus there exists a local $\pi$-semibasic $(m-1)$-form $\omega \in \df^{m-1}(E)$
such that $\d\omega = \Psi^*\Theta_\Lag$. In coordinates, bearing in mind the coordinate expression
\eqref{eqn:Example2_CartanMForm} of the Cartan $m$-form, we obtain
$$
\Psi^*\Theta_\Lag = g_{\alpha\beta}^{ij} \psi_j^\beta \d u^\alpha \wedge \d^{m-1}x_i -
\left( \frac{1}{2} \, g^{ij}_{\alpha\beta}\psi_i^\alpha \psi_j^\beta - V(x^i,u^\alpha) \right)\d^{m}x \, .
$$
Then, if the $(m-1)$-form $\omega$ is given locally by $\omega = W^i\d^{m-1}x_i$, the condition
$\d \omega = \Psi^*\Theta_\Lag$ gives rise to the following equations
$$
\sum_{i=1}^{m}\derpar{W^i}{x^i} + \frac{1}{2} \, g^{ij}_{\alpha\beta}\psi_i^\alpha \psi_j^\beta - V(x^i,u^\alpha) = 0
\quad ; \quad
\derpar{W^i}{u^\alpha} = g_{\alpha\beta}^{ij} \psi_j^\beta \, ,
$$
which may be combined to give the classic Hamilton-Jacobi equation in the Lagrangian formalism
\eqref{eqn:LagHJEq}, which in this example is
\begin{equation}\label{eqn:Example2_HJEquation}
\sum_{i=1}^{m}\derpar{W^i}{x^i}
+ \frac{1}{2} \, \tilde{g}_{ij}^{\alpha\beta} \derpar{W^i}{u^\alpha}\,\derpar{W^j}{u^\beta} - V(x^i,u^\alpha) = 0 \, .
\end{equation}
where $\tilde{g}_{ij}^{\alpha\beta}$ denote the coefficients of the inverse matrix of
$\left(g^{ij}_{\alpha\beta}\right)$, which exists since we assume $g$ to be non-degenerate.

\subsubsection*{Hamiltonian formalism}

Now we state the Hamiltonian formalism for the Hamilton-Jacobi problem. First, let us compute the
coordinate expression of the Legendre maps associated to the Lagrangian density given by
\eqref{eqn:Example2_LagrangianLocal}. In the natural coordinates $(x^i,u^\alpha,p_\alpha^i)$ of
$J^1\pi^*$, the restricted Legendre map $\Leg \colon J^1\pi \to J^1\pi^*$ has the following
coordinate expression
$$
\Leg^*p_\alpha^i = g_{\alpha\beta}^{ij} u_j^\beta \, ,
$$
from where the local expression of the extended Legendre map is straightforwardly deduced as
$$
\widetilde{\Leg}^*p_\alpha^i = g_{\alpha\beta}^{ij} u_j^\beta \quad ; \quad
\widetilde{\Leg}^*p = V(x^i,u^\alpha) - \frac{1}{2} \, g^{ij}_{\alpha\beta}u_i^\alpha u_j^\beta \, ,
$$

Using the coordinate expression of both Legendre maps we obtain a local Hamiltonian function
associated to the Lagrangian density $\Lag$, whose coordinate expression is
$$
H(x^i,u^\alpha,p_i^\alpha) = \frac{1}{2} \, \tilde{g}_{ij}^{\alpha\beta} p_\alpha^i p_\beta^j - V(x^i,u^\alpha) \, ,
$$
where, as before, $\left(\tilde{g}_{ij}^{\alpha\beta}\right)$ is the inverse matrix of
$\left(g^{ij}_{\alpha\beta}\right)$.

Using the Hamiltonian section $h \in \Gamma(\mu)$ specified by this local Hamiltonian function
we define the Hamilton-Cartan forms $\Theta_h = h^*\Theta \in \df^{m}(J^1\pi^*)$,
$\Omega_h = -\d \Theta_h \in \df^{m+1}(J^1\pi^*)$, whose coordinate expressions are
\begin{align}\label{eqn:Example2_HamiltonCartanForms}
&\Theta_h = p_\alpha^i\d u^\alpha \wedge \d^{m-1}x_i - \left( \frac{1}{2} \, \tilde{g}_{ij}^{\alpha\beta} p_\alpha^i p_\beta^j - V(x^i,u^\alpha) \right) \d^mx \, , \\
&\Omega_h = -\d p_\alpha^i \wedge \d u^\alpha \wedge \d^{m-1}x_i
+ \left( \frac{1}{2}\,\derpar{\tilde{g}^{\alpha\beta}_{ij}}{u^\delta} p_\alpha^i p_\beta^j - \derpar{V}{u^\delta} \right) \d u^\delta \wedge \d^{m}x
+ \tilde{g}_{ij}^{\alpha\beta} p_\alpha^i \d p_\beta^j \wedge \d^mx \, .
\end{align}

Then, a representative of the class of $\bar{\pi}_E^r$-transverse multivector fields
$\{ \X_h \} \subseteq \vf^{m}(J^1\pi^*)$ which are solutions to the Hamiltonian field equation
\eqref{eqn:HamFieldEqMVF} is given in coordinates by
\begin{equation}\label{eqn:Example2_HamiltonianMVFParticular}
\X_h = \bigwedge_{j=1}^{m} \left( \derpar{}{x^j} + \tilde{g}_{ij}^{\alpha\beta}p_\alpha^i \derpar{}{u^\beta}
+ G_{\alpha,j}^j \derpar{}{p^i_\alpha} \right) \, ,
\end{equation}
with the functions $G_{\alpha,j}^i$ satisfying the Hamilton-De Donder-Weyl equations \eqref{eqn:HamFieldEqLocal},
which in this case are
\begin{equation}\label{eqn:Example2_HamiltonEqMVF}
\sum_{i=1}^{m}G_{\delta,i}^{i} = -\left( \frac{1}{2}\,\derpar{\tilde{g}^{\alpha\beta}_{ij}}{u^\delta} p_\alpha^i p_\beta^j - \derpar{V}{u^\delta} \right) \, ,
\end{equation}
in addition to the integrability conditions \eqref{eqn:Integrability} (if necessary).

Let us state the generalized Hamiltonian Hamilton-Jacobi problem for this field theory. Let
$s \in \Gamma(\pi_{E}^r)$ be a section given locally by
$s(x^i,u^\alpha) = (x^i,u^\alpha,s^i_\alpha(x^i,u^\alpha))$, where $s^i_\alpha \in \Cinfty(E)$ are
local functions. By Proposition \ref{prop:HamGenHJEquivalences}, the section $s$ is a solution to
the generalized Hamiltonian Hamilton-Jacobi problem if, and only if, the multivector field $\X_h$
given in coordinates by \eqref{eqn:Example2_HamiltonianMVFParticular} is tangent to the submanifold
$\Im(s) \hookrightarrow J^1\pi^*$ defined locally by the constraints $p^i_\alpha - s^i_\alpha = 0$.
Computing, the tangency condition gives rise to equations \eqref{eqn:HamGenHJLocal}, that is,
$$
\derpar{s_\delta^k}{x^j} + \tilde{g}_{ij}^{\alpha\beta}p_\alpha^i \derpar{s_\delta^k}{u^\beta}
- \restric{G_{\delta,j}^k}{\Im(s)} = 0 \, .
$$

Now, in order to state the Hamiltonian Hamilton-Jacobi problem, we must require in addition the
section $s \in \Gamma(\pi_E^r)$ to satisfy the condition $s^*\Omega_h = 0$ or, equivalently,
$\d(h \circ s) = 0$. Bearing in mind the coordinate expression \eqref{eqn:Example2_HamiltonCartanForms}
of the Hamilton-Cartan $(m+1)$-form $\Omega_h$, the pull-back $s^*\Omega_h \in \df^{m+1}(E)$ is given
locally by
$$
s^*\Omega_h = \left( \derpar{s_\delta^i}{x^i}
+ \frac{1}{2}\,\derpar{\tilde{g}_{ij}^{\alpha\beta}}{u^\delta}s_\alpha^is_\beta^j - \derpar{V}{u^\delta}
+ \tilde{g}_{ij}^{\alpha\beta} s_\alpha^i \derpar{s_\beta^j}{u^\delta} \right) \d u^\delta \wedge \d^mx
- \derpar{s_\alpha^i}{u^\beta} \d u^\beta \wedge \d u^\alpha \wedge \d^{m-1}x_i \, .
$$
Hence, the condition $s^*\Omega_h = 0$ gives equations \eqref{eqn:HamHJLocal} for this problem,
that is,
$$
\derpar{s_\delta^i}{x^i} + \frac{1}{2}\,\derpar{\tilde{g}_{ij}^{\alpha\beta}}{u^\delta}s_\alpha^is_\beta^j
- \derpar{V}{u^\delta} + \tilde{g}_{ij}^{\alpha\beta} s_\alpha^i \derpar{s_\beta^j}{u^\delta} = 0 \quad ; \quad
\derpar{s_\alpha^i}{u^\beta} - \derpar{s_\beta^i}{u^\alpha} = 0 \, .
$$

Finally, we compute the Hamilton-Jacobi equation in the Hamiltonian formalism. Since the section
$s \in \Gamma(\pi_E^r)$ is a solution to the Hamiltonian Hamilton-Jacobi problem, we have that
$h \circ s \in \df^{m}(E)$ is a closed form. Hence, there exists a local $\pi$-semibasic $(m-1)$-form
$\omega \in \df^{m-1}(E)$ such that $\d\omega = h \circ s$. In coordinates, the local expression of
the $m$-form $h \circ s$ is
$$
h \circ s = s_\alpha^i \d u^\alpha \wedge \d^{m-1}x_i
- \left( \frac{1}{2} \, \tilde{g}_{ij}^{\alpha\beta} s_\alpha^i s_\beta^j - V(x^i,u^\alpha) \right) \d^mx \, ,
$$
Then, if the $(m-1)$-form $\omega$ is given locally by $\omega = W^i\d^{m-1}x_i$, the condition
$\d \omega = h \circ s$ gives rise to the following equations
$$
\sum_{i=1}^{m}\derpar{W^i}{x^i} + \frac{1}{2} \, \tilde{g}_{ij}^{\alpha\beta} s_\alpha^i s_\beta^j - V(x^i,u^\alpha) = 0
\quad ; \quad
\derpar{W^i}{u^\alpha} = s^i_\alpha \, ,
$$
which may be combined to give the classic Hamilton-Jacobi equation in the Hamiltonian formalism
\eqref{eqn:HamHJEq}, and it coincides with the equation \eqref{eqn:Example2_HJEquation} obtained in
the Lagrangian formulation, that is,
$$
\sum_{i=1}^{m}\derpar{W^i}{x^i}
+ \frac{1}{2} \, \tilde{g}_{ij}^{\alpha\beta} \derpar{W^i}{u^\alpha}\,\derpar{W^j}{u^\beta} - V(x^i,u^\alpha) = 0 \, .
$$

\subsection{Minimal surfaces in dimension three}
\label{sec:Example3}

Let us consider the following problem: we look for smooth maps $\phi \colon \R^2 \to \R$ such that
the set $\graph(\phi) \subseteq \R^3$, which is a surface in $\R^3$, has minimal area and satisfies
certain boundary conditions. This problem can be modeled as a first-order classical field theory
with base manifold $M = \R^2$ and $1$-dimensional fibers, that is, $E = \R^2 \times \R$. Let $(x,y)$
be the global coordinates in $M = \R^2$, and $(x,y,u)$ the global coordinates in $E = \R^3$. In
these coordinates, the canonical volume form $\eta \in \df^2(\R^2)$ is given by
$\eta = \d x \wedge \d y$. Then, in the natural coordinates $(x,y,u,u_1,u_2)$ of $J^1\pi$, the
Lagrangian density $\Lag \in \df^{2}(J^1\pi)$ for this field theory is
\begin{equation}\label{eqn:Example3_LagrangianDensity}
\Lag = \sqrt{1 + u_1^2 + u_2^2} \, \d x \wedge \d y \, .
\end{equation}
Observe that $\Lag$ is a regular Lagrangian density, since the Hessian matrix of the Lagrangian
function $L = \sqrt{1 + u_1^2 + u_2^2}$ associated with $\Lag$ and $\eta$ is
$$
\left( \derpars{L}{u_i}{u_j} \right) = \frac{1}{\sqrt{(1 + u_1^2 + u_2^2)^3}}
\begin{pmatrix}
u_2^2 + 1 & -u_1u_2 \\ -u_1u_2 & u_1^2 + 1
\end{pmatrix} \, ,
$$
which has determinant
$$
\det\left( \derpars{L}{u_i}{u_j} \right) = \frac{1}{(1 + u_1^2 + u_2^2)^2} \neq 0 \, .
$$

\subsubsection*{Lagrangian formalism}

The local expression of the Cartan $2$-form $\Theta_\Lag \in \df^{2}(J^1\pi)$ is
\begin{equation}\label{eqn:Example3_Cartan2Form}
\Theta_\Lag = \frac{u_1}{\sqrt{1 + u_1^2 + u_2^2}} \d u \wedge \d y - \frac{u_2}{\sqrt{1 + u_1^2 + u_2^2}} \d u \wedge \d x
+ \frac{1}{\sqrt{1 + u_1^2 + u_2^2}} \d x \wedge \d y \, ,
\end{equation}
from where we deduce the following coordinate expression for the Cartan $3$-form
$\Omega_\Lag \in \df^{3}(J^1\pi)$
\begin{align}
\Omega_\Lag &= - \frac{u_2^2 + 1}{\sqrt{(1 + u_1^2 + u_2^2)^3}} \d u_1 \wedge \d u \wedge \d y \nonumber
+ \frac{u_1u_2}{\sqrt{(1 + u_1^2 + u_2^2)^3}} \d u_2 \wedge \d u \wedge \d y \\
&\qquad{} - \frac{u_1u_2}{\sqrt{(1 + u_1^2 + u_2^2)^3}} \d u_1 \wedge \d u \wedge \d x \label{eqn:Example3_Cartan3Form}
+ \frac{u_1^2 + 1}{\sqrt{(1 + u_1^2 + u_2^2)^3}} \d u_2 \wedge \d u \wedge \d x \\
&\qquad{} + \frac{u_1}{\sqrt{(1 + u_1^2 + u_2^2)^3}} \d u_1 \wedge \d x \wedge \d y \nonumber
+ \frac{u_2}{\sqrt{(1 + u_1^2 + u_2^2)^3}} \d u_2 \wedge \d x \wedge \d y \, .
\end{align}
Thus, a multivector field $\X_\Lag \in \vf^{2}(J^1\pi)$ solution to the Lagrangian field equation
\eqref{eqn:LagFieldEqMVF} is given in coordinates by
$$
\X_\Lag = f X_1 \wedge X_2 = f \left( \derpar{}{x} + u_1 \derpar{}{u} + F_{1,1}\derpar{}{u_1} + F_{1,2}\derpar{}{u_2} \right) \wedge
\left( \derpar{}{y} + u_2 \derpar{}{u} + F_{2,1}\derpar{}{u_1} + F_{2,2}\derpar{}{u_2} \right) \, ,
$$
where $f \in \Cinfty(J^1\pi)$ is a non-vanishing function, and the functions $F_{i,j}$ satisfy the
Euler-Lagrange equations for multivector fields \eqref{eqn:LagFieldEqLocal}, which in this case
reduce to the following single equation
\begin{equation}\label{eqn:Example3_EulerLagrangeEqMVF}
(1 + u_2^2)F_{1,1} - u_1u_2(F_{1,2} + F_{2,1}) + (1 + u_1^2)F_{2,2} = 0 \, .
\end{equation}

In addition, since we need the multivector fields solution to the field equation to be integrable
in order to give a suitable Hamilton-Jacobi formulation, we must require equations
\eqref{eqn:Integrability&SemiHolonomicity} to hold. In our case, since $\X_\Lag = X_1 \wedge X_2$,
we must require $[X_1,X_2] = 0$. Computing, we have
\begin{align*}
[X_1,X_2] &= (F_{1,2} - F_{2,1}) \derpar{}{u} \\
&\quad + \left( \derpar{F_{2,1}}{x} + u_1\derpar{F_{2,1}}{u} + F_{1,1}\derpar{F_{2,1}}{u_1} + F_{1,2} \derpar{F_{2,1}}{u_2} \right. \\
&\qquad{} - \left. \derpar{F_{1,1}}{y} - u_2\derpar{F_{1,1}}{u} - F_{2,1}\derpar{F_{1,1}}{u_1} - F_{2,2}\derpar{F_{1,1}}{u_2} \right) \derpar{}{u_1} \\
&\quad + \left( \derpar{F_{2,2}}{x} + u_1\derpar{F_{2,2}}{u} + F_{1,1}\derpar{F_{2,2}}{u_1} + F_{1,2} \derpar{F_{2,2}}{u_2} \right. \\
&\qquad{} - \left. \derpar{F_{1,2}}{y} - u_2\derpar{F_{1,2}}{u} - F_{2,1}\derpar{F_{1,2}}{u_1} - F_{2,2}\derpar{F_{1,2}}{u_2} \right) \derpar{}{u_2}
\end{align*}
Then, requiring this last expression to vanish, we have that the multivector fields $\X_\Lag$ which
are solutions to the field equation are integrable if, and only if, the following $3$ equations hold
\begin{align}
F_{1,2} - F_{2,1} = 0 \, , \nonumber \\
\derpar{F_{2,1}}{x} + u_1\derpar{F_{2,1}}{u} + F_{1,1}\derpar{F_{2,1}}{u_1} + F_{1,2} \derpar{F_{2,1}}{u_2}
\label{eqn:Example3_IntegrabilityLagrangian}
- \derpar{F_{1,1}}{y} - u_2\derpar{F_{1,1}}{u} - F_{2,1}\derpar{F_{1,1}}{u_1} - F_{2,2}\derpar{F_{1,1}}{u_2} = 0 \, , \\
\derpar{F_{2,2}}{x} + u_1\derpar{F_{2,2}}{u} + F_{1,1}\derpar{F_{2,2}}{u_1} + F_{1,2} \derpar{F_{2,2}}{u_2} \nonumber
- \derpar{F_{1,2}}{y} - u_2\derpar{F_{1,2}}{u} - F_{2,1}\derpar{F_{1,2}}{u_1} - F_{2,2}\derpar{F_{1,2}}{u_2} = 0 \, .
\end{align}
A particular solution to equations \eqref{eqn:Example3_IntegrabilityLagrangian} is given by
$$
F_{1,1} = \derpar{u_1}{x} \quad ; \quad
F_{1,2} = F_{2,1} = \derpar{u_1}{y} = \derpar{u_2}{x} \quad ; \quad
F_{2,2} = \derpar{u_2}{y} \, .
$$
Moreover, one can easily check that these functions are also a solution to the Euler-Lagrange equation
\eqref{eqn:Example3_EulerLagrangeEqMVF}. Therefore, a particular holonomic multivector field solution
to the field equation is given in coordinates by
\begin{equation}\label{eqn:Example3_LagrangianMVFParticular}
\X_\Lag = f \left( \derpar{}{x} + u_1 \derpar{}{u} + \derpar{u_1}{x}\derpar{}{u_1} + \derpar{u_2}{x}\derpar{}{u_2} \right) \wedge
\left( \derpar{}{y} + u_2 \derpar{}{u} + \derpar{u_1}{y}\derpar{}{u_1} + \derpar{u_2}{y}\derpar{}{u_2} \right) \, .
\end{equation}
In the following we use the particular solution given by \eqref{eqn:Example3_LagrangianMVFParticular}
with $f = 1$ to state the Lagrangian Hamilton-Jacobi problem.

First we state the generalized version of the Hamilton-Jacobi problem. Let $\Psi \in \Gamma(\pi^1)$
be a jet field locally given by $\Psi(x,y,u) = (x,y,u,\psi_1(x,y,u),\psi_2(x,y,u))$, with
$\psi_1,\psi_2 \in \Cinfty(\R^3)$ being local functions. Then, from Proposition
\ref{prop:LagGenHJEquivalences} we know that $\Psi$ is a solution to the generalized
Lagrangian Hamilton-Jacobi problem if, and only if, the Euler-Lagrange multivector field $\X_\Lag$
given locally by \eqref{eqn:Example3_LagrangianMVFParticular} is tangent to the submanifold
$\Im(\Psi) \hookrightarrow J^1\pi$, which is locally defined by the constraints $u_j - \psi_j = 0$,
$j = 1,2$. Then, from the conditions $\Lie(X_i)(u_j - \psi_j) = 0$ we obtain the following systems
of $4$ partial differential equations
$$
\psi_1\derpar{\psi_1}{u} = 0 \quad ; \quad \psi_1\derpar{\psi_2}{u} = 0 \quad ; \quad
\psi_2\derpar{\psi_1}{u} = 0 \quad ; \quad \psi_2\derpar{\psi_2}{u} = 0 \, ,
$$
which admits the following set of solutions
\begin{equation}\label{eqn:Example3_GenLagHJSolutions}
\{ \psi_1(x,y,u) = f_1(x,y) , \psi_2(x,y,u) = f_2(x,y) \} \, ,
\end{equation}
where $f_1,f_2 \in \Cinfty(\R^3)$ are functions depending only on the coordinates of the base
manifold $M = \R^2$, that is, they are constant along the fibers of the bundle $\R^3 \to \R^2$.
Observe that both functions may be constant, and even vanish everywhere.

In order to obtain the equations of the Lagrangian Hamilton-Jacobi problem, we require in addition
the jet field $\Psi \in \Gamma(\pi^1)$ to satisfy the condition $\Psi^*\Omega_\Lag = 0$. Bearing in
mind the coordinate expression of the Cartan $3$-form $\Omega_\Lag \in \df^{3}(J^1\pi)$ given in
\eqref{eqn:Example3_Cartan3Form}, the pull-back $\Psi^*\Omega_\Lag \in \df^{3}(\R^3)$ by a jet field
$\Psi(x,y,u) = (x,y,u,\psi_1,\psi_2)$ gives
$$
\Psi^*\Omega_\Lag = \frac{1}{\sqrt{(1 + \psi_1^2 + \psi_2^2)^3}}
\left( (\psi_2^2 + 1)\derpar{\psi_1}{x} - \psi_1\psi_2 \left( \derpar{\psi_2}{x} + \derpar{\psi_1}{y} \right)
+ (\psi_1^2 + 1) \derpar{\psi_2}{y} \right) \d u \wedge \d x \wedge \d y \, .
$$
Therefore, $\Psi^*\Omega_\Lag = 0$ if, and only if, the following partial differential equation holds
$$
(\psi_2^2 + 1)\derpar{\psi_1}{x} - \psi_1\psi_2 \left( \derpar{\psi_2}{x} + \derpar{\psi_1}{y} \right)
+ (\psi_1^2 + 1) \derpar{\psi_2}{y} = 0 \, .
$$
It is easy to check that, from the functions in the set \eqref{eqn:Example3_GenLagHJSolutions} of
solutions to the generalized Lagrangian Hamilton-Jacobi problem, the following functions are solutions
to the Lagrangian Hamilton-Jacobi problem:
$$
\{ \psi_1 = \bar{f}(y) , \psi_2 = f(x) \} \,
$$
with $f,\bar{f} \in \Cinfty(E)$ being functions depending only on the coordinate $x$ and $y$, respectively
(with possibly one or both of them vanishing). In addition, when $\psi_i = g(x,y)$, we do not obtain
a closed formula for $\psi_j$ in terms of the function $g$, but there may be functions satisfying the
arising partial differential equation.

Finally, we state the Hamilton-Jacobi equation in the Lagrangian formalism. Since the jet field
$\Psi \in \Gamma(\pi^1)$ is a solution to the Lagrangian Hamilton-Jacobi problem, we have that
$\d(\Psi^*\Theta_\Lag) = 0$. Thus, there exists a $1$-form $\omega \in \df^{1}(E)$ given locally by
$\omega = W^1 \d y - W^2 \d x$ such that $\d\omega = \Psi^*\Theta_\Lag$. The pull-back of the Cartan
$2$-form $\Theta_\Lag$ by $\Psi$ gives in coordinates
$$
\Psi^*\Theta_\Lag = \frac{1}{\sqrt{1 + \psi_1^2 + \psi_2^2}}
\left( \psi_1 \d u \wedge \d y - \psi_2 \d u \wedge \d x + \d x \wedge \d y \right) \, .
$$
Hence, requiring $\Psi^*\Theta_\Lag = \d \omega$, we obtain
$$
\derpar{W^1}{x} + \derpar{W^2}{y} = \frac{1}{\sqrt{1 + \psi_1^2 + \psi_2^2}} \quad ; \quad
\derpar{W^1}{u} = \frac{\psi_1}{\sqrt{1 + \psi_1^2 + \psi_2^2}} \quad ; \quad
\derpar{W^2}{u} = \frac{\psi_2}{\sqrt{1 + \psi_1^2 + \psi_2^2}} \, ,
$$
which may be combined in the following single equation
\begin{equation}\label{eqn:Example3_HJEquation}
\derpar{W^1}{x} + \derpar{W^2}{y} = \sqrt{1 - \left(\derpar{W^1}{u}\right)^2 - \left(\derpar{W^2}{u}\right)^2} \, .
\end{equation}
This is the Hamilton-Jacobi equation for this field theory.

\subsubsection*{Hamiltonian formalism}

Now we state the Hamiltonian formulation of the Hamilton-Jacobi problem for this field theory.
In the natural coordinates $(x,y,u,p^1,p^2)$ of $J^1\pi^*$, the restricted Legendre map
$\Leg \colon J^1\pi \to J^1\pi^*$ associated to the Lagrangian density $\Lag$ given by
\eqref{eqn:Example3_LagrangianDensity} has the following coordinate expression
$$
\Leg^*p^1 = \frac{u_1}{\sqrt{1 + u_1^2 + u_2^2}} \quad ; \quad
\Leg^*p^2 = \frac{u_2}{\sqrt{1 + u_1^2 + u_2^2}} \, .
$$
From this last expression we deduce the coordinate expression of the extended Legendre map, which is
$$
\widetilde{\Leg}^*p^1 = \frac{u_1}{\sqrt{1 + u_1^2 + u_2^2}} \quad ; \quad
\widetilde{\Leg}^*p^2 = \frac{u_2}{\sqrt{1 + u_1^2 + u_2^2}} \quad ; \quad
\widetilde{\Leg}^*p = -\frac{1}{\sqrt{1 + u_1^2 + u_2^2}} \, ,
$$
as well as the coordinate expression of the (local) inverse map $\Leg^{-1} \colon J^1\pi^* \to J^1\pi$
$$
(\Leg^{-1})^*u_1 = \frac{p^1}{\sqrt{1 - (p^1)^2 - (p^2)^2}} \quad ; \quad
(\Leg^{-1})^*u_2 = \frac{p^2}{\sqrt{1 - (p^1)^2 - (p^2)^2}} \, .
$$
Observe that the inverse Legendre map is not defined on the points of $J^1\pi^*$ satisfying
$(p^1)^2 + (p^2)^2 = 1$.

The local Hamiltonian function associated to the Lagrangian formulation is then given by
$$
H(x,y,u,p^1,p^2) = - \sqrt{1 - (p^1)^2 - (p^2)^2} \, .
$$

Using the Hamiltonian section $h \in \Gamma(\mu)$ specified by this local Hamiltonian function
we define the Hamilton-Cartan forms $\Theta_h = h^*\Theta \in \df^{2}(J^1\pi^*)$,
$\Omega_h = h^*\Omega \in \df^{3}(J^1\pi^*)$, whose coordinate expressions are
\begin{align}\label{eqn:Example3_HamiltonCartan2Form}
&\Theta_h = p^1 \d u \wedge \d y - p^2 \d u \wedge \d x + \sqrt{1 - (p^1)^2 - (p^2)^2} \, \d x \wedge \d y \, , \\[10pt]
&\Omega_h = -\d p^1 \wedge \d u \wedge \d y + \d p^2 \wedge \d u \wedge \d x \nonumber \\
&\qquad{} + \displaystyle \frac{1}{\sqrt{1-(p^1)^2-(p^2)^2}} \label{eqn:Example3_HamiltonCartan3Form}
\left( p^1 \d p^1 \wedge \d x \wedge \d y + p^2 \d p^2 \wedge \d x \wedge \d y \right) \, .
\end{align}

Then, a locally decomposable $2$-vector field $\X_h \in \vf^{2}(J^1\pi^*)$ solution to the field
equation $\inn(\X_h)\Omega_h = 0$ is locally given by
\begin{equation*}
\begin{array}{l}
\X_h = f \left( \derpar{}{x} + \frac{p^1}{\sqrt{1-(p^1)^2-(p^2)^2}} \derpar{}{u} + G_1^1 \derpar{}{p^1} + G_1^2\derpar{}{p^2} \right) \\
\qquad{} \wedge \left( \derpar{}{y} + \frac{p^2}{\sqrt{1-(p^1)^2-(p^2)^2}} \derpar{}{u} + G_2^1 \derpar{}{p^1} + G_2^2\derpar{}{p^2} \right) \, ,
\end{array}
\end{equation*}
with the functions $G_i^j$ satisfying the Hamilton-De Donder-Weyl equations \eqref{eqn:HamFieldEqLocal},
which in this case reduce to the following single equation
\begin{equation}\label{eqn:Example3_HamiltonEqMVF}
G_1^1 + G_2^2 = 0 \, .
\end{equation}
Following the same procedure given in the Lagrangian formalism, an integrability condition must be
required to this multivector field. From \cite{art:Echeverria_Lopez_Marin_Munoz_Roman04} we know
that a particular choice of a locally decomposable and integrable multivector field solution to the
field equation is given in coordinates by
\begin{equation}\label{eqn:Example3_HamiltonianMVFParticular}
\begin{array}{l}
\displaystyle X_h = f \left( \derpar{}{x} + \frac{p^1}{\sqrt{1-(p^1)^2-(p^2)^2}} \derpar{}{u}
+ \derpar{p^1}{x} \derpar{}{p^1} + \derpar{p^2}{x}\derpar{}{p^2} \right) \\
\displaystyle \qquad{} \wedge \left( \derpar{}{y} + \frac{p^2}{\sqrt{1-(p^1)^2-(p^2)^2}} \derpar{}{u}
+ \derpar{p^1}{y} \derpar{}{p^1} + \derpar{p^2}{y}\derpar{}{p^2} \right) \, ,
\end{array}
\end{equation}
As in the Lagrangian formalism, in the following we use the particular solution given by
\eqref{eqn:Example3_HamiltonianMVFParticular} with $f = 1$ to state the Hamiltonian Hamilton-Jacobi
problem.

In order to state the generalized Hamiltonian Hamilton-Jacobi problem, let $s \in \Gamma(\pi_{\R^3}^r)$
be a section given in coordinates by $s(x,y,u) = (x,y,u,s^1(x,y,u),s^2(x,y,u))$, where
$s^1,s^2 \in \Cinfty(\R^3)$ are local functions. By Proposition \ref{prop:HamGenHJEquivalences},
the section $s$ is a solution to the generalized Hamiltonian Hamilton-Jacobi problem if, and only if,
the multivector field $\X_h$ given in coordinates by \eqref{eqn:Example3_HamiltonianMVFParticular}
is tangent to the submanifold $\Im(s) \hookrightarrow J^1\pi^*$ defined locally by the constraints
$p^j - s^j = 0$, $j = 1,2$. Then the tangency of $\X_h$ along the submanifold $\Im(s)$ gives the
following system of $4$ partial differential equations.
\begin{gather*}
\frac{s^1}{\sqrt{1-(s^1)^2-(s^2)^2}}\derpar{s^1}{u} = 0 \quad ; \quad
\frac{s^1}{\sqrt{1-(s^1)^2-(s^2)^2}}\derpar{s^2}{u} = 0 \, , \\
\frac{s^2}{\sqrt{1-(s^1)^2-(s^2)^2}}\derpar{s^1}{u} = 0 \quad ; \quad
\frac{s^2}{\sqrt{1-(s^1)^2-(s^2)^2}}\derpar{s^2}{u} = 0 \, .
\end{gather*}
This system of partial differential equations admits the following set of local solutions
\begin{equation}\label{eqn:Example3_GenHamHJSolutions}
\{ s^1(x,y,u) = f^1(x,y) , s^2(x,y,u) = f^2(x,y) \} \, ,
\end{equation}
where $f^1,f^2 \in \Cinfty(\R^3)$ are functions depending only on the coordinates $(x,y)$ of the
base manifold $M = \R^2$, and satisfying $(f^1)^2 + (f^2)^2 \neq 1$.

Now, to obtain the equation of the Hamiltonian Hamilton-Jacobi problem, we require in addition
that the section $s \in \Gamma(\pi_{\R^3}^r)$ satisfies the condition $s^*\Omega_h = 0$ or,
equivalently, we require the $2$-form $h \circ s \in \df^{2}(E)$ to be closed. In coordinates,
bearing in mind the coordinate expression \eqref{eqn:Example3_HamiltonCartan3Form} of the
Hamilton-Cartan $3$-form $\Omega_h \in \df^{3}(J^1\pi^*)$, we have
$$
s^*\Omega_h = \left( \derpar{s^1}{x} + \derpar{s^2}{y} \right) \d u \wedge \d x \wedge \d y \, ,
$$
from where the condition $s^*\Omega_h = 0$ is locally equivalent to the equation:
$$
\derpar{s^1}{x} + \derpar{s^2}{y} = 0 \, .
$$
This additional equation restricts the set of solutions \eqref{eqn:Example3_GenHamHJSolutions} to
the following:
$$
\{ s^1 = \bar{f}(y) , s^2 = f(x) \} \, .
$$
As in the Lagrangian Hamilton-Jacobi problem, $f,\bar{f} \in \Cinfty(E)$ are functions depending only
on the coordinate $x$ and $y$, respectively, and they may vanish. In addition, when $s^i = g(x,y)$, we
do not obtain a closed formula for $s^j$ in terms of the function $g$, but there may be functions
satisfying the arising partial differential equation. Recall that both $f,\bar{f}$ must satisfy
$f^2 + \bar{f}^2 \neq 1$.

Finally, we state the Hamilton-Jacobi equation in the Hamiltonian formalism. Since the section
$s \in \Gamma(\pi_{\R^3}^r)$ is a solution to the Hamiltonian Hamilton-Jacobi problem, the form
$h \circ s \in \df^{2}(E)$ is closed. Thus, there exists a $1$-form $\omega \in \df^{1}(E)$
given locally by $\omega = W^1 \d y - W^2 \d x$ such that $\d\omega = h \circ s$. In coordinates,
the form $h \circ s = s^*\Theta_h$ is given by
$$
h \circ s = s^1 \d u \wedge \d y - s^2 \d u \wedge \d x + \sqrt{1 - (s^1)^2 - (s^2)^2} \, \d x \wedge \d y \, .
$$
Thus, requiring $h \circ s = \d \omega$, we obtain
$$
\derpar{W^1}{x} + \derpar{W^2}{y} = \sqrt{1 - (s^1)^2 - (s^2)^2} \quad ; \quad
\derpar{W^1}{u} = s^1 \quad ; \quad \derpar{W^2}{u} = s^2 \, ,
$$
which may be combined to obtain equation \eqref{eqn:Example3_HJEquation}, that is,
$$
\derpar{W^1}{x} + \derpar{W^2}{y} = \sqrt{1 - \left(\derpar{W^1}{u}\right)^2 - \left(\derpar{W^2}{u}\right)^2} \, .
$$

\section{Conclusions and further research}
\label{sec:Conclusions}

Starting from the geometric Hamilton-Jacobi theory developed mainly in
\cite{art:Carinena_Gracia_Marmo_Martinez_Munoz_Roman06,HJteam-2015} for mechanical systems and using
the results given in \cite{LMM-09} as standpoint, we have stated a geometric framework for first-order
classical field theories described in the multisymplectic setting.

The theory has been developed for the Lagrangian and the Hamiltonian formalisms. In both cases, first
we have stated the so-called {\sl generalized Hamilton-Jacobi problem}, which is the most natural one
in this geometrical ambient, and hence we have defined from it the standard {\sl Hamilton-Jacobi
problem}. Particular solutions to these problems are defined and characterized in several equivalent
ways and, in particular, one of these characterizations for the standard case in the Hamiltonian
formalism, when written in natural coordinates, leads the classical Hamilton-Jacobi equation for
field theories. After that, the definition and geometric characterization of complete solutions
is also given and different features about them are discussed. Finally, the equivalence between the
Lagrangian and the Hamiltonian Hamilton-Jacobi problems is also proved.

It is important to point out that this generalization of the theory has been achieved using
distributions in the jet bundles and multimomentum bundles where the Lagrangian and Hamiltonian
formalisms of multisymplectic classical field theories are developed. These are integrable distributions
whose integral sections are the solutions to the Lagrangian and Hamiltonian field equations, and they
are represented, in general, by means of equivalence classes of multivector fields. This choice has
enabled us to give a construction of the Hamilton-Jacobi theory in a very natural way. Thus, our model
is different from that given by L. Vitagliano in \cite{Vi-10} for higher-order field theories, who
uses connections as the main geometrical tool and a unified formalism to describe the Lagrangian and
Hamiltonian formulations at once.

We have analyzed several examples. First, non-autonomous mechanical systems can be considered as a
special situation of field theories, and hence the Hamilton-Jacobi equation for these systems has been
recovered from our model as a particular case. Second, we have  applied our results to obtain the
Hamilton-Jacobi equation for field theories described by quadratic affine Lagrangians. Finally, we
have written this equation for a more particular example: minimal surfaces in dimension three.

As further research, we believe that our geometric framework for the Hamilton-Jacobi theory can be
extended to higher-order field theories using the formulations in
\cite{art:Campos_DeLeon_Martin_Vankerschaver09,art:Prieto_Roman14}, thus generalizing the results of
\cite{art:Colombo_DeLeon_Prieto_Roman14_JPA,CLPR2} for higher-order mechanics and giving a different
but equivalent perspective to that of \cite{Vi-10} for this kind of theories.

A very relevant application of the Hamilton-Jacobi theory for first-order field theories would be to
the Palatini approach of General Relativity and, once the extension to higher-order field theory is
made, also to the Einstein-Hilbert Lagrangian approach, as well as to other gravitational theories.
As a previous step, a suitable multisymplectic description of these gravitational models must be done
and, although there are some recent attempts to do it \cite{Vey2014}, more work in this way is
necessary and research in this way is in progress.

Another interesting question in the ambient of the Hamilton-Jacobi theory is the existence of
conserved quantities (or conservation laws), and the integrability of the system. It is known (see
\cite{art:Carinena_Gracia_Marmo_Martinez_Munoz_Roman06} for more details) that, in the case of
mechanics, for a dynamical system with $n$ degrees of freedom, the existence of complete solutions
to the generalized Hamilton-Jacobi problem is  associated with the local existence of families of $n$
functions which are constants of motion. From a geometrical perspective, a complete solution is a
foliation in the fiber bundle which represents the phase space of the system, which is transverse to
the fibers, and such that the dynamical vector field is tangent to the leaves of this foliation. Then,
these leaves are locally the level sets of the functions which are constants of motion. In addition,
a complete solution to the Hamilton-Jacobi problem corresponds to a Lagrangian foliation (with respect
to the symplectic structure, canonical or not, which the phase space is endowed with), and thus the
constants of motion are in involution and the system is completely integrable.

In the case of field theories, from a geometrical point of view, the situation is quite similar:
as it is defined in Sections \ref{sec:LagHJCompleteSol} and \ref{sec:HamHJCompleteSol}, complete
solutions to the generalized Hamilton-Jacobi problem endow the jet and multimomentum bundles with
foliations which are transverse to the fibers and such that their leaves contain the image of the
sections solution to the field equations. When we consider just the Hamilton-Jacobi problem, these
are $m$-Lagrangian foliations (with respect to the corresponding multisymplectic structures, in the
sense defined in \cite{art:Cantrijn_Ibort_DeLeon99}). Nevertheless, up to our knowledge, the notion
of ``integrability'' is not clearly stated in these cases. Furthermore, although the leaves of these
foliations can be also locally defined as level sets of families of functions, how to associate these
functions with conservation laws in field theories must be investigated and, even if the foliation is
$m$-Lagrangian, these functions are not said ``to be in involution'' because, although there are
several attempts to define unambiguously a Poisson bracket (for functions) in covariant field theories
(that is, in multisymplectic geometry), this problem is not solved in a completely satisfactory way.
The discussion on all these topics is also under research.

\subsection*{Acknowledgments}

We acknowledge the financial support of the \textsl{Ministerio de Ciencia e Innovaci\'{o}n} (Spain),
projects MTM2011-22585 and MTM2011-15725-E, \textsl{Ministerio de Econom\'{\i}a y Competividad} (Spain)
project MTM2013-42870-P, the European project IRSES-project ``Geomech-246981'', the ICMAT Severo Ochoa
project SEV-2011-0087, the AGAUR project 2009 SGR:1338, and the project E24/1 (Gobierno de Arag\'on).
P.D. Prieto-Mart\'{\i}nez wants to thank the UPC for a Ph.D grant.


{\small

}

\end{document}